\tikzset{
    ->,
    >=stealth',
    node distance=4cm,
    state/.style={draw, circle, thick, minimum size=1.2cm, align=center},
}
    \pgfplotsset{compat=newest}
\newcommand{\s}{\ensuremath{\sigma}}
\renewcommand{\S}{\ensuremath{\mathcal{S}}}
\renewcommand{\P}{\ensuremath{\mathcal{P}}}
\newcommand{\ax}{\textit{Ax}}
\newcommand{\No}{\ensuremath{\mathbb{N}^{+}}}
\newcommand{\wit}{\textit{wit}}
\newcommand{\overarrow}{\overrightarrow}
\newcommand{\wrt}{\textit{w.r.t.} }
\newcommand{\Exists}[1]{\exists\,#1.\:}
\newcommand{\Forall}[1]{\forall\,#1.\:}
\newcommand{\GT}[1]{{}}
\newcommand{\A}{\ensuremath{\mathcal{A}}}
\newcommand{\B}{\ensuremath{\mathcal{B}}}
\newcommand{\sA}{\ensuremath{\mathbb{A}}}
\newcommand{\sB}{\ensuremath{\mathbb{B}}}
\newcommand{\F}{\ensuremath{\mathcal{F}}}
\newcommand{\T}{\ensuremath{\mathcal{T}}}
\renewcommand{\S}{\ensuremath{\mathcal{S}}}
\renewcommand{\P}{\ensuremath{\mathcal{P}}}
\newcommand{\vars}{\textit{vars}}
\newcommand{\N}{\ensuremath{\mathbb{N}_{\omega}}}
\newcommand{\qf}[1]{\ensuremath{QF(#1)}}
\newcommand{\eq}[1]{\ensuremath{Eq(#1)}}
\newcommand{\adds}[1]{\ensuremath{(#1)^{2}}}
\newcommand{\eqp}[1]{\ensuremath{Eq^{\prime}(#1)}}
\newcommand{\spec}[1]{\ensuremath{\textit{Spec}(#1)}}
\newcommand{\Tf}{\ensuremath{\T\langle h\rangle}}
\newcommand{\Sp}{\ensuremath{\Sigma_{P}}}
\newcommand{\Spn}{\ensuremath{\Sigma_{P}^{n}}}
\newcommand{\SNT}{{\ensuremath{\Sigma_{\ast}}}}
\newcommand{\TNT}{{\ensuremath{{\T_{\ast}}}}}
\newcommand{\tnt}{{\ensuremath{{\ast}}}}
\newcommand{\sun}{\ensuremath{\s_{1}}}
\newcommand{\minmod}{\ensuremath{\textsf{MM}}}
\renewcommand{\S}{\ensuremath{\mathcal{S}}}
\renewcommand{\P}{\ensuremath{\mathcal{P}}}
\newcommand{\f}{\ensuremath{h}}
\newcommand{\stainf}{{Stably Infinite}}
\newcommand{\convex}{{Convex}}
\newcommand{\strongpolite}{\textbf{SP}}
\newcommand{\shiny}{\textbf{SH}}
\newcommand{\decidable}{\textbf{DEC}}
\newcommand{\algebraic}{\textbf{ALG}}
\newcommand{\addpolite}{\textbf{ADD}}
\newcommand{\onesortedsig}{One Sorted}
\newcommand{\xmark}{\ding{55}} 
\newcommand{\no}{{\color{red}\xmark}}
\newcommand{\yes}{{{\color{blue}{$\checkmark$}}}}
\renewcommand{\int}[2]{\mathcal{#1}/\mathcal{#2}}
\newcommand{\cycle}[1]{\ensuremath{cycle_{#1}}}
\newcommand{\graphof}[1]{{G_{{#1}}}}
\newcommand{\tuf}{{\T_{{\mathbf{EQ}}}}}
\newcommand{\tnn}{{\T_{2n}}}
\tikzset{
    rotated halfcircle/.style={%
        mark=halfcircle*,
        mark color=black,
        fill=red,
        every mark/.append style={rotate=#1}
    }
}
\newcolumntype{P}[1]{>{\centering\arraybackslash}p{#1}}
\Crefname{theorem}{Theorem}{Theorems}
\Crefname{corollary}{Corollary}{Corollaries}
\Crefname{example}{Example}{Examples}
\begin{document}

\title{Shininess, strong politeness, and unicorns}
\author{
Benjamin Przybocki\inst{1} \and
Guilherme V. Toledo\inst{2} \and
Yoni Zohar\inst{2}
}
\institute{
\begin{minipage}{0.4\textwidth}
\centering
\inst{1}University of Cambridge, UK
\end{minipage}
\begin{minipage}{0.4\textwidth}
\centering
\inst{2}Bar-Ilan University, Israel
\end{minipage}
}

\maketitle
\markboth{}{}

\begin{abstract}
Shininess and strong politeness are properties related to theory combination procedures. In a paper titled ``Many-sorted equivalence of shiny and strongly polite theories'', Casal and Rasga proved that for decidable theories,
these properties are equivalent. We refine their result by showing that:
(i) shiny theories are always decidable, and therefore strongly polite; 
and
(ii) there are (undecidable) strongly polite theories that are not shiny.
This line of research is tightly related to a recent series of papers that have sought to classify all the relations between theory combination properties. We finally complete this project, resolving all of the remaining problems that were previously left open.

\end{abstract}
\section{Introduction}\label{Introduction}
In 2005, 
the shiny combination method was introduced~\cite{shiny},
and was able to handle theories that were
 left out of
the Nelson--Oppen method~\cite{NelsonOppen}.
Unlike the Nelson--Oppen method, which 
requires both combined theories
to be stably infinite,
shiny combination 
requires a stronger property ({\em shininess}), but only from one of the theories.
This allowed for theories that are not stably infinite, like
the theory of bit-vectors~\cite{BarFT-SMTLIB}, to be combined
with other theories.

Shininess requires the ability to compute cardinalities
of minimal models, which is computationally expensive.
This was one of the reasons that led to the introduction of the polite combination
method~\cite{RanRinZar}, replacing 
the computation of cardinalities
by a computation of formulas called {\em witnesses}.
The resulting property is called {\em politeness}.
Later, in 2010, it was clarified that actually a stronger property
is required for polite combination, called {\em strong politeness}~\cite{JB10-LPAR}.
While the definitions of shininess and strong politeness 
are different, in 2018, Casal and Rasga proved that they are equivalent for decidable theories \cite{CasalRasga2}.

In this paper,
we investigate the equivalence between shininess and strong politeness, without assuming decidability.
Our main result is that shiny theories are always strongly polite, while the converse does not hold. 
For the former, we show that shiny theories are always decidable, and then strong politeness follows.
For the latter, we construct examples. Our examples are theories
that are non-trivial, and build on a graph-theoretical interpretations
of models.

We also study the relationship between strong politeness and additive politeness, a notion that
was introduced in \cite{DBLP:journals/jar/ShengZRLFB22}  to simplify 
strong politeness proofs, and was shown to imply strong politeness.
We prove that the converse does not hold in general, but it does hold in the absence of predicates (except equality).

Our results have several implications.
First, they provide a deeper understanding of shininess and strong politeness.
Second, 
since SMT solvers often deal with undecidable theories, 
combination methods for such theories can be useful
for those cases in which the underlying solvers of both theories
return a result.
Third, our results entail that there are theories
for which it is possible to compute witnesses but impossible
to compute minimal models.
This affirms the aforementioned motivation from \cite{RanRinZar} for the introduction of politeness, namely that the minimal model function is harder to compute in general than a witness.\footnote{The adoption of polite combination 
in cvc5~\cite{cvc5} provides an empirical affirmation.}

There is a completely different way to tell this story,
leading to a fourth implication.
The papers \cite{CADE,FroCoS,LPAR} 
analyzed connections between various properties, including (strong) politeness and shininess.
For almost every combination of properties, they either found
an example or proved that there are none.
For three particular combinations, however,
this was left open:
Unicorn 1.0,\footnote{In \cite{CADE,nounicorns}, these were simply called {\em Unicorn} theories. 
We rename them here to Unicorn $1.0$, in order to be consistent 
with the other types of unicorns from \cite{LPAR}.} Unicorn 2.0, and Unicorn 3.0 theories.
In \cite{nounicorns}, it was proved that Unicorn 1.0 theories do not exist.
Here, we prove that Unicorn 2.0 theories exist, while Unicorn 3.0 theories do not.
We also 
resolve a related question from
\cite{nounicorns} regarding 
 uncountable signatures.
This closes all questions regarding unicorn theories,
thus completing the project of analyzing the connections between theory combination properties.

The two narratives are inter-related:
the nonexistence of Unicorn 3.0 theories directly follows from 
the implication from shininess to strong politeness;
And every theory that is strongly polite but not shiny
is a Unicorn 2.0 theory.

\Cref{fig:outline}.a
shows the connections between 
shininess, strong politeness and additive politeness.
Blue connections are known, black are new.
\Cref{fig:outline}.b lists the results on unicorns,
referring each problem to the section where it is solved.

To summarize:
\Cref{sec:no-u-3} proves that all shiny theories are decidable (and therefore strongly polite),
and concludes that there are no Unicorn 3.0 theories.
\Cref{sec:uni2} constructs strongly polite theories that are not shiny.
All of them are Unicorn 2.0 theories. 
\Cref{sec:add} proves that strong politeness does not imply additive politeness, except over algebraic signatures. \Cref{sec-fm} solves a related problem from \cite{nounicorns}.
\Cref{sec:conclusion} concludes with directions for future work.

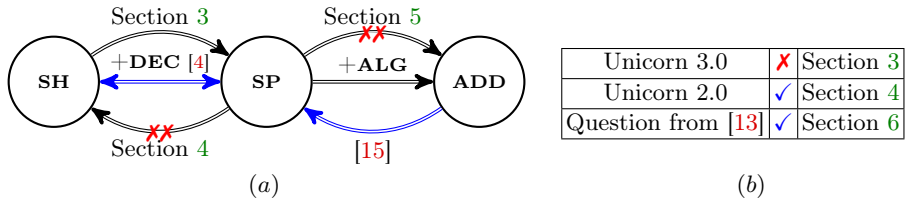
\begin{figure}[t]
\centering
\begin{minipage}[t]{0.55\textwidth}
    \vspace{0pt} 
    \begin{tikzpicture}[node distance=1.6cm]
        \node[state] (SH) {{\scriptsize $\shiny$}};
        \node[state] (SP) [right=of SH] {\scriptsize$\strongpolite$};
        \node[state] (ADD) [right=of SP] {\scriptsize$\addpolite$};

        \draw (SH) edge[bend left=35, above,double] node{\Cref{sec:no-u-3}} (SP);
        \draw (SP) edge[bend left=35, below,double] node{\Cref{sec:uni2}} node[inner sep=-3pt, fill=white]{\no\no} (SH)
        coordinate[midway] (negate); 
        \draw (SH) edge[double,<->, above,color=blue] node{{\color{black}+{\scriptsize\decidable~\cite{CasalRasga2}}}} (SP);
        \draw (SP) edge[bend left=35, above,double] node{\Cref{sec:add}} node[below,inner sep=-3pt, fill=white]{\no\no} (ADD);
        \draw (SP) edge[double, above] node{{\color{black}+{\scriptsize\algebraic}}} (ADD);
        \draw (ADD) edge[bend left=35, below,double,color=blue] node{{\color{black}\cite{DBLP:journals/jar/ShengZRLFB22}}} (SP);
    \end{tikzpicture}
\end{minipage}%
~~~~~~~~
\begin{minipage}[t]{0.45\textwidth}
    \vspace{2em} 
    \begin{tabular}{|c|c|c|}\hline
    Unicorn 3.0 & \no &  \Cref{sec:no-u-3} \\\hline
    Unicorn 2.0 & \yes &  \Cref{sec:uni2} \\\hline
    Question from \cite{nounicorns} & \yes &  \Cref{sec-fm} \\\hline
    \end{tabular}
\end{minipage}~\\
\hspace{3em} $(a)$
\hspace{18em} $(b)$
\caption{A summary of the contributions of this paper.}
\label{fig:outline}
\end{figure}

\section{Preliminaries}\label{Preliminaries}

In what follows, $\mathbb{N}$ denotes the set of non-negative integers, 
$\No=\mathbb{N}\setminus\{0\}$, $|X|$ is the cardinality of the set $X$, $\aleph_{0}=|\mathbb{N}|$, and $\N=\mathbb{N}\cup\{\aleph_{0}\}$.

\subsection{Many-sorted logic}

A \emph{signature} $\Sigma$ is a triple $(\S_{\Sigma},\F_{\Sigma},\P_{\Sigma})$ where $\S_{\Sigma}$ is a non-empty set (of sorts),
$\F_{\Sigma}$ is a set of function symbols, each equipped with an arity $\s_{1}\times\cdots\times\s_{n}\rightarrow\s$, for $\s_{1},\ldots,\s_{n},\s\in\S_{\Sigma}$,
and $\P_{\Sigma}$ is a set of predicate symbols, each with an arity $\s_{1}\times\cdots\times\s_{n}$, for $\s_{1},\ldots,\s_{n}\in\S_{\Sigma}$,
that includes
the equality symbol $=_{\s}$ of arity $\s\times\s$, for every $\s\in\S_{\Sigma}$, usually denoted simply as $=$.
A signature called \emph{empty} if it contains no function and predicate symbols
other than the equalities.
The cardinality of a signature is the cardinality of $\S_{\Sigma}\cup\F_{\Sigma}\cup\P_{\Sigma}$.

We define terms, formulas, literals, clauses (disjunctions of literals), and sentences in the usual way;
the set of free variables of sort $\s$ in $\varphi$ is denoted by $\vars_{\s}(\varphi)$, the set of free variables whose sort $\s$ lies in $S \subseteq \S_\Sigma$ is denoted by $\vars_{S}(\varphi)$, and the set of all of its variables is simply $\vars(\varphi)$.
If $s$ is a function symbol of arity $\s\rightarrow\s$ and $x$ a variable of sort $\s$, we define recursively the terms $s^{0}(x)=x$ and $s^{n+1}(x)=s(s^{n}(x))$.
A {\em Horn clause} is a formula of the form
$\Forall{x_1, \ldots, x_n} \varphi$
such that $\varphi$ is quantifier-free and has the form
$\ell_1\vee\dots\vee\ell_m$ for literals $\ell_1,\ldots,\ell_m$,
such that there is at most one $1\leq i\leq m$ such that
$\ell_i$ is an atomic formula (and all the other $\ell_i$s 
are negations of atomic formulas).

A $\Sigma$-\emph{structure} $\sA$ maps:
each $\s\in\sA$ to
a non-empty set $\s^{\sA}$, called the {\em domain of $\s$ in $\sA$};
each $f\in\F_{\Sigma}$ to
a function
$f^{\sA}: \s_{1}^{\sA}\times\cdots\times\s_{n}^{\sA}\rightarrow\s^{\sA}$, where $\s_{1}\times\cdots\times\s_{n}\rightarrow\s$ is the arity of $f$;
and 
each predicate $P\in\P_{\Sigma}$ to a subset
$P^{\sA}$ of $\s_{1}^{\sA}\times\cdots\times\s_{n}^{\sA}$, where $\s_{1}\times\cdots\times\s_{n}$ is the arity of $P$.
A $\Sigma$-\emph{interpretation} $\A$ is a $\Sigma$-structure further equipped with 
a mapping of variables $x$ to elements $x^{\A}$, such that if $x$ has sort $\sigma$, then $x^{\A}\in\s^{\A}$.
The value $\tau^{\A}$ of a term $\tau$ is defined as usual.
If $\Gamma$ is a set of terms, $\Gamma^{\A}=\{\tau^{\A} : \tau\in\Gamma\}$.
Satisfaction is defined as usual, and denoted $\vDash$.
Formulas we will make use of are given in \Cref{card-formulas}, and an interpretation $\A$ that satisfies $\psi^{\s}_{\geq n}$ (or $\neq^{\s}(x_{1},\ldots,x_{n})$), $\psi^{\s}_{\leq n}$ or $\psi^{\s}_{=n}$ has, respectively, at least, at most, or exactly $n$ elements in $\s^{\A}$.
When $\sA=\{\s\}$, we omit $\s$ from these formulas.

\begin{figure}[t]
\begin{mdframed}
\vspace{-4mm}
\begin{equation*}
\begin{aligned}
    \neq^{\s}(x_{1},\ldots,x_{n})=\bigwedge_{i=1}^{n-1}\bigwedge_{j=i+1}^{n}\neg(x_{i}=x_{j}) &  & \psi^{\s}_{\geq n}=\Exists{{x_1, \ldots, x_n}}\neq^{\s}(x_{1},\ldots,x_{n}) \\
\psi^{\s}_{\leq n}=\Exists{x_1,\ldots,x_n}\Forall{y}\bigvee_{i=1}^{n}y=x_{i} &  &
\psi^{\s}_{=n}=\psi^{\s}_{\geq n}\wedge\psi^{\s}_{\leq n}
\end{aligned}
\end{equation*}
\end{mdframed}\vspace{-4mm}
\caption{Cardinality formulas, all variables are of sort $\s$.}
\label{card-formulas}
\end{figure}

A $\Sigma$-\emph{theory} $\T$ is the class of all $\Sigma$-interpretations that satisfy a set of sentences $\ax(\T)$ called the axiomatization of $\T$. We call them {\em $\T$-interpretations}.
A formula satisfied by a $\T$-interpretation is said to be $\T$-\emph{satisfiable} (or simply satisfiable if $\T$ includes all $\Sigma$-interpretations), 
and if $\varphi$ is satisfied by all $\T$-interpretations it is said to be {\em $\T$-valid}, and we write $\vDash_{\T}\varphi$.
Two formulas are said to be $\T$-\emph{equivalent} if they are satisfied by precisely the same $\T$-interpretations.
A standard result that we will use is the L{\"o}wenheim--Skolem theorem.

\begin{theorem}[\cite{Monzano93}] \label{LowenheimSkolem}
    If  $\Sigma$ is countable and $\Gamma$ 
    is a satisfiable set of $\Sigma$-formulas, there is a $\Sigma$-interpretation $\B$ with $\B\vDash\Gamma$ and $|\s^{\B}|\leq \aleph_{0}$ for all $\s\in\S_{\Sigma}$.
\end{theorem}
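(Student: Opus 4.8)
The plan is to reduce to the classical downward Löwenheim--Skolem theorem by a Henkin-style term-model construction, carefully adapted to the many-sorted setting so that each sort ends up nonempty and at most countable.

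First I would eliminate the free variables: since the interpretation $\B$ we seek assigns values to variables, I treat each free variable occurring in some formula of $\Gamma$ as a fresh constant of its sort. With a countable supply of variables this adds at most countably many constants, so the expanded signature $\Sigma'$ is still countable and $\Gamma$ becomes a satisfiable set of $\Sigma'$-sentences $\Gamma'$. I would also add, for each of the (at most countably many) sorts $\s \in \S_\Sigma$, one fresh constant $c_\s$ of sort $\s$; this is the key many-sorted precaution, guaranteeing that every sort will be inhabited in the term model even if no original symbol produces elements of that sort.

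Next I would run the standard Henkin argument over $\Sigma'$. Because $\Gamma'$ is satisfiable it is consistent; I extend it to a maximal consistent set $\Delta$ enjoying the Henkin (witness) property, adding countably many witness constants---one per sort for each existential sentence---so that the final signature $\Sigma''$ remains countable. I then build the term model $\B''$ whose domain of sort $\s$ is the set of closed $\Sigma''$-terms of sort $\s$ modulo the congruence $\tau_1 \sim \tau_2 \iff (\tau_1 = \tau_2) \in \Delta$, with functions and predicates read off from $\Delta$; the sorted version of the truth (model-existence) lemma then gives $\B'' \vDash \Delta \supseteq \Gamma'$. Since $\Sigma''$ is countable there are at most $\aleph_0$ closed terms of each sort, so $|\s^{\B''}| \le \aleph_0$, and each sort is nonempty thanks to the added constants $c_\s$.

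Finally I would take the $\Sigma$-reduct of $\B''$, using the interpretations of the free-variable constants as the variable assignment, to obtain the desired $\Sigma$-interpretation $\B$ with $\B \vDash \Gamma$ and $|\s^{\B}| \le \aleph_0$ for all $\s$. The main obstacle is not any single hard step but rather the bookkeeping needed to port the single-sorted Henkin construction to many sorts: ensuring the witness constants respect sorts, that the congruence is sort-preserving, and---most importantly---that every sort is nonempty, which is automatic in the one-sorted case but must be forced here by the auxiliary constants $c_\s$. An alternative route, avoiding Henkin entirely, is to start from a model $\A \vDash \Gamma$ and take the Skolem hull of a countable seed containing the values of all free variables together with one witness per sort, verifying via the many-sorted Tarski--Vaught test that the resulting substructure is an elementary submodel; the same nonemptiness caveat is the only real subtlety.
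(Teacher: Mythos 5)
This statement is not proved in the paper at all: it is imported as a known result, cited to the reference on many-sorted completeness, so there is no in-paper argument to compare against. Your Henkin-style construction is a correct and standard proof of the many-sorted downward L\"owenheim--Skolem theorem, and you have flagged exactly the right many-sorted subtleties: replacing free variables by fresh constants (legitimate here because the paper's notion of interpretation carries a variable assignment), adding a constant $c_\s$ per sort so that the term model leaves no sort empty (the paper requires $\s^{\B}$ to be non-empty, and this is the one place the one-sorted argument genuinely breaks), keeping the witness constants and the equality congruence sort-respecting (harmless, since $=_\s$ only relates terms of the same sort), and observing that countably many sorts with countably many symbols and variables keep every expansion countable. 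The alternative Skolem-hull/Tarski--Vaught route you mention is equally valid and arguably shorter, since one starts from an existing model of $\Gamma$ and never needs a deductive calculus or a completeness theorem; the Henkin route buys completeness as a by-product but costs you the syntactic bookkeeping. Either version would serve as a proof of the cited theorem; no gap.
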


\subsection{Theory combination properties}
In what follows, $\Sigma$ is a signature
and $\T$ is a $\Sigma$-theory.
$\T$ is \emph{decidable} if 
the set $\{\varphi\in\qf{\Sigma}\mid\varphi\text{ is } \T\text{-satisfiable}\}$ 
is decidable, where $\qf{\Sigma}$ is the set of quantifier-free formulas over $\Sigma$.
$\T$ is \emph{convex} \cite{NelsonOppen} \wrt $S$ if $\vDash_{\T}\varphi\rightarrow\bigvee_{i=1}^{n}x_{i}=y_{i}$
implies
$\vDash_{\T}\varphi\rightarrow x_{i}=y_{i}$ for some $1\leq i\leq n$, where
$\varphi$ is a conjunction of literals and $x_{i}$ and $y_{i}$ have sorts in $S$.

$\T$ is \emph{stably infinite} \cite{OppenSI} (respectively, has the \emph{finite model property}) \wrt $S\subseteq\S_{\Sigma}$ if for every $\T$-satisfiable $\varphi\in\qf{\Sigma}$,
there is a $\T$-interpretation $\A$ with $\A\vDash\varphi$ such that for every $\s\in S$, $|\s^{\A}|\geq\aleph_{0}$ 
(respectively, $|\s^{\A}|<\aleph_{0}$).
$\T$ is \emph{smooth} \cite{RanRinZar} \wrt $S$ if for every quantifier-free formula $\varphi$, $\T$-interpretation $\A$ with $\A\vDash\varphi$, and function $\kappa$ from $S$ to the class of all cardinals such that $\kappa(\s)\geq |\s^{\A}|$ for every $\s\in S$, there is a $\T$-interpretation $\B$ with $\B\vDash\varphi$ and $|\s^{\B}|=\kappa(\s)$ for all $\s\in S$.
If we add the assumption that $\kappa(\s)<\aleph_{0}$
for every $\s\in\S_{\Sigma}$,
the resulting property is called \emph{finite smoothness} \cite{nounicorns}.
$\T$ is \emph{stably finite} \cite{RanRinZar}
if for every quantifier-free formula $\varphi$ and $\T$-interpretation $\A$ with $\A\vDash\varphi$
there is a $\T$-interpretation $\B$ with $\B\vDash\varphi$, and $|\s^{\B}|<\aleph_{0}$ and $|\s^{\B}|\leq |\s^{\A}|$ for all $\s\in S$.

Take a finite set of variables $V$ and an equivalence relation $E$ on $V$ such that, if $x$ and $y$ are of different sorts, then $x{\cancel{E}}y$ (where $\cancel{E}$ is the complement of $E$). 
We define the \emph{arrangement} induced by $E$ on $V$, denoted by $\delta_{V}^{E}$ or $\delta_{V}$ if explicitly mentioning $E$ is not necessary, as the conjunction of the literals $x=y$ if $xEy$, and $\neg(x=y)$ if $x\cancel{E}y$.

$\T$ is \emph{finitely witnessable} \cite{RanRinZar} \wrt $S$ if it has a {\em witness} $\wit:\qf{\Sigma}\rightarrow\qf{\Sigma}$, which is a computable function 
such that for any quantifier-free formula $\varphi$:
$\varphi$ and $\Exists{\overarrow{x}}\wit(\varphi)$ are $\T$-equivalent, where $\overarrow{x}=\vars(\wit(\varphi))\setminus\vars(\varphi)$;
and, if $\wit(\varphi)$ is $\T$-satisfiable, there is a $\T$-interpretation $\A$ with $\A\vDash\varphi$ and $\s^{\A}=\vars_{\s}(\wit(\varphi))^{\A}$ for each $\s\in S$.
$\T$ is \emph{strongly finitely witnessable} \cite{DBLP:conf/frocos/BarrettDS02} \wrt $S$ if it has a strong witness, which is a witness that satisfies, for every quantifier-free formula $\varphi$, finite set of variables $V$ whose sorts are in $S$, and arrangement $\delta_{V}$ in $V$, if $\wit(\varphi)\wedge\delta_{V}$ is $\T$-satisfiable, then there exists a $\T$-interpretation $\A$ with $\A\vDash\wit(\varphi)\wedge\delta_{V}$ and $\s^{\A}=\vars_{\s}(\wit(\varphi)\wedge\delta_{V})^{\A}$ for each $\s\in S$.
$\T$ is \emph{polite}, respectively \emph{strongly polite}, \wrt $S$ if it is smooth and finitely witnessable \wrt $S$, respectively smooth and strongly finitely witnessable.

A \emph{minimal model function} \cite{shiny,LPAR} \wrt $S\subseteq\S_{\Sigma}$ for $\T$ is a function that takes $\varphi\in\qf{\Sigma}$ and returns a set $\minmod_{\T}(\varphi)$ of functions $\textbf{n}$ from $S$ to the class of all cardinals such that, if $\varphi$ is $\T$-satisfiable:
for every $\textbf{n}$ in $\minmod_{\T}(\varphi)$, there exists a $\T$-interpretation $\A$ with $\A\vDash\varphi$ and $\textbf{n}(\s)=|\s^{\A}|$ for all $\s\in S$;
if $\textbf{m},\textbf{n}\in\minmod_{\T}(\varphi)$ and $\textbf{m}\neq\textbf{n}$, there exists $\s\in S$ such that $\textbf{m}(\s)<\textbf{n}(\s)$;
and for every $\T$-interpretation $\A$ that satisfies $\varphi$, there exists $\textbf{n}\in\minmod_{\T}(\varphi)$ such that $\textbf{n}(\s)\leq |\s^{\A}|$ for all $\s\in S$.
$\T$ is called \emph{shiny} \wrt $S$ if it is smooth, stably finite, and has a computable minimal model function, all \wrt $S$.

$\T$ is a {\em Unicorn $1.0$ theory} \wrt $S$ if it is stably infinite,
strongly finitely witnessable but not smooth, all \wrt $S$.
It is a {\em Unicorn $2.0$ theory} \wrt $S$ if it is
strongly finitely witnessable but does not have a computable minimal model function, both  \wrt $S$ \cite{LPAR}.
It is a {\em Unicorn $3.0$ theory} \wrt $S$ if it is
polite and shiny, but is not strongly polite, all \wrt $S$ \cite{LPAR}.

We have the following theorem:
\begin{theorem}[{\cite{harrison,DBLP:journals/jar/Tinelli03}}]
\label{thm:hornconvex}
If $\ax(\T)$ consists solely of Horn 
clauses, then $\T$ is convex.
\end{theorem}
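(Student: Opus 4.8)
The plan is to prove convexity through closure under finite direct products, the classical route for Horn theories. First I would set up the product: given $\T$-interpretations $\A_1,\ldots,\A_n$, form $\B=\prod_{i=1}^{n}\A_i$, whose domain of each sort $\s$ is the Cartesian product $\s^{\B}=\prod_{i=1}^{n}\s^{\A_i}$ (nonempty, since each factor is and the index set is finite), with every function and predicate symbol interpreted coordinatewise and each variable $x$ assigned the tuple $x^{\B}=(x^{\A_1},\ldots,x^{\A_n})$. The key lemma is that $\B$ is again a $\T$-interpretation, i.e., that every Horn clause of $\ax(\T)$ is preserved. Writing such a clause as $\Forall{\vec z}(\alpha_1\wedge\cdots\wedge\alpha_k\rightarrow\beta)$, with $\beta$ its unique positive literal (the case with no positive literal being analogous), the decisive observation is that a positive atom holds in $\B$ exactly when it holds in every factor; hence if the antecedent holds at some tuple in $\B$, then each $\alpha_\ell$ holds in each $\A_i$, each factor forces $\beta$ (as $\A_i$ is a $\T$-interpretation), and therefore $\beta$ holds in $\B$.

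Next I would record the companion fact that a conjunction of literals transfers from all factors to the product. Positive atoms do so by the observation just made, while a negated atom $\neg\alpha$ holds in $\B$ iff $\alpha$ fails in some factor, which certainly happens when $\alpha$ fails in every factor; so negated atoms transfer as well. With both lemmas in hand, convexity follows quickly. Suppose $\vDash_{\T}\varphi\rightarrow\bigvee_{i=1}^{n}x_{i}=y_{i}$ with $\varphi$ a conjunction of literals, and assume toward a contradiction that $\varphi\rightarrow x_{i}=y_{i}$ is not $\T$-valid for any $1\leq i\leq n$. For each $i$ choose a $\T$-interpretation $\A_i$ with $\A_i\vDash\varphi$ and $x_i^{\A_i}\neq y_i^{\A_i}$. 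Then $\B=\prod_{i=1}^{n}\A_i$ is a $\T$-interpretation satisfying $\varphi$, hence it satisfies $\bigvee_{i=1}^{n}x_{i}=y_{i}$; fix $j$ with $x_j^{\B}=y_j^{\B}$. Since equality in the product is tuple equality, this forces $x_j^{\A_j}=y_j^{\A_j}$, contradicting the choice of $\A_j$.

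The point demanding the most care is the preservation lemma, and in particular the asymmetry between the two kinds of literals: a positive atom holds in the product iff it holds in all factors, whereas a negated atom holds in the product iff it fails in some factor. The Horn restriction --- at most one positive literal per clause --- is exactly what keeps the implication alive in the product, since a clause with two positive literals would demand a disjunction of positive atoms in the product, which the coordinatewise semantics does not guarantee. The remainder is bookkeeping specific to the many-sorted setting: verifying that the coordinatewise interpretations are well-typed and that all product domains are nonempty, both of which hold because the products are finite. I would dispense with the degenerate empty-disjunction case by the usual convention that $n\geq 1$, noting that $n=0$ merely asserts $\T$-unsatisfiability of $\varphi$.
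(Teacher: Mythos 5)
Your proof is correct. The paper does not prove \cref{thm:hornconvex} at all---it imports the result from the cited references---and the argument you give (preservation of Horn clauses under finite direct products, then deriving convexity by taking the product of counterexample models $\A_1,\ldots,\A_n$ and reading off a contradiction from tuple equality) is exactly the classical proof found in those sources, with the key asymmetry between positive and negative literals in the product correctly identified and the many-sorted/non-emptiness bookkeeping handled.
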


\section{Shiny theories are always decidable (and strongly polite)}
\label{sec:no-u-3}

In \cite{CasalRasga2}, it was proven that for decidable 
theories, strong politeness and shininess are equivalent:

\begin{theorem}[{\cite[Theorem~3.10]{CasalRasga2}}]
\label{thm:casalrasga}
A decidable theory is strongly polite \wrt a finite set $S$ of sorts if and only if it is shiny \wrt $S$.
\end{theorem}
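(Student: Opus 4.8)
The plan is to treat the two implications separately, noting first that both \emph{shiny} and \emph{strongly polite} (w.r.t.\ $S$) include \emph{smoothness} w.r.t.\ $S$; hence it suffices to show, for a decidable $\T$, that \emph{strong finite witnessability} is equivalent to the conjunction of \emph{stable finiteness} and having a \emph{computable minimal model function}. Throughout, $S$ is finite, which I will use to guarantee termination of the searches below.

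For the direction strongly polite $\Rightarrow$ shiny, suppose $\T$ is smooth and strongly finitely witnessable, with strong witness $\wit$. I would first derive stable finiteness directly from the strong witness, without using decidability: given a $\T$-interpretation $\A\vDash\varphi$, extend the assignment of $\A$ to the fresh variables to get $\A'\vDash\wit(\varphi)$ (possible since $\varphi$ and $\Exists{\overarrow x}\wit(\varphi)$ are $\T$-equivalent), and let $\delta$ be the arrangement on $\vars_S(\wit(\varphi))$ that $\A'$ induces. Then $\wit(\varphi)\wedge\delta$ is $\T$-satisfiable, so the strong witness yields a $\B\vDash\varphi$ whose every sort-domain is the set of values of its variables; since $\delta$ pins down all equalities, $|\s^\B|$ equals the number of $\s$-classes of $\delta$, which is at most $|\s^{\A'}|=|\s^\A|$ and finite. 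To produce a computable minimal model function, I would exploit that $\vars_S(\wit(\varphi))$ is finite, so there are only finitely many arrangements $\delta$ on it; each $\wit(\varphi)\wedge\delta$ is quantifier-free, so decidability decides which are $\T$-satisfiable, and by the strong witness each satisfiable one is realized by a variable-covered model whose $\s$-cardinality is exactly the number of $\s$-classes of $\delta$. I would then output the pointwise-minimal class-count vectors among the satisfiable $\delta$. Correctness is where the argument concentrates: smoothness makes the set of cardinality vectors realized by models of $\varphi$ upward closed, the induced-arrangement construction above shows every realized vector dominates some enumerated class-count vector and conversely that every minimal realized vector is itself such a class-count vector, and finiteness of the resulting antichain (hence of $\minmod_\T(\varphi)$) follows from Dickson's lemma over the finite index set $S$.

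For the direction shiny $\Rightarrow$ strongly polite, suppose $\T$ is smooth, stably finite, and equipped with a computable minimal model function $\minmod_\T$. Stable finiteness forces every vector in $\minmod_\T(\varphi)$ to be finite, and Dickson's lemma again makes $\minmod_\T(\varphi)$ a finite antichain, so I can define a computable
\[
\wit(\varphi)=\bigvee_{\textbf{n}\in\minmod_\T(\varphi)}\Big(\varphi\wedge\bigwedge_{\s\in S}\neq^{\s}(x^{\s}_{1},\ldots,x^{\s}_{\textbf{n}(\s)})\Big),
\]
where the variables $x^{\s}_{i}$ are fresh and pairwise disjoint across disjuncts. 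The $\T$-equivalence of $\varphi$ and $\Exists{\overarrow x}\wit(\varphi)$ is immediate: each disjunct entails $\varphi$, and conversely the covering clause of $\minmod_\T$ guarantees that any model of $\varphi$ has at least $\textbf{n}(\s)$ elements for some $\textbf{n}\in\minmod_\T(\varphi)$, so the fresh variables can be interpreted as required.

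The main obstacle is verifying that this $\wit$ is a \emph{strong} witness, i.e.\ that it behaves correctly against an arbitrary arrangement $\delta_V$. Given that $\wit(\varphi)\wedge\delta_V$ is $\T$-satisfiable, I would fix a satisfied disjunct, indexed by some $\textbf{n}$, and build a $\T$-interpretation $\B\vDash\wit(\varphi)\wedge\delta_V$ in which each $\s^\B$ is exactly the set of values of the variables of $\wit(\varphi)\wedge\delta_V$. Here the block of fresh variables names a minimal model of $\varphi$, while the variables of $V$ name any further elements that $\delta_V$ forces to be distinct; the delicate point is that $\delta_V$ may both merge some named variables and separate others, so one must argue---using stable finiteness to descend to a finite model, smoothness to calibrate the cardinalities, and the minimality guaranteed by $\minmod_\T$---that the least cardinalities compatible with $\delta_V$ are still attained by a model all of whose elements are named. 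Once this covering claim is established for every $\delta_V$, strong finite witnessability follows, and combined with the given smoothness we conclude that $\T$ is strongly polite, completing the equivalence.
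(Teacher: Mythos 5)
This statement is imported from Casal and Rasga and the paper does not reprove it; the closest it comes is \Cref{DECandSFWimpliesCMMF}, which isolates the ``decidable $+$ strongly finitely witnessable $\Rightarrow$ computable minimal model function'' half of the left-to-right direction. Your argument for that direction is essentially identical to the paper's: derive stable finiteness from the strong witness via the induced arrangement (no decidability needed), then enumerate the finitely many arrangements of $\vars_S(\wit(\varphi))$, use decidability to keep the $\T$-satisfiable ones, read off the class-count vectors via the strong-witness models, and return the minimal ones. That half is correct (Dickson's lemma is not even needed there, since the set of arrangements is already finite).

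The converse direction has a genuine gap: the witness you define is not a strong witness in general. Your $\wit(\varphi)$ only carries $\mathbf{n}(\s)$ fresh pairwise-distinct variables for $\mathbf{n}\in\minmod_\T(\varphi)$, i.e.\ enough names for a minimal model of $\varphi$ \emph{itself}. But the arrangement $\delta_V$ may impose an arrangement $\delta$ on $\vars(\varphi)$ under which the minimal models of $\varphi\wedge\delta$ are far larger than those of $\varphi$, while simultaneously collapsing all your fresh variables onto variables of $\varphi$, so that almost nothing is named. Concretely, let $\T$ be the one-sorted theory with a single unary function $f$ axiomatized by $\Forall{x}\Forall{y}\,\bigl((f(x)=x\wedge f(y)=y\wedge \neg(x=y))\rightarrow\psi_{\geq 100}\bigr)$; one checks it is smooth, stably finite, and has a computable minimal model function, hence shiny. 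For $\varphi:=f(y)=y\wedge f(z)=z$ we have $\minmod_\T(\varphi)=\{1\}$ (take $y=z$), so your $\wit(\varphi)$ adds a single unconstrained fresh variable $x_1$. The arrangement $\delta_V$ with $y\neq z$ and $x_1=y$ makes $\wit(\varphi)\wedge\delta_V$ $\T$-satisfiable (in a $100$-element model), yet it names only two elements, while every $\T$-model of $f(y)=y\wedge f(z)=z\wedge y\neq z$ has at least $100$ elements; so no model of $\wit(\varphi)\wedge\delta_V$ has all of its elements named, and the strong-witness condition fails. The ``covering claim'' you defer to stable finiteness, smoothness, and minimality is therefore false for your $\wit$; the construction itself must change. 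The repair (which is essentially Casal and Rasga's actual Proposition~2, invoked by the paper in \Cref{thm-almost-weaker}'s proof) is to disjoin over all arrangements $\delta$ of $\vars_S(\varphi)$ and, in the $\delta$-disjunct, conjoin $\delta$ together with $\mathbf{n}(\s)$ fresh distinct variables for $\mathbf{n}\in\minmod_\T(\varphi\wedge\delta)$; then any ambient arrangement consistent with a surviving disjunct is forced to name at least as many classes as a minimal model of $\varphi\wedge\delta$, and smoothness lets you realize exactly the named cardinalities. Finiteness of $S$ and computability of $\minmod_\T$ keep this witness computable.
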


In this section, we show that the right-to-left direction holds without assuming decidability.
Moreover, the reason for this is that shiny theories are in fact always decidable. 

Now, it is important to clarify that the definition
of shininess in \cite{CasalRasga2} is slightly different from ours (which comes from \cite{LPAR}):
in~\cite{CasalRasga2}, the minimal model function of a $\Sigma$-theory $\T$ is only defined 
over $\T$-satisfiable $\Sigma$-formulas. 
With this definition, we cannot ask whether the minimal model function is computable for undecidable theories, since its domain is an undecidable set. 
In \cite{LPAR}, a more general definition was found, that would apply also to undecidable theories. 
Thus, the domain of the minimal model function is all quantifier-free formulas, although the function can return anything for unsatisfiable formulas. 
This is natural: without knowing that a formula is $\T$-satisfiable, we may still be able to determine the size of its minimal model in case it is $\T$-satisfiable. However, this is not the only way one could have generalized the definition to undecidable theories; one could have instead allowed the minimal model function to be \emph{partial}, so that the algorithm computing it may not terminate for unsatisfiable formulas. That said, we follow \cite{LPAR} in requiring the minimal model function to be total, an assumption we make essential use of.

With our definition, it makes sense to ask whether an undecidable theory is shiny. We show that the answer is always no: every shiny theory is decidable. In fact, we show something stronger,
namely, that smoothness does not need to be assumed, but can be replaced by
stable infiniteness.

\begin{restatable}{theorem}{sisfcmdec} \label{thm-shiny-imp-decidable}
    If a theory $\T$ is stably infinite, stably finite, and has a computable minimal model function, all with respect to a non-empty set of sorts $S$, then $\T$ is decidable.
\end{restatable}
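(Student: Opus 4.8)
The plan is to reduce $\T$-satisfiability of a quantifier-free $\varphi$ to finitely many evaluations of the computable minimal model function $\minmod_{\T}$, exploiting the fact that on \emph{satisfiable} inputs $\minmod_{\T}$ is completely pinned down. First I would record the key lemma: if $\psi$ is $\T$-satisfiable, then the three defining conditions force $\minmod_{\T}(\psi)$ to equal the set of $\leq$-minimal elements (in the pointwise order) of the achievable tuples $\{(|\s^{\A}|)_{\s\in S} : \A\vDash\psi,\ \A\text{ a }\T\text{-model}\}$. Since this set is determined by the class of $\T$-models of $\psi$, any two $\T$-equivalent satisfiable formulas receive the same value under $\minmod_{\T}$. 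That this minimal antichain is finite follows from a Dickson-type argument, and that every tuple in it is \emph{finite} follows from stable finiteness: any infinite coordinate could be shrunk to a strictly smaller finite model, contradicting minimality.

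The central difficulty is that on \emph{unsatisfiable} inputs $\minmod_{\T}$ may return arbitrary garbage, so its output alone cannot certify unsatisfiability. To get around this I would introduce the quantifier-free ``padding'' formulas $\rho_k = \bigwedge_{\s\in S}\neq^{\s}(x^{\s}_1,\ldots,x^{\s}_k)$, which force at least $k$ elements in every sort of $S$. Assuming $\T$ is nontrivial, applying stable infiniteness to $x=x$ yields a single $\T$-model infinite on all of $S$, which satisfies every $\rho_k$; hence each $\rho_k$ is $\T$-satisfiable and every tuple of $\minmod_{\T}(\rho_k)$ has all coordinates $\geq k$. The decision procedure then runs as follows on input $\varphi$: compute $M=\minmod_{\T}(\varphi)$; if $M$ has no finite tuple, output UNSAT (a satisfiable $\varphi$ would, by stable finiteness and the third condition, force a finite tuple into $M$); otherwise let $c$ be the largest coordinate occurring in a finite tuple of $M$, and compare $\minmod_{\T}(\varphi\vee\rho_{c+1})$ with $\minmod_{\T}(\rho_{c+1})$, outputting SAT if they differ and UNSAT if they agree.

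Correctness follows from the lemma in two directions. If $\varphi$ is satisfiable, then $M$ is correct and some $\T$-model realizes a minimal tuple with all $S$-coordinates $\leq c$; by the third condition $\minmod_{\T}(\varphi\vee\rho_{c+1})$ then contains a tuple with all coordinates $\leq c<c+1$, whereas every tuple of $\minmod_{\T}(\rho_{c+1})$ has all coordinates $\geq c+1$, so the two sets differ and we correctly answer SAT. If instead $\varphi$ is unsatisfiable while $\T$ is nontrivial, then $\varphi\vee\rho_{c+1}$ and $\rho_{c+1}$ have exactly the same $\T$-models, so by the lemma $\minmod_{\T}(\varphi\vee\rho_{c+1})=\minmod_{\T}(\rho_{c+1})$ and we correctly answer UNSAT; note this holds for \emph{any} finite $c$, so the possibly bogus bound extracted from a garbage $M$ does no harm.

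Finally I would dispatch the nontriviality assumption by a case split at the level of the theorem rather than the algorithm: if $\T$ has no models at all, then every quantifier-free formula is $\T$-unsatisfiable and $\T$ is decided by the constant-UNSAT procedure; if $\T$ has a model, the procedure above decides it. Either way a decider exists, so $\T$ is decidable. The step I expect to be most delicate is precisely isolating why the garbage outputs are harmless --- the comparison of $\minmod_{\T}(\varphi\vee\rho_{c+1})$ against $\minmod_{\T}(\rho_{c+1})$ is meaningful only because $\rho_{c+1}$, and hence $\varphi\vee\rho_{c+1}$ in the unsatisfiable case, is guaranteed satisfiable, which is exactly where nontriviality and stable infiniteness enter.
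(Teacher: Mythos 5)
Your proposal is correct and is essentially the paper's argument: both detect satisfiability of $\varphi$ by disjoining it with a disequality formula whose models are forced (via stable infiniteness) to be large, and then asking whether the computable minimal model function still reports the small models of $\varphi$; in both cases the possibly-garbage output on unsatisfiable inputs is neutralized because the disjunction is guaranteed satisfiable, where the function's value is pinned down. Your key lemma --- that the three defining clauses force $\minmod_{\T}$ to return exactly the set of pointwise-minimal achievable cardinality tuples on satisfiable inputs --- is correct and is implicitly what the paper relies on as well. The one place your write-up outruns the stated hypotheses is the padding formula $\rho_k=\bigwedge_{\s\in S}\neq^{\s}(x^{\s}_1,\ldots,x^{\s}_k)$ and the constant $c$ taken as the largest coordinate of a finite tuple of $M$: the theorem only assumes $S$ non-empty, so if $S$ is infinite, $\rho_k$ is not a quantifier-free formula and $c$ need not exist. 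The paper sidesteps this by first projecting to a single sort $\s\in S$, setting $\minmod'_{\T}(\varphi)=\min\{\mathbf{n}(\s)\mid\mathbf{n}\in\minmod_{\T}(\varphi)\}$, and then testing whether $\minmod'_{\T}(\varphi\vee\neq^{\s}(x_1,\ldots,x_{k+1}))=k$ for $k=\minmod'_{\T}(\varphi)$; your argument goes through verbatim after the same one-sort reduction, and that reduction also simplifies the bookkeeping (a single number $k$ rather than a comparison of two sets of tuples).
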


Note that all the assumptions of the theorem are necessary:
decidability does not follow from stable infiniteness and stable finiteness alone, and also not from either of them combined with the computability of the minimal model function.
For more details on this, see \Cref{rem:tighteq} of \Cref{sec:completepic} below, where concrete examples are given.

\Cref{thm-shiny-imp-decidable} implies that shiny theories are decidable, since smoothness is a stronger property than stable infiniteness. 
As a corollary, we see that shininess implies strong politeness:
\begin{corollary} \label{thm-unicorn3}
    If a theory is shiny with respect to a finite set of sorts $S$, then it is strongly polite with respect to $S$.
\end{corollary}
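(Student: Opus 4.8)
The plan is to derive the corollary directly by chaining \Cref{thm-shiny-imp-decidable} with the Casal--Rasga equivalence (\Cref{thm:casalrasga}). Suppose $\T$ is shiny with respect to the finite set $S$. By the definition of shininess, $\T$ is smooth, stably finite, and has a computable minimal model function, all with respect to $S$. The only genuinely new ingredient is the observation that smoothness entails stable infiniteness with respect to $S$, which upgrades the ``smooth'' hypothesis of shininess into the ``stably infinite'' hypothesis required by \Cref{thm-shiny-imp-decidable}. Everything else is bookkeeping about the side conditions (finiteness and non-emptiness of $S$) imposed by the two invoked results.

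For the smoothness-implies-stable-infiniteness step, I would argue as follows. Let $\varphi$ be a $\T$-satisfiable quantifier-free formula, witnessed by some $\T$-interpretation $\A$ with $\A \vDash \varphi$. Define $\kappa \colon S \to$ cardinals by $\kappa(\s) = \max\{\aleph_{0}, |\s^{\A}|\}$. Then $\kappa(\s) \geq |\s^{\A}|$ for every $\s \in S$, so smoothness with respect to $S$ yields a $\T$-interpretation $\B$ with $\B \vDash \varphi$ and $|\s^{\B}| = \kappa(\s) \geq \aleph_{0}$ for all $\s \in S$. This is exactly the defining condition of stable infiniteness with respect to $S$, so $\T$ is stably infinite with respect to $S$.

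With this in hand, assuming $S \neq \emptyset$, the theory $\T$ simultaneously satisfies all three hypotheses of \Cref{thm-shiny-imp-decidable} with respect to $S$ --- stable infiniteness, stable finiteness, and computability of the minimal model function --- so $\T$ is decidable. Since $\T$ is now both decidable and shiny with respect to the finite set $S$, the right-to-left direction of \Cref{thm:casalrasga} applies and gives that $\T$ is strongly polite with respect to $S$, as desired. The degenerate case $S = \emptyset$ can be dispatched separately: with the identity witness function the strong-witnessability conditions become vacuous and smoothness with respect to $\emptyset$ is trivial, so every theory is strongly polite with respect to $\emptyset$.

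I do not expect a serious obstacle here, since the corollary is essentially a composition of two already-established theorems; the only points requiring care are the two small verifications just described, namely the smoothness-to-stable-infiniteness inflation and the matching of the side conditions ($S$ finite for \Cref{thm:casalrasga}, $S$ non-empty for \Cref{thm-shiny-imp-decidable}). The first is a one-line application of the smoothness definition with a carefully chosen target cardinality function $\kappa$, and the second is handled by treating empty $S$ as a trivial special case.
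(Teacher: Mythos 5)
Your proposal is correct and follows exactly the paper's route: the paper likewise derives the corollary by noting that smoothness is stronger than stable infiniteness, applying \Cref{thm-shiny-imp-decidable} to conclude decidability, and then invoking the right-to-left direction of \Cref{thm:casalrasga}. Your explicit verification of the smoothness-to-stable-infiniteness step via $\kappa(\s)=\max\{\aleph_0,|\s^{\A}|\}$ and your handling of the degenerate case $S=\emptyset$ are details the paper leaves implicit, but they are accurate and do not change the argument.
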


The fact that
shininess implies politeness solves a problem that was left open
in \cite{LPAR}.
In that paper, it was left undetermined whether there exist
Unicorn 3.0 theories, that is, theories that are polite and shiny,
but not strongly polite.
In particular, such theories, if exist, must be shiny
without being strongly polite.
But \Cref{thm-unicorn3} tells us that such theories do not exist.

\begin{corollary}
\label{cor:nounicorns3}
There are no Unicorn 3.0 theories.
\end{corollary}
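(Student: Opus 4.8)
The plan is to derive this as an immediate consequence of \Cref{thm-unicorn3}, so the only real work is to unwind the definition of a Unicorn 3.0 theory and exhibit the contradiction. Recall that a Unicorn 3.0 theory with respect to a (finite) set of sorts $S$ is, by definition, a theory that is polite and shiny with respect to $S$ but \emph{not} strongly polite with respect to $S$. I would argue by contradiction: suppose some theory $\T$ were such a theory.

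Since $\T$ is a Unicorn 3.0 theory, it is in particular shiny with respect to $S$. By \Cref{thm-unicorn3}, every theory that is shiny with respect to a finite set of sorts is strongly polite with respect to that set. Hence $\T$ is strongly polite with respect to $S$. But this directly contradicts the defining requirement that a Unicorn 3.0 theory is \emph{not} strongly polite with respect to $S$. Therefore no Unicorn 3.0 theory can exist, which is exactly the statement to be proved.

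For completeness I would recall the chain of reasoning that powers \Cref{thm-unicorn3}, since the corollary rests entirely on it: shininess unpacks into smoothness, stable finiteness, and a computable minimal model function; smoothness is strictly stronger than stable infiniteness, so the hypotheses of \Cref{thm-shiny-imp-decidable} are satisfied and the theory is decidable; and a decidable theory is strongly polite precisely when it is shiny, by the Casal--Rasga equivalence \Cref{thm:casalrasga}. Thus \Cref{thm-unicorn3} is the contrapositive tool we need, and the corollary adds nothing beyond pitting ``shiny implies strongly polite'' against the ``not strongly polite'' clause built into the Unicorn 3.0 definition.

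The only point demanding care --- and the closest thing to an obstacle --- is the hypothesis that $S$ is finite, which is required both by \Cref{thm-unicorn3} and by the underlying \Cref{thm:casalrasga}. I would therefore confirm that the Unicorn 3.0 notion from \cite{LPAR} is stated with respect to a finite $S$ (as are the strong politeness and shininess notions used throughout the combination results), so that \Cref{thm-unicorn3} applies verbatim; with that observation in place, the contradiction above is immediate and the argument is complete.
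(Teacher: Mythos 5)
Your proof is correct and matches the paper's own argument exactly: the paper likewise observes that a Unicorn 3.0 theory would have to be shiny without being strongly polite, contradicting \Cref{thm-unicorn3}. Your extra care about the finiteness of $S$ is a reasonable sanity check (and in fact harmless, since a consistent polite theory can only have finitely many sorts in $S$ by a witness-of-$\top$ argument, and an inconsistent theory is vacuously strongly polite), but it does not change the route.
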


\section{Some strongly polite theories are not shiny}\label{sec:uni2}

In this section, we study the following question, which is the converse
of the question from \Cref{sec:no-u-3}.

\begin{description}
    \item[$(\ast)$] Does strong politeness imply shininess?
\end{description}

By \Cref{thm:casalrasga,thm-shiny-imp-decidable}, this is equivalent to asking whether strong politeness implies decidability. Similarly to \Cref{sec:no-u-3}, this is also related to a question
left open in \cite{LPAR}:
the existence
of strongly finitely witnessable theories 
that do not have a
computable minimal model function, a.k.a.
Unicorn 2.0 theories.
If such theories do not exist,
then the answer to $(*)$ would be positive.
In contrast, if such theories exist, this is still not enough
to provide a negative answer:
Unicorn 2.0 theories are definitely not shiny, since they do not
have a computable minimal model function.
But, they are only required to be strongly finitely witnessable,
and not necessarily smooth.
Thus, if a Unicorn 2.0 theory is found that is not smooth,
this does not help us with determining the answer for $(*)$.
In fact, thanks to \cite[Theorem~2]{nounicorns},
we know that every strongly finitely witnessable theory
that is also stably infinite is strongly polite (over countable signatures).
Thus, a negative answer to $(\ast)$ can be obtained
by finding a Unicorn 2.0 theory that is also stably infinite.

As part of our strategy to resolve $(\ast)$,
we make a detour into the land of Unicorn 2.0 theories. One of the goals of \cite{LPAR} was
to determine the feasibility of all Boolean combinations of model-theoretic properties 
studied there. 
After \Cref{sec:no-u-3}, we are closer to the end of that project, as it determined
the feasibility of Unicorn 3.0 theories.
To fully complete that project, we need to determine the feasibility of 10 different kinds of Unicorn 2.0 theories. Two of them, over empty signatures, will be shown to be impossible in \Cref{sec:noemptyu2}. The remaining eight cases, all over non-empty signatures, are in fact possible, which we demonstrate in \Cref{sec:nonemptyunicorn2.0}. Specifically, we prove that in non-empty signatures there are Unicorn 2.0 theories
with all the possible combinations
of the following properties:
$(1)$~being defined over a one-sorted (or many-sorted signature);
$(2)$~being stably infinite (or not stably infinite); and
$(3)$~being convex (or not convex).
In total, we present $8$ Unicorn 2.0 theories. 
Since they include stably infinite ones, we obtain in \Cref{sec:backtoorig}
a positive answer to $(\ast)$.
We conclude this section by completing the picture
of strong politeness, shininess, and decidability,
by showing which combinations of these properties are possible
in \Cref{sec:completepic}.

\subsection{Unicorns $2.0$ over empty signatures}
\label{sec:noemptyu2}

We prove that there are no Unicorn 2.0 theories over empty signatures
with finitely many sorts.

We start by proving that all theories that are based on empty signatures
with finitely many sorts are decidable. 

\begin{restatable}{lemma}{empsigimpdec}\label{empsigimpDEC}
If $\Sigma$ is an empty signature with finitely many sorts and $\T$ is a $\Sigma$-theory, then $\T$ is decidable.
\end{restatable}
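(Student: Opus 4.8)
The plan is to exploit the fact that over an empty signature the only data distinguishing $\Sigma$-structures is the tuple of cardinalities of their domains. Since there are no function symbols (in particular no constants), every term is a variable, and every atom of a quantifier-free formula is an equality $x=y$ between two variables of the same sort. Hence the truth value of a quantifier-free $\varphi$ under an interpretation $\A$ depends only on which of its free variables receive equal values, that is, on an \emph{arrangement} of $\vars(\varphi)$. Writing $\S_{\Sigma}=\{\s_{1},\dots,\s_{k}\}$, I would first establish that $\varphi$ is $\T$-satisfiable if and only if there is a sort-respecting equivalence relation $E$ on $\vars(\varphi)$ under which (identifying exactly the $E$-related variables) $\varphi$ evaluates to true, together with a $\T$-model $\sA$ satisfying $|\s_{i}^{\sA}|\geq n_{i}(E)$ for every $i$, where $n_{i}(E)$ is the number of $E$-classes of sort $\s_{i}$. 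The forward direction takes $E$ to be the relation induced by an actual satisfying assignment; the backward direction realizes the $E$-classes of each sort by distinct elements, sort by sort. Since the lower bounds $n_{i}(E)$ are finite, \Cref{LowenheimSkolem} lets me take the witnessing model countable.

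The second step packages the dependence on $\T$ into a single set. Let $D\subseteq\mathbb{N}^{k}$ be the set of tuples $\mathbf{n}=(n_{1},\dots,n_{k})$ such that $\T$ has a model $\sA$ with $|\s_{i}^{\sA}|\geq n_{i}$ for all $i$. For a quantifier-free $\varphi$, let $N(\varphi)\subseteq\mathbb{N}^{k}$ be the finite set of class-count tuples $(n_{1}(E),\dots,n_{k}(E))$ ranging over the finitely many sort-respecting equivalence relations $E$ on $\vars(\varphi)$ under which $\varphi$ evaluates to true. By the previous paragraph, $\varphi$ is $\T$-satisfiable iff $N(\varphi)\cap D\neq\emptyset$. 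The map $\varphi\mapsto N(\varphi)$ is plainly computable: one enumerates the finitely many equivalence relations on the finite set $\vars(\varphi)$, evaluates $\varphi$ on each, and counts classes per sort.

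The crux is that $D$ is \emph{downward closed} in $\mathbb{N}^{k}$: weakening a required lower bound cannot destroy a model, so $\mathbf{n}\in D$ and $\mathbf{n}'\leq\mathbf{n}$ (componentwise) imply $\mathbf{n}'\in D$. By Dickson's lemma its complement, being upward closed, has finitely many minimal elements $\mathbf{m}_{1},\dots,\mathbf{m}_{t}$, so that $\mathbf{n}\in D$ iff $\mathbf{n}\not\geq\mathbf{m}_{j}$ for every $j$. Thus $D$ is a decidable subset of $\mathbb{N}^{k}$, and composing this decision with the computable map $\varphi\mapsto N(\varphi)$ decides $\T$-satisfiability of quantifier-free formulas, which is exactly decidability of $\T$.

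I expect the main conceptual obstacle to be the very point that makes the statement surprising: $\ax(\T)$ is an arbitrary, possibly non-recursive, set of sentences, so neither $D$ nor the basis $\mathbf{m}_{1},\dots,\mathbf{m}_{t}$ can be computed from $\T$. The resolution is that decidability asserts only the \emph{existence} of a decision procedure: the finite list $\mathbf{m}_{1},\dots,\mathbf{m}_{t}$ is a fixed finite amount of information determined by $\T$, and a procedure that hard-codes it decides $\T$-satisfiability even though we cannot extract it effectively from the axiomatization. I would also take care to treat infinite domains uniformly (a sort of size $\aleph_{0}$ dominates every finite bound) and to confirm that restricting $D$ to countable models via \Cref{LowenheimSkolem} leaves it unchanged, since all the bounds involved are finite.
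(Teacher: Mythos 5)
Your proposal is correct and follows essentially the same route as the paper: reduce $\T$-satisfiability of a quantifier-free $\varphi$ to the existence of a sort-respecting arrangement making $\varphi$ true whose per-sort class counts are realizable in some $\T$-model, apply Dickson's lemma to reduce the realizability test to comparison against a fixed finite list of tuples, and hard-code that (non-effectively obtained) list into the decision procedure. The only difference is cosmetic: the paper takes the finitely many \emph{maximal} elements of the spectrum $\spec{\T}\subseteq\N^{n}$ (proving a version of Dickson's lemma for $\N^{n}$), while you take the finitely many \emph{minimal} elements of the complement of the downward-closed set $D\subseteq\mathbb{N}^{k}$ of dominated lower bounds, which lets you use the standard Dickson's lemma directly.
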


\begin{proof}[Proof sketch]
    Let $\{\s_{1},\ldots,\s_{n}\}$ be the sorts of $\Sigma$.
    Using a result very similar to Dickson's lemma \cite{Dickson}, which states a subset of $\mathbb{N}^{n}$ has only finitely many minimal elements, we obtain $\T$ has only finitely many  interpretations $\A$ such that $(|\s_{1}^{\A}|,\ldots,|\s_{n}^{\A}|)$ is maximal.
    Because the signature is empty, to check whether $\varphi$ is $\T$-satisfiable it is then enough to test whether $\varphi$ has an interpretation $\B$ in equational logic such that $|\s_{i}^{\B}|\leq |\s_{i}^{\A}|$ for all $1\leq i\leq n$.
\end{proof}

Now, let us go back to \Cref{thm:casalrasga}.
The proof of the left-to-right direction
assumes decidability, strong finite witnessability and smoothness,
and proves stable finiteness and the computability of the minimal model
function.
A closer look at the proof reveals that when proving computability of the minimal model function, smoothness is not used.
We can therefore obtain the following lemma, that was nevertheless
not mentioned explicitly in \cite{CasalRasga2}:

\begin{restatable}{lemma}{decandsfwimpliescmmf}\label{DECandSFWimpliesCMMF}
    If $\T$ is decidable and strongly finitely witnessable with respect to $S$, then it has a computable minimal model function with respect to $S$.
\end{restatable}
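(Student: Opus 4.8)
The plan is to compute the minimal model function by using the strong witness to turn the question ``which sort-cardinalities are attainable?'' into a finite search over arrangements, and then using decidability to carry out that search effectively. Given an input $\varphi\in\qf{\Sigma}$, I would first invoke decidability to test whether $\varphi$ is $\T$-satisfiable; if it is not, the function may return anything (say $\emptyset$), so this case is trivial. If $\varphi$ is $\T$-satisfiable, I compute $\wit(\varphi)$ (possible since $\wit$ is computable) and set $V=\vars_{S}(\wit(\varphi))$, a finite set of variables whose sorts all lie in $S$.

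Next I would enumerate the finitely many arrangements $\delta_{V}$ on $V$, and for each one use decidability to check whether the quantifier-free formula $\wit(\varphi)\wedge\delta_{V}$ is $\T$-satisfiable. For each satisfiable arrangement, induced by an equivalence relation $E$ on $V$, I record the vector $\textbf{n}_{\delta}\colon S\to\mathbb{N}$ sending each $\s\in S$ to the number of $E$-classes among the sort-$\s$ variables of $V$. Let $C$ be the resulting finite set of vectors. The algorithm returns the componentwise-minimal elements of $C$ (where $\textbf{m}\leq\textbf{n}$ iff $\textbf{m}(\s)\leq\textbf{n}(\s)$ for all $\s\in S$). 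All of this terminates, so the function is computable.

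For correctness I would verify the three defining properties of a minimal model function. \emph{Realizability}: for each satisfiable $\delta_{V}$, strong finite witnessability yields a $\T$-interpretation $\A\vDash\wit(\varphi)\wedge\delta_{V}$ with $\s^{\A}=\vars_{\s}(\wit(\varphi)\wedge\delta_{V})^{\A}$ for each $\s\in S$; since $V$ already contains every sort-$S$ variable of $\wit(\varphi)$, this forces $|\s^{\A}|=\textbf{n}_{\delta}(\s)$, and $\A\vDash\wit(\varphi)$ gives $\A\vDash\varphi$ through the $\T$-equivalence of $\varphi$ and $\Exists{\overarrow{x}}\wit(\varphi)$. The \emph{antichain} condition is immediate from selecting the minimal elements of a finite poset. \emph{Floor property}: given any $\A\vDash\varphi$, the same $\T$-equivalence lets me extend $\A$ to an interpretation $\A'\vDash\wit(\varphi)$ with the same sort domains; the arrangement $\delta$ that $\A'$ induces on $V$ is then $\T$-satisfiable, so $\textbf{n}_{\delta}\in C$, and $\textbf{n}_{\delta}(\s)\leq|\s^{\A'}|=|\s^{\A}|$; hence $\A$ is dominated from below by $\textbf{n}_{\delta}$, and therefore by some minimal element of $C$.

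The main obstacle I anticipate is matching the \emph{exact}-cardinality requirement built into the definition of the minimal model function, rather than merely an inequality. The crucial move is choosing $V=\vars_{S}(\wit(\varphi))$ so that $\vars_{\s}(\wit(\varphi)\wedge\delta_{V})=\vars_{\s}(\wit(\varphi))$ for $\s\in S$; this is exactly what converts the strong witness's guarantee $\s^{\A}=\vars_{\s}(\cdot)^{\A}$ into the equality $|\s^{\A}|=\textbf{n}_{\delta}(\s)$. The other delicate point is checking, in the floor argument, that passing from $\A$ to $\A'$ leaves the sort domains unchanged, so the induced arrangement genuinely certifies domination. Beyond these two observations, the argument is routine, and it makes essential use only of decidability and strong finite witnessability, not of smoothness.
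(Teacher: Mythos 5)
Your proposal is correct and follows essentially the same route as the paper's proof: compute $\wit(\varphi)$, enumerate arrangements over $\vars_S(\wit(\varphi))$, use decidability to keep the $\T$-satisfiable ones, and return the componentwise-minimal cardinality vectors, with strong finite witnessability supplying the exact-cardinality realizability. The only (harmless) differences are your initial satisfiability test for $\varphi$ and your explicit verification of the domination (``floor'') condition, which the paper's write-up leaves implicit.
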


Combining these two lemmas, we obtain that there are no Unicorn $2.0$ theories over an empty signature with finitely many sorts.

\begin{corollary}
\label{thm:no2.0emp}
    If $\Sigma$ is an empty signature with finitely many sorts, and $\T$ is a $\Sigma$-theory strongly finitely witnessable with respect to $S\subseteq \S_{\Sigma}$, then it has a computable minimal model function with respect to $S$.
\end{corollary}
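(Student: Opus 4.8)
The plan is to derive the corollary as an immediate composition of \Cref{empsigimpDEC} and \Cref{DECandSFWimpliesCMMF}, since the hypotheses were engineered precisely to feed into these two results in sequence. First I would observe that $\Sigma$ is an empty signature with finitely many sorts, which is exactly the hypothesis of \Cref{empsigimpDEC}; applying it to the $\Sigma$-theory $\T$ yields that $\T$ is decidable. This discharges the first of the two premises needed downstream.

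Next I would combine this with the remaining hypothesis of the corollary, namely that $\T$ is strongly finitely witnessable with respect to $S$. Together, decidability and strong finite witnessability with respect to $S$ are exactly the premises of \Cref{DECandSFWimpliesCMMF}, so invoking that lemma produces a computable minimal model function for $\T$ with respect to $S$, which is the desired conclusion. The whole argument is thus a two-line chaining: hypothesis plus \Cref{empsigimpDEC} give decidability, and decidability plus hypothesis give the minimal model function via \Cref{DECandSFWimpliesCMMF}.

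Because both inputs to the final lemma are already in hand---one by hypothesis and one by \Cref{empsigimpDEC}---there is no genuine obstacle at the level of the corollary itself. Any real difficulty lives in the two lemmas rather than here: in \Cref{empsigimpDEC} it is the Dickson-style finiteness argument bounding the maximal cardinality profiles realizable over an empty signature, and in \Cref{DECandSFWimpliesCMMF} it is the observation, extracted from the proof of \Cref{thm:casalrasga}, that computability of the minimal model function can be obtained from decidability and strong finite witnessability alone, without appealing to smoothness. The corollary merely packages these together, and in particular it certifies that there are no Unicorn~$2.0$ theories over empty signatures with finitely many sorts.
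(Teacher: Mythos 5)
Your proof is correct and matches the paper's own argument exactly: the corollary is stated immediately after \Cref{empsigimpDEC} and \Cref{DECandSFWimpliesCMMF} precisely as the composition of the two ("Combining these two lemmas, we obtain..."). Nothing further is needed.
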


\subsection{Unicorns $2.0$ over non-empty signatures}
\label{sec:nonemptyunicorn2.0}

\begin{table}[t]
\renewcommand{\arraystretch}{1.2}
\centering
    \begin{tabular}{|c|c|c|c|c|}\hline
    Section & Signature & Sorts & Function Symbols & Predicate Symbols\\\hline
\multirow{2}{*}{\Cref{sec:nonemptyunicorn2.0}} & $\Sigma_f$ & $\{\sigma_1\}$ & $\{f\}$ & $\emptyset$ \\
 & $\Sigma_f^2$ & $\{\sigma_1,\sigma_2\}$ & $\{f\}$ &  $\emptyset$ \\\hline

\multirow{2}{*}{\Cref{sec:completepic}} & 
$\Sigma_1$ & $\{\sigma_1\}$ & $\emptyset$ & $\emptyset$ \\
& $\Spn$ & $\{\sigma_1\}$ & $\emptyset$ & $\{P_1,P_2,\ldots\}$ \\\hline

\Cref{sec:add} & $\Sp$ & $\{\s_1\}$ & $\emptyset$ & $\{P\}$ \\\hline
\Cref{sec-fm} & $\SNT$ & $\{\sigma_1\}$ & $\{f_\rho \mid \rho \in 2^\omega\}$ & $\{N,T,<\}$ \\\hline
    \end{tabular}
    \vspace{2mm}
    \caption{Signatures. All function symbols of $\SNT$ and $f$ have arity $\sigma_1\rightarrow\sigma_1$.
    $N$, $T$, and $P$ have arity $\sigma_1$. 
    Each $P_i$ is a nullary predicate.
    $<$ has arity $\sigma_1\times\sigma_1$.}
    \label{tab:signatures}
\end{table}

We now prove that there are Unicorn 2.0 theories over 
non-empty signatures.
We start with the one-sorted case.
The many-sorted case will be dealt with in the end of this section.
We work within a single-sorted signature $\Sigma_f$ with only a unary function symbol $f$ and sort $\sigma_1$,
as described in \Cref{tab:signatures}.

Our theories will make use of the following formula:

\begin{definition}
    \label{def:cycle}
Given a number $n$ and a variable $x$, we denote by
$\cycle{n}(x)$ the formula $f^n(x) = x \land \bigwedge_{\substack{m \mid n \\ m \neq n}} f^m(x) \neq x$, where $m \mid n$ means that $m$ divides $n$.
\end{definition}

In any $\Sigma_f$-interpretation $\A$,
$f^{\A}$ is, of course, a unary function.
As such, it gives rise to a directed graph in which the vertices are the elements 
of $\sigma_1^{\A}$, and each vertex has out-degree 1. 
For each $n \in \No$, and $\Sigma_f$-interpretation $\A$,
$\A\models\cycle{n}(x)$ if $x^{\A}$ is a part of a cycle of length $n$. Notice that this is stronger than just having
$\A\models f^{n}(x)=x$, as in the latter case
we might also have $\A\models f^{m}(x)=x$ 
for some $m$ that properly divides $n$.\footnote{In  $\cycle{n}(x)$, the requirements $m|n$ and $m\neq n$ can be replaced by the requirement
$m<n$. However, we chose an encoding that skips redundant cases.}

\begin{table}[t]
\renewcommand{\arraystretch}{1.2}
\centering
\begin{tabular}{|c|c|c|c|c|}
\hline 
Section &Theory & Signature & Axiomatization & Source\\\hline
\multirow{4}{*}{
\Cref{sec:nonemptyunicorn2.0}} &   $\T_1$ & $\Sigma_{f}$ & $\{(\Exists{x} \cycle{n}(x)) \rightarrow \Forall{x} f^2(x) \neq x \mid n \in S\}$ & new\\
&    $\T_2$ & $\Sigma_{f}$ &  $\ax(\T_1) \cup \{\lnot \Exists{x} \cycle{6}(x)\}$ & new\\
 &    $\T_3$ & $\Sigma_{f}$ &  $\ax(\T_1) \cup \{(\Exists{x} f(x) = x) \rightarrow \Forall{x} \Forall{y} x = y\}$ & new\\
 &    $\T_4$ & $\Sigma_{f}$ &  $\ax(\T_2) \cup \ax(\T_3)$ & new \\\hline
\multirow{3}{*}{\Cref{sec:completepic}} &    $\tuf$ & $\Sigma_{1}$ &  $\emptyset$ & everywhere\\
&    $\T_{=1}$ & $\Sigma_{1}$ &  $\{\psi_{=1}\}$ & \cite{CADE}\\
 &    $\Tf$ & $\Spn$ & $\{P_{n} : \f(n)=1\}\cup\{\neg P_{n} : \f(n)=0\} \cup \{\psi_{=1}\}$ & new\\\hline

\Cref{sec:add} & $\tnn$ & $\Sp$ & $\{\psi_n^P \rightarrow \psi_{\geq 2n} \mid n \in \No\}$ & new \\\hline

\Cref{sec-fm} & $\TNT$ & $\SNT$ & See \Cref{def:tnt} & new \\\hline
\end{tabular}
\vspace{2mm}
\caption{Theories.
$\cycle{n}(x)$ is defined in \Cref{def:cycle}.
    $S$ is an undecidable set of prime numbers.
    In $\Tf$, $h:\mathbb{N}\rightarrow\{0,1\}$ is a non-computable function.
    $\psi_n^P$ abbreviates the formula $(\Exists{{x_1, \ldots, x_n}}\neq(x_{1},\ldots,x_{n}) \land \bigwedge_{i=1}^n P(x_i))$.}
\label{fig:theories}
\vspace{-6mm}
\end{table}

We define four Unicorn 2.0 $\Sigma_{f}$-theories: $\T_1,\T_2,\T_3,\T_4$.
Their axiomatizations appear in \Cref{fig:theories}. 
We will prove that $\T_1$, in addition to being Unicorn 2.0, is also stably infinite and convex;
$\T_2$ is stably infinite but not convex;
$\T_3$ is not stably infinite but convex;
and $\T_4$ is neither stably infinite nor convex.

Let us provide some intuition for their definitions.
We assume an arbitrary but fixed undecidable set $S$
of prime numbers that are greater than or equal to $7$.\footnote{Such sets exist: consider any undecidable set $S_0$, 
and let $S_1$ be the image of the function
$p$ that takes a natural number $i$ from $S_0$ and returns the $ith$ prime number.
Then, $S_1$ is also undecidable.
Finally, define $S:=S_1\setminus\{1,\ldots,6\}$.
Then $S$ is an undecidable set of prime numbers greater than or equal to $7$. }
Let $\A$ be a $\Sigma_f$-interpretation,
and let $\graphof{\A}$ be the graph induced by $f^{\A}$.
Then:
$\A$ is a $\T_1$-interpretation if $\graphof{\A}$ either has no cycles whose lengths are in $S$, or it has no cycles of length $1$ and $2$.
$\A$ is a $T_2$-interpretation if it is a $\T_1$-interpretation, and, in addition,
$\graphof{\A}$  has no cycles of length $6$.
$\A$ is a $T_3$-interpretation if it is a $T_1$-interpretation, and,
in addition, if $\graphof{\A}$ has a loop (i.e., a cycle of length $1$), then $\sigma_{1}^{\A}$ has
a single element.
Finally, $\A$ is a $\T_4$-interpretation if it is both a $\T_2$-interpretation and a
$\T_3$-interpretation.

\usetikzlibrary{arrows.meta,bending}
\begin{figure}[t]
    \centering
\begin{tikzpicture}
 \path[every edge/.append style={-{Stealth[bend]}}] foreach \X in {0,1}
 { (0-\X*180:1) node (c\X) {$c_{\X}$}}
  foreach \X [remember=\X as \LastX (initially 1),
    evaluate=\X as \LLastX using {int(Mod(\X+1,2))}] in {0,1}
  { (c\LastX) edge[bend left=70] (c\X)  };
\end{tikzpicture}
\hspace{5mm}
\scalebox{0.9}{
\begin{tikzpicture}
 \path[every edge/.append style={-{Stealth[bend]}}] foreach \X in {0,...,2}
 { (0-\X*120:1.5) node (c\X) {$c_{\X}$}}
  foreach \X [remember=\X as \LastX (initially 2),
    evaluate=\X as \LLastX using {int(Mod(\X+1,3))}] in {0,...,2}
  { (c\LastX) edge (c\X) };
\end{tikzpicture}}
\hspace{5mm}
\scalebox{0.9}{
\begin{tikzpicture}
 \path[every edge/.append style={-{Stealth[bend]}}] foreach \X in {0,...,5}
 { (0-\X*60:1.5) node (c\X) {$c_{\X}$}}
  foreach \X [remember=\X as \LastX (initially 5),
    evaluate=\X as \LLastX using {int(Mod(\X+4,6))}] in {0,...,5}
  { (c\LastX) edge (c\X) };
\end{tikzpicture}}
    \caption{The interpretations $\A_{2}$, $\A_{3}$ and $\A_{6}$, from left-to-right, all satisfy $f^{6}(x)=x$, but only $\A_{6}$ satisfies $\cycle{6}(x)$.}
    \label{cycles}
    \vspace{-6mm}
\end{figure}
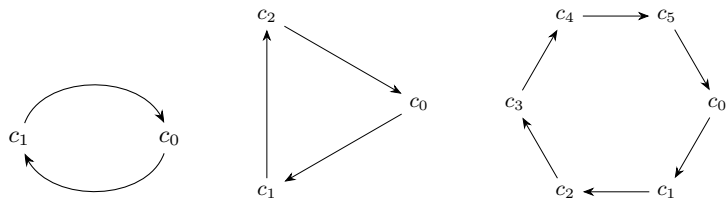

We prove that all theories $\T_1,\ldots,\T_4$ are Unicorn 2.0 theories;
that is, they are strongly finitely witnessable (\Cref{lem:TiSFW})
but do not have a computable minimal model function (\Cref{lem:TiCMMF}).
For the latter, it can be shown that a computable minimal model function
can be used to decide $S$, contradicting its undecidability.

\begin{restatable}{lemma}{lemtisfw} \label{lem:TiSFW}
    For each $i \in [4]$,  $\T_i$ is strongly finitely witnessable.
\end{restatable}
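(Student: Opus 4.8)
The plan is to exhibit a single strong witness that works for all four theories at once, since they share the signature $\Sigma_f$. For $\varphi \in \qf{\Sigma_f}$, I would let $\wit(\varphi)$ be the standard flattening of $\varphi$: recursively replace every subterm of the form $f(t)$ by a fresh variable $u$ and conjoin the defining equation $u = f(w)$, where $w$ is the variable that has already replaced $t$. The result is a formula all of whose $f$-atoms have the shape $u = f(w)$ with $u,w$ variables. Flattening is computable, and since conjoining the defining equations and existentially quantifying the fresh variables recovers $\varphi$ over every interpretation, $\varphi$ and $\exists \vec{x}.\,\wit(\varphi)$ are $\T_i$-equivalent; hence $\wit$ is a witness. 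It then remains to verify the strong condition: for every $\varphi$, finite variable set $V$, and arrangement $\delta_V$, if $\wit(\varphi)\wedge\delta_V$ has a $\T_i$-model $\B$, then it has a $\T_i$-model whose domain is exactly the set of values of its variables.

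Given such a $\B$, I would take $D = \vars(\wit(\varphi)\wedge\delta_V)^\B$, a finite subset of $\sigma_1^\B$, and build $\A$ on domain $D$, assigning each variable the value it has in $\B$; the arrangement and all equalities and disequalities are then automatically satisfied. The only freedom is in defining $f^\A \colon D \to D$. Call $d \in D$ \emph{dangling} if $f^\B(d) \notin D$. For non-dangling $d$ I would set $f^\A(d) = f^\B(d)$, which in particular preserves every $f$-atom $u = f(w)$ of $\wit(\varphi)$ (there $f^\B(w^\B) = u^\B \in D$, so $w^\B$ is non-dangling); hence $\A \vDash \wit(\varphi)\wedge\delta_V$ regardless of how $f^\A$ is defined on dangling elements. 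The crucial structural observation is that whenever at least one non-dangling element exists, routing every dangling element to a non-dangling one creates no new cycle: a dangling vertex receives no rerouted incoming edge, and its forward path passes only through non-dangling vertices into a cycle of the kept structure that does not contain it. Consequently the cycles of $\A$ form a sub-collection of the cycles of $\B$ that lie inside $D$.

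With this in hand the axioms follow by monotonicity of each theory's cycle constraints under passing to a sub-collection of cycles: $\T_1$ forbids only the simultaneous presence of an $S$-cycle and a cycle of length $1$ or $2$; $\T_2$ additionally forbids $6$-cycles; and $\T_3$ forbids $1$-cycles unless the domain is a singleton. Since $\B$ is a $\T_i$-model and $\A$'s cycles form a sub-collection of $\B$'s, none of the forbidden patterns can appear in $\A$ (for $\T_3$, if $|D| \geq 2$ then $|\sigma_1^\B| \geq 2$, so $\B$, and hence $\A$, has no $1$-cycle; if $|D| = 1$ the forced loop is harmless). The remaining case is when $D$ has no non-dangling element, i.e.\ $f^\B$ maps every variable value outside $D$ (which can only happen when $\wit(\varphi)$ contains no $f$-atom); then $f^\A$ must be defined from scratch, and I would place an explicit functional graph on $D$ with no forbidden cycles: a single cycle of length $|D|$ when $|D| \notin \{1,6\}$ (it has no $1$- or $2$-cycle, and is an $S$-cycle only when no short cycle is present), a union of two $3$-cycles when $|D| = 6$ (avoiding the $6$-cycle that $\T_2$ and $\T_4$ forbid), and the forced loop when $|D| = 1$ (legal for all four theories, and supplying the singleton required by $\T_3$).

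The main obstacle is the cycle-creation analysis of the second paragraph together with the all-dangling case of the third: one must check that rerouting genuinely introduces no cycle, and that the explicit graphs avoid every length forbidden by the theory at hand, handling the small boundary values separately ($|D| \leq 6$, where no $S$-cycle can occur, and $|D| \in \{1,2,6\}$). All of this is a finite, routine case analysis once the key point — that the cycle set only shrinks — is established; the arrangement $\delta_V$ contributes no extra difficulty, being inherited directly from $\B$.
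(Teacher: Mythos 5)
Your overall skeleton matches the paper's proof: flatten, restrict to the set $D$ of variable values, keep the $f$-edges that stay inside $D$, and reroute the rest. The gap is the ``crucial structural observation'' of your second paragraph, which is false as stated: routing every dangling element to an \emph{arbitrary} non-dangling element can create new cycles. The justification breaks down because ``non-dangling'' only guarantees $f^\B(d)\in D$, not that $f^\B(d)$ is itself non-dangling; so the forward path out of a dangling vertex can re-enter the dangling set and close up into a fresh cycle, and, more fundamentally, if the kept edges form an acyclic structure then \emph{any} completion to a total function on the finite set $D$ must create a cycle that was not present in $\B$. Concretely, take $p\in S$ and let $\wit(\varphi)\wedge\delta_V$ assert a $p$-cycle on $x_0,\dots,x_{p-1}$ together with a single edge $y_1=f(y_0)$, all $p+2$ variables pairwise distinct. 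A $\T_1$-model $\B$ interprets the $x_i$ on a $p$-cycle and $y_0,y_1$ on the first two vertices of an infinite ray; then $y_1^\B$ is dangling, $y_0^\B$ is non-dangling, and your recipe permits rerouting $y_1^\B\to y_0^\B$, producing a $2$-cycle alongside the $p$-cycle --- which violates $\ax(\T_1)$ and hence all four theories.

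The paper avoids this by a finer case split on the kept structure (your non-dangling part, its $G_1$): if it contains a cycle, all dangling vertices are routed \emph{onto that cycle}, not to arbitrary non-dangling vertices, which genuinely creates no new cycle; if it is acyclic, the dangling vertices are arranged into a cycle among themselves (a single cycle, or two $3$-cycles when there are exactly six of them), and the delicate sub-case of a single dangling vertex with nonempty acyclic kept part is handled by forming a $2$-cycle with a kept vertex that already points at it. Your third paragraph contains essentially the right explicit constructions, but you deploy them only when \emph{every} vertex is dangling; they are also needed whenever the kept structure is acyclic but nonempty, and in the single-dangling-vertex sub-case the self-loop you would be forced into is unavailable (it violates $\T_3$ and $\T_4$ on a domain of size at least $2$), so the $2$-cycle trick is required. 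With that case analysis in place the rest of your argument goes through.
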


\begin{restatable}{lemma}{lemticmmf} \label{lem:TiCMMF}
    For each $i \in [4]$, $\T_i$ has no computable minimal model function.
\end{restatable}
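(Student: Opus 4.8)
The plan is to show, for each $i \in [4]$, that if $\T_i$ had a computable minimal model function $\minmod_{\T_i}$, then we could decide the undecidable set $S$ of primes, yielding a contradiction. The key observation driving everything is the interplay between the axioms and the formula $\cycle{n}(x)$: the axiom scheme $\{(\Exists{x}\cycle{n}(x)) \rightarrow \Forall{x} f^2(x) \neq x \mid n \in S\}$ forces a sharp dichotomy. Concretely, I would analyze the $\T_i$-satisfiable models of a cleverly chosen test formula, probing membership $n \in S$ by asking for the minimal-model cardinalities of a formula that asserts the existence of an $n$-cycle.

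First I would fix a prime $p \geq 7$ and consider the formula $\varphi_p := \cycle{p}(x)$ (or a small variant built from it, possibly conjoined with a witness to ensure the domain is spanned by the named variables). The point is to track how the minimal model function behaves on $\varphi_p$ depending on whether $p \in S$. If $p \notin S$, the graph $\graphof{\A}$ may contain a $p$-cycle freely, and the minimal $\T_i$-model realizing $\varphi_p$ can be taken to be exactly a single $p$-cycle on $p$ elements (no loops or $2$-cycles needed), so the minimal cardinality is $p$. If $p \in S$, then by the $\T_1$-axiom the presence of a $p$-cycle activates the conclusion $\Forall{x} f^2(x) \neq x$, which forbids all cycles of length $1$ and $2$; this does not by itself block a pure $p$-cycle model (since a $p$-cycle with $p \geq 7$ prime has no points fixed by $f$ or $f^2$), so I must be more careful: the distinguishing feature is whether certain \emph{additional} small configurations can coexist. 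The right move is therefore to choose the probe formula so that satisfiability or the minimal cardinality differs across the two cases — for instance, forcing the coexistence of a $p$-cycle with a loop, which is consistent exactly when $p \notin S$.

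The central step is thus to prove a clean equivalence of the form ``$n \in S$ if and only if $\minmod_{\T_i}(\varphi_n)$ has a prescribed value (or a prescribed element is absent)'', uniformly in $i$, and then observe that a computable $\minmod_{\T_i}$ would let us compute the right-hand side and hence decide $S$. For $\T_2, \T_3, \T_4$ the extra axioms (forbidding $6$-cycles, collapsing loops to singletons, or both) only constrain models further, so I expect the same probe to work after checking that the forbidden configurations are irrelevant for $n \geq 7$ prime, $n \neq 6$; this is where the choice $S \subseteq \{\text{primes} \geq 7\}$ is used, guaranteeing that the $6$-cycle and loop-collapse axioms never interfere with the $n$-cycle probe.

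The main obstacle will be pinning down the \emph{exact} minimal cardinalities on both sides of the dichotomy and verifying the third clause of the minimal model function definition (that every $\T_i$-model dominates some returned tuple), rather than merely satisfiability: I need to rule out that an unexpectedly small model exists in the $p \in S$ case that would make the two cases indistinguishable by cardinality alone. Concretely, I would argue that in the $p \in S$ case any $\T_i$-model satisfying the probe must omit the small configuration and hence have strictly larger (or differently structured) minimal cardinality, pinning the value returned by $\minmod_{\T_i}$. Once the value of $\minmod_{\T_i}(\varphi_n)$ is shown to encode $n \in S$ in a decidable way, computability of $\minmod_{\T_i}$ contradicts the undecidability of $S$, completing the proof for all four theories.
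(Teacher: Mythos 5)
Your high-level reduction is the same as the paper's: assume a computable minimal model function and use it to decide the undecidable set $S$ by probing with a formula built from $\cycle{n}(x)$. But your one concrete probe --- forcing a $p$-cycle to coexist with a loop, ``which is consistent exactly when $p \notin S$'' --- does not work, for a reason that is central to this whole section. The minimal model function is only constrained on \emph{$\T$-satisfiable} formulas; on unsatisfiable inputs it may return anything (the paper emphasizes this when discussing the definition from \cite{LPAR}). So a probe whose satisfiability status encodes $n \in S$ gives you nothing: when the formula is unsatisfiable, the computed output of $\minmod_{\T_i}$ carries no information, and you cannot detect unsatisfiability without already having a decision procedure. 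A second, independent failure: for $\T_3$ and $\T_4$ the loop-collapse axiom makes $\cycle{p}(x) \land f(y)=y$ unsatisfiable for \emph{every} $p \geq 7$, so the probe does not even distinguish the two cases there. You flag the need to pin down exact minimal cardinalities and to check that the extra axioms of $\T_2,\T_3,\T_4$ do not interfere, but you do not resolve either issue.

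The paper's fix is to use the probe $\cycle{n}(x) \land f^4(y) = y$, which is $\T_i$-satisfiable for all $n$ and all $i$, so the minimal model function is genuinely constrained on it. The element $y$ must lie on a cycle of length $1$, $2$, or $4$, necessarily disjoint from the $n$-cycle (as $n \geq 7$ is prime). When $n \in S$, the axiom forbids cycles of length $1$ and $2$, forcing a $4$-cycle and a minimal model of size exactly $n+4$; when $n \notin S$, a $2$-cycle suffices and the minimal model has size at most $n+2$. Thus $n \in S$ iff the minimal model size is $n+4$, and this is computable from $\minmod_{\T_i}$ --- contradiction. Your argument needs this (or a similar always-satisfiable, cardinality-separating) probe to go through.
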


Now that we have found four Unicorn 2.0 theories,
we prove which properties each of them admit:
the first two are stably infinite while the last two are not,
and $\T_1$ and $\T_3$ are convex while $\T_2$ and $\T_4$ are not.

For the first two theories, it is possible to add arbitrarily many elements
to their interpretations without creating new cycles,
and thus the resulting interpretations belong to the theories.
\begin{restatable}{lemma}{lemtisinf} \label{lem:TiSinf}
    For each $i \in \{1,2\}$, the theory $\T_i$ is stably infinite.
\end{restatable}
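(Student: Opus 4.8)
The plan is to show stable infiniteness of $\T_1$ and $\T_2$ directly from the definition: given a $\T_i$-satisfiable quantifier-free formula $\varphi$, I must produce a $\T_i$-interpretation $\B$ satisfying $\varphi$ with $|\sigma_1^{\B}|\geq\aleph_0$. Since $\varphi$ is $\T_i$-satisfiable, fix some $\T_i$-interpretation $\A$ with $\A\vDash\varphi$. The key idea, as the preamble to the lemma indicates, is that one can enlarge $\A$ by adding infinitely many fresh elements \emph{without creating any new cycles}, so that membership in the theory is preserved. Concretely, I would add a countably infinite set of new elements and extend $f^{\A}$ on them so that they form an infinite acyclic ``tail'' structure — for instance, arranging the new elements into a one-way infinite chain (an $\omega$-chain $a_0 \mapsto a_1 \mapsto a_2 \mapsto \cdots$) or feeding them into the existing graph, in either case ensuring no new cycle is created.

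First I would make precise that the cycle structure of $\graphof{\B}$ is exactly the cycle structure of $\graphof{\A}$: the newly added elements lie on infinite non-repeating paths, so none of them is on a cycle, and the old elements retain their original images. Therefore $\graphof{\B}$ has a cycle of length $n$ if and only if $\graphof{\A}$ does, for every $n$. Since $\A$ is a $\T_1$-interpretation, by the characterization recalled in the excerpt $\graphof{\A}$ either has no cycle of length in $S$ or has no cycles of length $1$ or $2$; the same disjunction then holds verbatim for $\graphof{\B}$, so $\B$ is a $\T_1$-interpretation. For $\T_2$ I additionally need $\graphof{\B}$ to have no cycle of length $6$, which again transfers from $\A$ since no new cycles were introduced. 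Finally, because $\varphi$ is quantifier-free and the interpretation of all its terms and literals depends only on the old elements (the variable assignment is inherited from $\A$ and $f$ agrees with $f^{\A}$ on old elements), satisfaction of $\varphi$ is preserved: $\B\vDash\varphi$. As $|\sigma_1^{\B}|\geq\aleph_0$ by construction, this establishes stable infiniteness.

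The main obstacle is purely a matter of getting the extension of $f$ right so that genuinely no spurious cycle appears and the old part is untouched. The cleanest way to guarantee this is to route the new elements into an infinite forward ray that never returns to any old vertex: take new elements $\{a_i \mid i\in\mathbb{N}\}$, set $f^{\B}(a_i)=a_{i+1}$, and leave $f^{\B}$ equal to $f^{\A}$ on $\sigma_1^{\A}$. Then every $a_i$ has an infinite forward orbit with no repetition, hence lies on no cycle, and no old element's orbit is altered. I would remark that this choice keeps the new component entirely disjoint from the old graph, which makes the ``no new cycles'' claim immediate and avoids any case analysis about which old vertex a new element might eventually map into. With that construction fixed, verifying the three points — preservation of satisfaction, preservation of the cycle-length profile, and infinite cardinality — is routine, and the restriction to $i\in\{1,2\}$ is exactly what lets us avoid the loop/singleton constraint of $\T_3$ and $\T_4$, which would otherwise conflict with an infinite domain whenever a loop is present.
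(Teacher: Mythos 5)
Your proposal is correct and follows essentially the same route as the paper's proof: both extend $\A$ by a disjoint countably infinite one-way ray ($f^{\B}(c_i)=c_{i+1}$, $f^{\B}=f^{\A}$ on the old domain), observe that no new cycles are created so the theory's axioms are preserved, and note that satisfaction of the quantifier-free $\varphi$ carries over. No further comment is needed.
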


In contrast, theories $\T_3$ and $\T_4$ are not stably infinite,
as they restrict the size of the domain. 

\begin{lemma}
    For each $i \in \{3,4\}$, the theory $\T_i$ is not stably infinite.
\end{lemma}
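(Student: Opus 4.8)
The plan is to exhibit a single quantifier-free formula that is satisfiable in each theory but forces the domain of $\sigma_1$ to be a singleton, which directly contradicts stable infiniteness. Concretely, I would take $\varphi := f(x) = x$, where $x$ is a variable of sort $\sigma_1$. Semantically, $\A \vDash \varphi$ says that $x^{\A}$ is a fixed point of $f^{\A}$, i.e.\ that $\graphof{\A}$ contains a loop (a cycle of length $1$). Since both $\ax(\T_3)$ and $\ax(\T_4)$ contain the implication $(\Exists{x} f(x) = x) \rightarrow \Forall{x} \Forall{y} x = y$, the presence of such a loop will collapse the whole domain.

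First I would verify that $\varphi$ is $\T_i$-satisfiable for $i \in \{3,4\}$ by producing an explicit model. Consider the one-element interpretation $\A$ with $\sigma_1^{\A} = \{*\}$ and $f^{\A}(*) = *$; then $\A \vDash \varphi$. Its graph $\graphof{\A}$ consists of a single loop, so it has no cycle whose length lies in $S$ (recall $S$ contains only primes $\geq 7$) and no cycle of length $6$; hence $\A$ is both a $\T_1$- and a $\T_2$-interpretation. Moreover, since $\sigma_1^{\A}$ is a singleton, the consequent $\Forall{x} \Forall{y} x = y$ of the $\T_3$-axiom is true, so $\A$ is a $\T_3$-interpretation. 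Being simultaneously a $\T_2$- and a $\T_3$-interpretation, $\A$ is a $\T_4$-interpretation as well, so $\varphi$ is satisfiable in both $\T_3$ and $\T_4$.

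Second I would argue that every $\T_i$-interpretation ($i \in \{3,4\}$) satisfying $\varphi$ has a singleton domain. Indeed, if $\A \vDash f(x) = x$, then $x^{\A}$ witnesses $\Exists{x} f(x) = x$, and applying the axiom $(\Exists{x} f(x) = x) \rightarrow \Forall{x} \Forall{y} x = y$ (present in both $\ax(\T_3)$ and $\ax(\T_4)$) yields $|\sigma_1^{\A}| = 1$. Consequently, no $\T_i$-interpretation satisfying $\varphi$ has $|\sigma_1^{\A}| \geq \aleph_0$, so by definition $\T_3$ and $\T_4$ are not stably infinite with respect to $\{\sigma_1\}$.

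There is no real obstacle here; the argument is short because the restricting axiom is shared by both theories, letting one formula handle both cases at once. The only point requiring care is the satisfiability check — verifying that the singleton model genuinely satisfies \emph{all} the axioms, in particular that introducing a length-$1$ cycle does not inadvertently create a forbidden cycle (one of length in $S$, or of length $6$). This is immediate since the entire graph is a single loop, but it is worth stating explicitly.
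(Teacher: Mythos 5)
Your proposal is correct and follows exactly the paper's argument: the paper's one-line proof is that $f(x)=x$ is only $\T_3$- and $\T_4$-satisfiable by interpretations of size $1$, which is precisely the formula and reasoning you use. Your version merely spells out the satisfiability check and the application of the collapsing axiom in more detail.
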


\begin{proof}
$f(x) = x$ is only $\T_3$- and $\T_4$-satisfiable by interpretations of size 1.
\end{proof}

Next, we show that $\T_1$ and $\T_3$ can be re-axiomatized
using Horn clauses, and then from \Cref{thm:hornconvex}
we get that they are convex.

\begin{lemma}
\label{lem:TiHorn}
    For each $i \in \{1,3\}$, the theory $\T_i$ is convex.
\end{lemma}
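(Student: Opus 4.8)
We must show that for $i\in\{1,3\}$, the theory $\T_i$ is convex. By \Cref{thm:hornconvex}, it suffices to re-axiomatize each $\T_i$ using only Horn clauses—that is, to find a set of Horn clauses whose class of models coincides with $\T_i$. The plan is therefore to take each defining axiom and rewrite it as a (possibly infinite) conjunction of Horn clauses, checking that the model class is preserved.

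**The plan for $\T_1$.** The axioms of $\T_1$ are the sentences $(\Exists{x}\cycle{n}(x))\rightarrow\Forall{x}f^2(x)\neq x$ for $n\in S$. The obstacle is that these are not Horn as written: the antecedent $\Exists{x}\cycle{n}(x)$ unpacks to an existential over a conjunction of equalities \emph{and} disequalities $f^m(x)\neq x$, and its presence on the left of the implication creates positive disequalities (equivalently, negated atoms appearing positively) when contraposed. The key idea is to first observe that the cycle formula $\cycle{n}(x)$, for $n$ prime, simplifies: since the only divisor $m$ of $n$ with $m\neq n$ is $m=1$, we have $\cycle{n}(x)\equiv (f^n(x)=x \land f(x)\neq x)$. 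So the axiom says: if some $x$ satisfies $f^n(x)=x \land f(x)\neq x$, then no $y$ satisfies $f^2(y)=y$. I would re-express this contrapositively as a universally quantified Horn-style constraint. Concretely, I would rewrite the single axiom for each $n\in S$ as the family of Horn clauses $\Forall{x,y}\big(f^n(x)=x \land f^2(y)=y \rightarrow f(x)=x\big)$, i.e. $\Forall{x,y}\,\neg(f^n(x)=x)\lor\neg(f^2(y)=y)\lor f(x)=x$. This clause has exactly one positive literal ($f(x)=x$) and is thus Horn. The verification step is to check semantic equivalence: a model satisfies this clause for all $n\in S$ iff it has no $x$ lying on an $n$-cycle (forcing $f(x)\neq x$ on such an $x$) while simultaneously having a $2$-cycle or loop, which is exactly the condition defining $\T_1$ via $\graphof{\A}$. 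I expect this equivalence check to be the main content, and I would phrase it through the graph characterization already given in the text.

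**The plan for $\T_3$.** Here $\ax(\T_3)=\ax(\T_1)\cup\{(\Exists{x}f(x)=x)\rightarrow\Forall{x,y}\,x=y\}$. The $\T_1$ part is handled exactly as above. For the extra axiom, I rewrite it contrapositively as $\Forall{x,y,z}\big(f(x)=x \rightarrow y=z\big)$, that is, the Horn clause $\Forall{x,y,z}\,\neg(f(x)=x)\lor y=z$, which again has a single positive literal $y=z$. A model satisfies this for all $x,y,z$ iff whenever some element is a loop, all pairs of elements are equal—precisely the stated $\T_3$ condition that a loop forces $|\sigma_1^{\A}|=1$.

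**Conclusion.** Having exhibited Horn-clause re-axiomatizations of $\T_1$ and $\T_3$ with the same model classes, convexity of both follows immediately from \Cref{thm:hornconvex}. The only real subtlety—and the step I would write out carefully—is verifying that collapsing $\cycle{n}(x)$ to $f^n(x)=x\land f(x)\neq x$ is legitimate (using primality of $n\in S$), and that the contrapositive Horn forms capture exactly the graph-theoretic model conditions without introducing or excluding any interpretations.
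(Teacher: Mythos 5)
Your proposal is correct and follows essentially the same route as the paper's proof: both exploit the primality of $n\in S$ to collapse $\cycle{n}(x)$ to $f^n(x)=x\wedge f(x)\neq x$, rewrite each axiom as the Horn clause $\Forall{x}\Forall{y}\,f^n(x)\neq x\lor f(x)=x\lor f^2(y)\neq y$ (and $\Forall{x}\Forall{y}\Forall{z}\,f(x)\neq x\lor y=z$ for $\T_3$), and conclude via \Cref{thm:hornconvex}. The equivalence check you flag as the main subtlety is just the standard prenexing of the existential antecedent, as in the paper.
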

\begin{proof}
    If $n$ is prime, we have
that 
$\cycle{n}(x)$ is equivalent 
to $f^n(x)=x\wedge f(x)\neq x$, and therefore
        $(\Exists{x} \cycle{n}(x)) \rightarrow \Forall{x} f^2(x) \neq x$ is equivalent to
        $\Forall{x} \Forall{y} f^n(x) \neq x \lor f(x) = x \lor f^2(y) \neq y$.
        Also,
        $(\Exists{x} f(x) = x) \rightarrow \Forall{x} \Forall{y} x = y$ is equivalent to
        $\Forall{x} \Forall{y} \Forall{z} f(x) \neq x \lor y = z$.
    Therefore, $\T_i$ can be axiomatized by Horn clauses, which implies that $\T_i$ is convex by \Cref{thm:hornconvex}.
\end{proof}

Finally, we show that $\T_2$ and $\T_4$ are not convex.

\begin{lemma}
    For each $i \in \{2,4\}$, the theory $\T_i$ is not convex.
\end{lemma}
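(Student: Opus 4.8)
The plan is to show non-convexity of $\T_2$ and $\T_4$ by exhibiting a single conjunction of literals whose $\T_i$-validity implies a disjunction of equalities, without implying any single disjunct. The natural witness comes from the prohibition of $6$-cycles that both theories inherit from $\ax(\T_2)$. Recall $6 = 2 \cdot 3$, so if a point $x$ satisfies $f^6(x) = x$, then its orbit has length dividing $6$, i.e.\ length in $\{1,2,3,6\}$; but length $6$ is forbidden in every $\T_2$- and $\T_4$-interpretation. Hence in these theories $f^6(x) = x$ forces $x$ to lie on a cycle of length $1$, $2$, or $3$, which is exactly the statement $f^1(x) = x \lor f^2(x) = x \lor f^3(x) = x$.

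Concretely, first I would fix a variable $x$ and take the conjunction $\varphi := (f^6(x) = x)$. I claim $\vDash_{\T_i} \varphi \rightarrow (f(x) = x \lor f^2(x) = x \lor f^3(x) = x)$ for $i \in \{2,4\}$: in any $\T_i$-interpretation $\A$, the element $x^{\A}$ has an orbit whose length $d$ divides $6$ and is not $6$ (since $6$-cycles are banned), so $d \in \{1,2,3\}$ and correspondingly $\A \vDash f^d(x) = x$. To express this purely with equalities between terms, note each literal $f^d(x) = x$ is of the required form $x_i = y_i$ after introducing auxiliary variables, or one works directly with the term equalities, which the convexity definition permits once we regard $f^d(x)$ as a term. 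I would then check that none of the three individual implications $\vDash_{\T_i} \varphi \rightarrow f^d(x) = x$ holds, for each $d \in \{1,2,3\}$, by producing for each $d$ a $\T_i$-interpretation satisfying $f^6(x) = x$ together with $f^d(x) \neq x$. For $d = 1$ take a $3$-cycle (the interpretation $\A_3$ of \Cref{cycles}); for $d = 2$ take a $3$-cycle again, since there $f^2(x) \neq x$; and for $d = 3$ take a $2$-cycle ($\A_2$). One must confirm these small interpretations actually lie in $\T_i$: they contain no cycle of forbidden length (no $6$-cycle, and no cycle with length in $S$ since $S \subseteq \{7,8,\ldots\}$), and for $\T_4$ one additionally checks the $\T_3$ loop condition is vacuously satisfied because none of these interpretations has a loop.

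The one subtlety is that the convexity definition in the paper quantifies over conjunctions of literals and disjunctions $\bigvee x_i = y_i$ of variable equalities, whereas $f^d(x) = x$ equates a compound term with a variable. The fix is routine flattening: introduce fresh variables $y_1, y_2, y_3$ with defining literals $y_d = f^d(x)$ added to the antecedent, so the consequent becomes $\bigvee_{d \in \{1,2,3\}} y_d = x$, a genuine disjunction of variable equalities. I would verify that this flattening preserves both the validity of the overall implication and the failure of each disjunct, which is immediate since each $y_d$ is pinned to $f^d(x)^{\A}$ in every interpretation.

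The main obstacle, and the only place requiring care, is making sure the three small interpretations genuinely witness the failure of each individual disjunct simultaneously under the flattened formula and that they all satisfy $\ax(\T_i)$; in particular, for $\T_4$ one must keep both the $6$-cycle ban and the $\T_3$ loop axiom in mind, but since the chosen interpretations are a single $2$-cycle or $3$-cycle with no loops and no forbidden-length cycles, both axioms hold trivially. Once these checks are in place, convexity fails by definition, completing the proof.
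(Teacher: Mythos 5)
Your proposal is correct and follows essentially the same approach as the paper: both derive $\vDash_{\T_i} f^6(x)=x \rightarrow \bigvee_d f^d(x)=x$ from the ban on $6$-cycles and refute each individual disjunct using the $2$-cycle and $3$-cycle interpretations $\A_2$ and $\A_3$. The only differences are cosmetic — the paper collapses the $d=1$ disjunct into the other two (since $f(x)=x$ implies $f^2(x)=x$ and $f^3(x)=x$) and does not bother with the flattening step you spell out, which is a reasonable extra precaution given the definition of convexity.
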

\begin{proof}
    We have $\vDash_{\T_i} f^6(x) = x \rightarrow (f^2(x) = x \lor f^3(x) = x)$.
Indeed, let $\A$ be a $\T_i$-interpretation.
Then clearly, $\A\not\models \cycle{6}(x)$.
Now suppose $\A\models f^6(x)=x$.
Then $\A\models f(x)=x$, or 
$\A\models f^2(x)=x$, or
$\A\models f^3(x)=x$.
If the first case holds, then so do the second and the third.
Hence we get 
$\A\models f^2(x)=x$, or
$\A\models f^3(x)=x$.

However, $\not\vDash_{\T_i} f^6(x) = x \rightarrow f^2(x) = x$ and $\not\vDash_{\T_i} f^6(x) = x \rightarrow f^3(x) = x$.
Indeed,
consider a $\T_i$-interpretation $\A$ that satisfies
$\cycle{2}(x)\wedge\neg\cycle{3}(x)$ (such as $\A_{2}$ from \Cref{cycles}).
Then $\A\models f^6(x)=x$ but $\A\not\models f^3(x)=x$.
Similarly, 
Consider a $\T_i$-interpretation $\B$ that satisfies
$\cycle{3}(x)\wedge\neg\cycle{2}(x)$ (see $\A_{3}$ in \Cref{cycles}).
Then $\A\models f^6(x)=x$ but $\A\not\models f^2(x)=x$.
\end{proof}

With this, we have proven all necessary requirements
for theories $\T_1,\ldots,\T_4$.

Now, we turn to theories over a many-sorted signature.
For that, we utilize the signature $\Sigma_f^2$ from
\Cref{tab:signatures}, that simply adds a sort ($\sigma_2$) to $\Sigma_f$.
Rather than introducing completely new theories for this signature,
we use the following result from \cite{CADE},
according to which a sort can be added while preserving all relevant properties.

\begin{definition}[\cite{CADE}, Definition 4]
If $\T$ is a $\Sigma_{f}$-theory then
$\adds{\T}$ is the $\Sigma_{f}^{2}$-theory axiomatized by
$\ax(\T)$.
\end{definition}

\begin{lemma}[\cite{LPAR,CADE}]
Let $\T$ be a $\Sigma_{f}$-theory
and $X$ be either strong finite witnessability, computability of the minimal model function, stable infiniteness or convexity.
Then: $\T$ admits property $X$ with respect to $\{\s_{1}\}$ iff
$\adds{\T}$ admits property $X$ with respect to $\{\s_{1}, \s_{2}\}$.
\end{lemma}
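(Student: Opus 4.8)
The plan is to exploit the fact that, since the only symbol of $\Sigma_{f}^{2}$ touching the new sort is equality, every $\Sigma_{f}^{2}$-interpretation $\A$ splits into two independent pieces: its $\sigma_1$-reduct, which is literally a $\Sigma_{f}$-interpretation and is a $\T$-interpretation exactly when $\A$ is an $\adds{\T}$-interpretation (because $\ax(\adds{\T})=\ax(\T)$ mentions only $\sigma_1$), and a wholly unconstrained nonempty set $\sigma_2^{\A}$. Conversely, any $\Sigma_{f}$-interpretation of $\T$ together with any nonempty set yields an $\adds{\T}$-interpretation. Moreover every atom of a $\Sigma_{f}^{2}$-formula is \emph{sort-pure}: it is either a $\sigma_1$-atom (an equality of $\Sigma_{f}$-terms) or a $\sigma_2$-atom (an equality of $\sigma_2$-variables). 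First I would record these two observations, since each of the four equivalences amounts to transporting information across this split.

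For stable infiniteness and convexity I would argue by recombination. A conjunction of literals $\varphi$ splits as $\varphi=\varphi_1\wedge\varphi_2$ into its $\sigma_1$- and $\sigma_2$-parts, and an arrangement $\delta_V$ splits as $\delta_{V_1}\wedge\delta_{V_2}$; because the sorts are independent I may take the $\sigma_1$-data of one $\adds{\T}$-interpretation and the $\sigma_2$-data of another and glue them into a legal $\adds{\T}$-interpretation. For stable infiniteness, from a model of a $\Sigma_{f}^{2}$-formula I extract the realized $\sigma_1$- and $\sigma_2$-literal types, enlarge the $\sigma_1$-side to be infinite using stable infiniteness of $\T$, pad $\sigma_2$ with fresh points, and glue; the reverse direction is immediate, as a pure $\sigma_1$-formula's models extend and restrict freely in $\sigma_2$. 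For convexity, an entailment $\vDash_{\adds{\T}}\varphi_1\wedge\varphi_2\rightarrow\bigvee_i x_i=y_i$ must, by gluing, already hold using only the $\sigma_1$-equalities (entailed by $\varphi_1$) or only the $\sigma_2$-equalities (entailed by $\varphi_2$); the $\sigma_1$-case reduces to $\T$-convexity, and the $\sigma_2$-case is settled by the free sort, where an equality is forced iff it lies in the equality-closure of $\varphi_2$. The backward direction is obtained by applying $\adds{\T}$-convexity to pure $\sigma_1$-inputs.

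For (strong) finite witnessability the key device is substitution over $\sigma_2$-atom patterns. Writing $B_1,\dots,B_l$ for the $\sigma_2$-atoms of $\varphi$ and letting $t$ range over the truth assignments induced by equivalence relations on the $\sigma_2$-variables, one has the tautology $\varphi\equiv\bigvee_t(\beta_t\wedge\varphi_t)$, where $\beta_t$ is the corresponding $\sigma_2$-arrangement and $\varphi_t$ is the pure $\Sigma_{f}$-formula got by substituting $t$. Given a strong witness $\wit$ for $\T$, I would set $\wit'(\varphi):=\big(\bigvee_t(\beta_t\wedge\wit(\varphi_t))\big)\wedge(w=w)$ with $w$ a fresh $\sigma_2$-variable (so $\sigma_2$ is always named) and disjoint fresh $\sigma_1$-variables per disjunct. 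The $\T$-equivalence clause follows from that of $\wit$ by pushing the existentials through the disjunction; for the covering clause, a satisfied disjunct $t^{*}$ lets me invoke the strong-witness property of $\wit$ on $\wit(\varphi_{t^*})\wedge\delta_{V_1}$ on the $\sigma_1$-side and shrink $\sigma_2$ to its named values on the $\sigma_2$-side, sending the unconstrained fresh $\sigma_1$-variables of the unused disjuncts to already-named elements. For the converse I would take a strong witness $\wit'$ for $\adds{\T}$ and set $\wit(\varphi):=\bigvee_t \wit'(\varphi)|_{t}$, the disjunction over all consistent patterns $t$ on the (necessarily fresh) $\sigma_2$-variables of $\wit'(\varphi)$, where $\wit'(\varphi)|_{t}$ substitutes the truth values prescribed by $t$ for the $\sigma_2$-atoms; existentially eliminating $\sigma_2$ over a large enough domain realizes every pattern, giving the equivalence, and covering transfers by realizing the relevant pattern as a $\sigma_2$-arrangement and restricting to $\sigma_1$. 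The main bookkeeping point here is the domain-covering clause: I must ensure the fresh variables of the unused disjuncts never force a new domain element, which is precisely why they can be assigned to named elements.

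Finally, for the minimal model function the backward direction is clean: for a pure $\Sigma_{f}$-formula $\varphi$ every minimal $\adds{\T}$-model has $\sigma_2$-size $1$, so $\minmod_{\adds{\T}}(\varphi)$ is a single pair $(a^{*},1)$, and projecting onto $\sigma_1$ gives a computable minimal model function for $\T$. The forward direction again uses $\varphi\equiv\bigvee_t(\beta_t\wedge\varphi_t)$: each consistent pattern $t$ contributes minimal $\sigma_2$-size $\max(c_t,1)$ (the number of $\sigma_2$-classes) paired with the $\sigma_1$-minima in $\minmod_{\T}(\varphi_t)$, and $\minmod_{\adds{\T}}(\varphi)$ is the antichain of minimal elements among all these pairs. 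I expect this forward direction to be the main obstacle: since the minimal model function is only guaranteed meaningful on satisfiable inputs, I must determine \emph{which} patterns $t$ (equivalently, which $\sigma_2$-cardinalities) are actually realizable before trusting $\minmod_{\T}(\varphi_t)$, lest I emit a non-achievable pair and violate the requirement that every returned size be realized by a genuine model. This is the step where the argument must lean carefully on the exact guarantees (and totality) of the minimal model function, and I would isolate it as the crux; the remaining verifications—that the emitted set is an antichain and that every $\adds{\T}$-model of $\varphi$ dominates one of its members—then follow routinely from the independence of the two sorts.
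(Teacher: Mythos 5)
The paper never proves this lemma: it is imported wholesale from \cite{LPAR,CADE}, and no argument for it appears in the body or the appendix, so there is no internal proof to compare yours against. Judged on its own terms, your treatment of stable infiniteness, convexity, strong finite witnessability, and the right-to-left direction for the minimal model function is sound and follows the route one would expect (sort-purity of atoms, gluing of the two independent sort components, case-splitting over $\sigma_2$-arrangements $\beta_t$); the small verifications you defer (e.g.\ that $\minmod_{\adds{\T}}(\varphi)$ is forced to be the singleton $\{(m,1)\}$ on a pure $\sigma_1$-formula, or that the fresh variables of unused disjuncts can be absorbed into named elements) all go through.

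The genuine gap is exactly where you place it, and it is not bookkeeping: the left-to-right direction for computability of the minimal model function cannot be closed by your construction, and indeed appears to fail outright for general undecidable $\T$. Your candidate output must include the pair $(\min\minmod_{\T}(\varphi_t),c_t)$ only for those arrangements $t$ with $\varphi_t$ actually $\T$-satisfiable; since $\minmod_{\T}$ is total and returns arbitrary values on unsatisfiable inputs, an unfiltered garbage pair can violate the realizability clause of the definition or sit below, and hence evict, a legitimate pair from the antichain, while filtering amounts to deciding $\T$-satisfiability, which the hypotheses do not provide (the paper's own \Cref{rem:tighteq} stresses that a computable minimal model function alone does not yield decidability). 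To see that no repair is possible, let $\T'$ be $\T_1$ augmented with axioms forcing all models to be infinite: its minimal model function is constantly $\{\aleph_0\}$, hence computable, yet quantifier-free satisfiability remains undecidable ($\cycle{n}(x)\wedge\cycle{2}(y)$ is $\T'$-satisfiable iff $n\notin S$). For $\sigma_2$-variables $v_1,v_2$, any correct value of $\minmod_{\adds{\T'}}$ on $(\cycle{n}(x)\wedge\cycle{2}(y)\wedge v_1=v_2)\vee v_1\neq v_2$ contains a pair with $\sigma_2$-component $1$ precisely when $n\notin S$, so a computable minimal model function for $\adds{\T'}$ would decide $S$. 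You should therefore either restrict the minimal-model clause of the lemma to the right-to-left implication or add a decidability hypothesis for that direction; fortunately the right-to-left implication, which you prove cleanly by projection, is the only one the paper actually uses (to transfer \emph{non}-computability from $\T_i$ to $\adds{\T_i}$).
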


Thus, for the many-sorted case,
we simply use the theories 
$\adds{\T_1}$,
$\adds{\T_2}$,
$\adds{\T_3}$, and
$\adds{\T_4}$.
For example, $\adds{\T_1}$ has the exact same axiomatization
as $\T_1$. It just has the additional sort $\sigma_2$ in its signature.
Thus, every $\T_1$-interpretation can be turned into
a $\adds{\T_1}$-interpretation by simply assigning any non-empty domain
to sort $\sigma_2$.

\begin{corollary}
For each $i\in [4]$, $\adds{\T_i}$ is strongly finitely witnessable
and does not have a computable minimal model function.
Further, $\adds{T_i}$ is stably infinite iff $i\in\{1,2\}$
and is convex iff $i\in\{1,3\}$.
\end{corollary}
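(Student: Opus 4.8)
The plan is to deduce the entire statement from the transfer lemma stated immediately above, which asserts that each of the four properties in question—strong finite witnessability, computability of the minimal model function, stable infiniteness, and convexity—holds for a $\Sigma_f$-theory $\T$ with respect to $\{\s_1\}$ if and only if it holds for $\adds{\T}$ with respect to $\{\s_1,\s_2\}$. The crucial feature is that this is a \emph{biconditional}: it transports not only positive results but also their negations from $\T_i$ to $\adds{\T_i}$. Thus the whole corollary reduces to quoting, for each of the four properties, the already-established verdict for the single-sorted theories $\T_1,\ldots,\T_4$, and then applying the transfer lemma sort-by-sort.

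Concretely, I would proceed property by property. For strong finite witnessability, each $\T_i$ has it with respect to $\{\s_1\}$ by \Cref{lem:TiSFW}, so the transfer lemma gives that $\adds{\T_i}$ has it with respect to $\{\s_1,\s_2\}$. For the minimal model function, each $\T_i$ fails to have a computable one by \Cref{lem:TiCMMF}; since the transfer lemma is an iff, $\adds{\T_i}$ likewise fails to have one. This already establishes the first sentence of the corollary for all $i\in[4]$.

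For the two remaining classifications I would first assemble the exact characterizations for the $\T_i$. Combining \Cref{lem:TiSinf} (stable infiniteness for $i\in\{1,2\}$) with the subsequent lemma showing $\T_3,\T_4$ are not stably infinite yields that $\T_i$ is stably infinite \wrt $\{\s_1\}$ exactly when $i\in\{1,2\}$; the transfer lemma then hands us the same equivalence for $\adds{\T_i}$ \wrt $\{\s_1,\s_2\}$. Identically, combining \Cref{lem:TiHorn} (convexity for $i\in\{1,3\}$) with the lemma showing $\T_2,\T_4$ are not convex gives that $\T_i$ is convex iff $i\in\{1,3\}$, and transfer propagates this to $\adds{\T_i}$.

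There is no substantive obstacle here: every piece of content lives in the single-sorted lemmas, and the corollary is pure bookkeeping over them. The only point demanding care is the alignment of the sort parameters—each application of the transfer lemma must read off the property for $\T_i$ \wrt $\{\s_1\}$ and conclude it for $\adds{\T_i}$ \wrt $\{\s_1,\s_2\}$—together with the observation that the biconditional nature of the lemma is precisely what licenses transporting the two negative classifications (non-computability of the minimal model function, and the failure of stable infiniteness or convexity in the appropriate cases) rather than just the positive ones.
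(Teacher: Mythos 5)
Your proposal is correct and matches the paper's (implicit) argument exactly: the corollary is stated as an immediate consequence of the transfer lemma for $\adds{\T}$ combined with the single-sorted lemmas for $\T_1,\ldots,\T_4$, with the biconditional form of the lemma carrying over the negative verdicts as well. Nothing further is needed.
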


The results regarding Unicorn 2.0 theories over non-empty signatures
are summarized in \Cref{tab-summary}.
For each theory, we list 
whether it is defined over a one-sorted signature,
whether it is stably infinite, and whether it is convex.

\begin{table}[t]
\centering
\begin{tabular}{|c||c|c|c|}\hline
Theory & \onesortedsig  & \stainf & \convex \\\hline
$\T_1$ & \yes & \yes & \yes \\
$\T_2$ & \yes & \yes  & \no \\
$\T_3$ & \yes & \no & \yes \\
$\T_4$ & \yes & \no  & \no \\
$\adds{\T_1}$ & \no  & \yes & \yes \\
$\adds{\T_2}$ & \no  & \yes  & \no \\
$\adds{\T_3}$ & \no  & \no & \yes \\
$\adds{\T_4}$ & \no  & \no  & \no \\\hline
\end{tabular}
\vspace{2mm}
\caption{Unicorn 2.0 Theories in Non-empty Signatures.}
\label{tab-summary}
\vspace{-6mm}
\end{table}

\subsection{Back to \Cref{thm:casalrasga}}
\label{sec:backtoorig}
We have now finished the detour to Unicorn 2.0 theories,
and are able to come back to our original question $(\ast)$ from the beginning of the section, and provide a negative answer:
decidability is required for the second direction of \Cref{thm:casalrasga}, or in other words, the converse of \Cref{thm-unicorn3} does not hold without further assuming decidability.

\begin{corollary} \label{cor-strong-polite-not-shiny}
There are theories that are strongly polite but are not shiny.
\end{corollary}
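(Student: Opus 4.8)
The plan is to exhibit a single explicit theory witnessing the claim, drawn from the family $\T_1,\ldots,\T_4$ constructed earlier in this section. The key observation is that strong politeness and shininess overlap in their demands: both require strong finite witnessability together with smoothness, and they diverge only in that shininess additionally insists on stable finiteness and, crucially, on a \emph{computable} minimal model function. Since every $\T_i$ is already known to be strongly finitely witnessable (\Cref{lem:TiSFW}) yet to lack a computable minimal model function (\Cref{lem:TiCMMF}), each $\T_i$ automatically fails to be shiny. The only remaining work is therefore to upgrade one of them to full strong politeness.

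For this I would select $\T_1$, which by \Cref{lem:TiSinf} is stably infinite. Its signature $\Sigma_f$ is countable, consisting of a single sort and a single unary function symbol, so \cite[Theorem~2]{nounicorns} is applicable: any strongly finitely witnessable and stably infinite theory over a countable signature is strongly polite. Invoking this together with \Cref{lem:TiSFW} shows that $\T_1$ is strongly polite. On the other hand, shininess would require a computable minimal model function, which $\T_1$ does not possess by \Cref{lem:TiCMMF}. Hence $\T_1$ is strongly polite but not shiny, which establishes the corollary. (The same argument applies verbatim to $\T_2$ and, via the lifting lemma, to $\adds{\T_1}$ and $\adds{\T_2}$ in the many-sorted setting.)

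At this stage there is no genuine obstacle left to overcome, as all the substantive model-theoretic content has been discharged in the preceding lemmas; the corollary is essentially a bookkeeping step. The one point that warrants care is the role of the bridging result from \cite{nounicorns}: strong finite witnessability by itself does not entail smoothness, so it is indispensable that we pick a \emph{stably infinite} member of the family, namely $\T_1$ or $\T_2$ rather than $\T_3$ or $\T_4$, and that the signature be countable, so that the cited theorem genuinely supplies smoothness and hence strong politeness for free.
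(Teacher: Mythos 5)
Your proposal is correct and matches the paper's own proof: both use the stably infinite Unicorn 2.0 theories $\T_1$ and $\T_2$, note they cannot be shiny for lack of a computable minimal model function, and obtain strong politeness via \cite[Theorem~2]{nounicorns} from stable infiniteness plus strong finite witnessability over a countable signature.
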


\begin{proof}
For example, stably infinite Unicorn 2.0 theories $\T_1$ and $\T_2$ are such.
Indeed, no Unicorn 2.0 theory can be shiny.
Further, every such theory that is stably infinite must be strongly polite, as \cite{nounicorns} has shown that stable infiniteness and strong finite witnessability imply smoothness.
\end{proof}

By \Cref{thm:casalrasga}, this also means  there are undecidable strongly polite theories.

\begin{corollary}
    There are theories that are strongly polite but are not decidable.
\end{corollary}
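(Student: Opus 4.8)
The plan is to derive this immediately from \Cref{cor-strong-polite-not-shiny} together with \Cref{thm:casalrasga}, using the latter in its contrapositive form. The crucial observation is that \Cref{thm:casalrasga} only guarantees the equivalence of strong politeness and shininess under the additional hypothesis of decidability; hence any theory that is strongly polite without being shiny must violate that hypothesis, i.e.\ it must be undecidable. So the entire content of the corollary is already latent in the preceding results, and the task is just to extract it.

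Concretely, I would fix one of the theories furnished by \Cref{cor-strong-polite-not-shiny}, for instance $\T_1$. I already know two things about it: it is strongly polite with respect to the finite set $S=\{\sigma_1\}$ (being a stably infinite Unicorn 2.0 theory, it is strongly finitely witnessable and, by \cite{nounicorns}, also smooth), and it is not shiny (as a Unicorn 2.0 theory it has no computable minimal model function). Now suppose, for contradiction, that $\T_1$ were decidable. Since $S$ is finite, \Cref{thm:casalrasga} would then apply and, from decidability together with strong politeness, would conclude that $\T_1$ is shiny---contradicting the fact that $\T_1$ is not shiny. Therefore $\T_1$ is undecidable, and it is the desired strongly polite but undecidable theory.

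I do not anticipate any genuine obstacle, since the statement is a short consequence of results already established; the work was done in building the Unicorn 2.0 theories of \Cref{sec:uni2}. The only point requiring care is to confirm that the set of sorts is finite, so that \Cref{thm:casalrasga} is actually applicable; this holds for $\T_1$, whose single sort is $\sigma_1$. The same argument applies verbatim to $\T_2$, and indeed to any of the stably infinite strongly polite theories produced in \Cref{sec:uni2}, so the proof in fact exhibits several concrete examples rather than merely asserting existence.
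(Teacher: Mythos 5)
Your argument is correct and is exactly the paper's: the paper derives this corollary from \Cref{cor-strong-polite-not-shiny} via \Cref{thm:casalrasga}, concluding that a strongly polite theory that is not shiny (such as $\T_1$ or $\T_2$, each over the single sort $\sigma_1$) cannot be decidable. Your write-up merely spells out the contrapositive step and the finiteness-of-sorts check that the paper leaves implicit.
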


\subsection{Completing the picture}
\label{sec:completepic}
From \cite{CasalRasga2}, we know that there are no
decidable theories that are strongly polite but not shiny (or vice versa).
From \Cref{thm-shiny-imp-decidable}, 
we also know that there are no theories that are shiny but undecidable.
From \Cref{cor-strong-polite-not-shiny}, we know that there are undecidable theories
that are strongly polite but not shiny.
What about the other possible combinations of 
decidability, shininess, and strong politeness?

We conclude this section by constructing theories for all
the other combinations.
The signatures for our theories are given in \Cref{tab:signatures}.
The theories themselves are axiomatized in \Cref{fig:theories}.

We start with
a decidable, strongly polite and shiny theory.
For that, 
we simply take the empty theory $\tuf$ over the empty
one-sorted signature $\Sigma_1$.
This is the theory that is axiomatized by the empty set of axioms.
The congruence closure algorithm (see, e.g., \cite{DBLP:series/txtcs/KroeningS16}) decides it;
and it was proven to be strongly polite in \cite{RanRinZar,JB10-LPAR}. 
From \Cref{thm:casalrasga} it is also shiny.

To obtain a theory that is decidable but neither shiny nor strongly polite,  consider
 $\T_{=1}$ (originally introduced in \cite{CADE}),
of structures with a single element.
It is decidable,
as it satisfies all equalities and no disequalities.
It is clearly not smooth, and so it is neither strongly polite
nor shiny.

Finally, for an undecidable theory that is neither shiny nor strongly polite, we use $\Tf$, defined over
signature $\Spn$.
It has nullary predicates $P_1,P_2,\ldots$, such that
$P_n$ holds iff $h(n)=1$, for some non-computable function $h$,
and all its models are singletons.
$\Tf$ is undecidable, otherwise we could compute $h$.
Also, it is not smooth, so it is neither strongly polite nor shiny.

These results are summarized as a Venn diagram in 
\Cref{venn-all}.
The left circle corresponds to strongly polite theories,
the right circle to shiny theories, and the middle circle to
decidable theories.
Notice that the regions that correspond
to decidable theories that are strongly polite but not shiny or vice versa
are hatched, marking that they are empty, citing \cite{CasalRasga2}.
Similarly, the region that corresponds to shiny theories
that are not decidable is also hatched, citing \Cref{thm-shiny-imp-decidable}.
All other regions are feasible, and have a white background.
Each such region lists the evidence for its inhabitance.
Notice that $\Tf$ is outside all the circles.

\begin{remark}
\label{rem:tighteq}
The examples of this section are useful in order to show
that in \Cref{thm-shiny-imp-decidable} 
all three properties that are assumed are needed to ensure decidability.
Indeed,
The theory $\Tf$ is stably finite and has a computable minimal model function (every model has size 1), but it is not decidable. 
Further, 
the theories $\T_1$ and $\T_2$ from \Cref{sec:nonemptyunicorn2.0} are stably infinite and stably finite but not decidable.
Finally, Peano arithmetic is stably infinite and has a computable minimal model function (every model is infinite), but it is not decidable.
These examples make \Cref{thm-shiny-imp-decidable} more surprising than it may at first seem.

\end{remark}

\begin{figure}[t]
\vspace{-4mm}
\centering
\begin{mdframed}
\centering
    \begin{tikzpicture}[scale=0.65]
\def\firstcircle{(0,1.7) coordinate (a) circle (2.5cm)}
\def\thirdcircle{(1,0) coordinate (c) circle (2.5cm)}
\def\secondcircle{(-1,0) coordinate (d)  circle (2.5cm)}
\begin{scope}
\clip \secondcircle;
\fill[
       pattern={Lines[
                  distance=2.2mm,
                  angle=45,
                  line width=0.6mm
                 ]},
        pattern color=red!15
       ] \firstcircle;
    \end{scope}
\begin{scope}
\fill[
       pattern={Lines[
                  distance=2.2mm,
                  angle=45,
                  line width=0.6mm
                 ]},
        pattern color=red!15
       ] \thirdcircle;
    \end{scope}
    \begin{scope}
\clip \thirdcircle;
\clip \firstcircle;
\fill[white] \secondcircle;
    \end{scope}
\draw[line width=0.25mm,color=blue!50] \firstcircle;
\draw[line width=0.25mm,color=blue!50] \secondcircle;
\draw[line width=0.25mm,color=blue!50] \thirdcircle;
\node[label={{\color{blue!50}Strongly polite}}] (B) at (-4.8,-2) {};
\node[label={{\color{blue!50}Shiny}}] (B) at (4,-2) {};
\node[label={{\color{blue!50}Decidable}}] (B) at (-3.2,3) {};
\node[label={$\tuf$}] (B) at (0,0) {};
\node[label={\cite{CasalRasga2}}] (B) at (-1.8,0.9) {};
\node[label={\cite{CasalRasga2}}] (B) at (1.8,0.9) {};
\node[label={$\T_{= 1}$}] (B) at (0,2.6) {};
\node[label={Thm. \ref{thm-shiny-imp-decidable}}] (B) at (0,-2) {};
\node[label={$\Tf$}] (B) at (4,1.8) {};
\node[label={$\T_1,\T_2$}] (B) at (-2.2,-1.5) {};
\node[label={Thm. \ref{thm-shiny-imp-decidable}}] (B) at (2.2,-1.5) {};

    \end{tikzpicture}
\end{mdframed}
    \caption{A Venn diagram summarizing the feasible combinations
    of strong politeness, shininess and decidability.
    }
    \label{venn-all}
    \vspace{-6mm}
\end{figure}
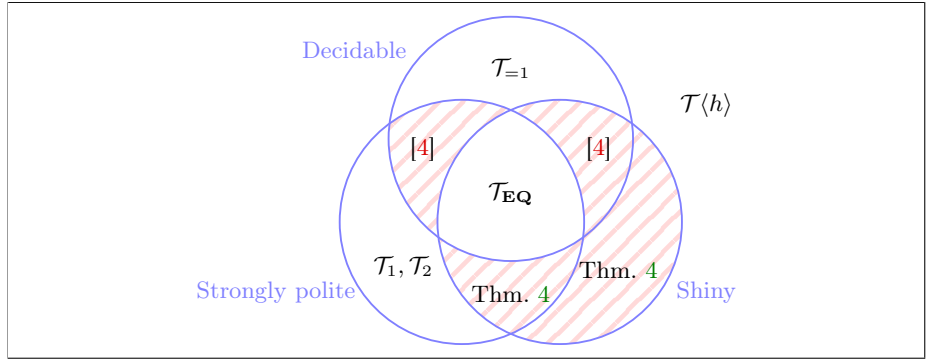

\section{Additive politeness}
\label{sec:add}

In \cite{DBLP:journals/jar/ShengZRLFB22}, the SMT-LIB theory of datatypes  was proven
to be strongly polite.
It was observed there, however, that one of the challenges
in proving strong politeness lies in the need to consider
all possible arrangements, which is not needed
when proving (weak) politeness.
In order to make the proof of strong politeness more feasible,
that paper introduced the notion of {\em additive witnesses}.

\begin{definition}[\cite{DBLP:journals/jar/ShengZRLFB22}, Definition 10]
\label{additiveDef}
Let $f:\qf{\Sigma}\rightarrow \qf{\Sigma}$.
Let $S\subseteq\S_{\Sigma}$.
We say that $f$ is \emph{$S$-additive for $\T$} if $f(f(\varphi)\wedge\psi)$ and $f(\varphi)\wedge\psi$ 
are $\T$-equivalent and have the same set of $S$-sorted variables
for every $\varphi,\psi\in \qf{\Sigma}$, provided that 
$\psi$ is a conjunction of flat literals such that
every term in $\psi$ is a variable whose sort is in $S$.
When $\T$ is clear from the context, we say that $f$ is $S$-additive.
We say that $\T$ is {\em additively finitely witnessable \wrt\ $S$} if there exists a witness for $\T$ \wrt\ $S$ which is $S$-additive.
$\T$ is said to be {\em additively polite \wrt\ $S$} if it is smooth and
additively finitely witnessable \wrt\ $S$.
\end{definition}

It has been proven in \cite{DBLP:journals/jar/ShengZRLFB22} that
additively polite theories are strongly polite.
The converse, however,  does not hold.
To show this, we
 define a theory 
 that is strongly polite but not additively polite, 
over signature $\Sp$ from \Cref{tab:signatures}.
It has a single sort $\sigma_1$, and one unary predicate symbol $P$.
The theory, called $\tnn$ and axiomatized in \Cref{fig:theories},
admits those $\Sp$-interpretations $\A$
where for each $n$, if there are at least $n$ elements in $P^{\A}$, then there are at least $2n$ elements in $\sigma_1^{\A}$.

We prove that $\tnn$ is strongly polite but not additively polite.
We do the former by proving shininess and then using \Cref{thm-unicorn3}, and the latter by reasoning about cardinalities of models
of $\tnn$ that satisfy formulas in which  $P$ occurs.

\begin{restatable}{lemma}{tnnnsp} \label{lem-tnn-polite}
    The theory $\tnn$ is strongly polite.
\end{restatable}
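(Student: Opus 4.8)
The plan is to prove that $\tnn$ is strongly polite by establishing that it is shiny and then invoking \Cref{thm-unicorn3}, which tells us shininess implies strong politeness. To show shininess \wrt $\{\sigma_1\}$, I must verify three things: smoothness, stable finiteness, and the existence of a computable minimal model function, all \wrt $\{\sigma_1\}$.

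First I would analyze the structure of $\tnn$-interpretations. Recall that $\A$ is a $\tnn$-interpretation iff for every $n$, whenever $|P^{\A}| \geq n$ we have $|\sigma_1^{\A}| \geq 2n$; equivalently, $|\sigma_1^{\A}| \geq 2\,|P^{\A}|$ (reading this appropriately when $P^{\A}$ is infinite, where the condition forces $\sigma_1^{\A}$ to be infinite as well). For \textbf{smoothness}, given a $\tnn$-interpretation $\A\models\varphi$ and a target cardinal $\kappa \geq |\sigma_1^{\A}|$, I would enlarge $\sigma_1^{\A}$ to size $\kappa$ by adding fresh elements \emph{outside} $P^{\A}$; since $P^{\A}$ is unchanged, the constraint $|\sigma_1^{\A}| \geq 2\,|P^{\A}|$ is preserved (the domain only grew), and the quantifier-free $\varphi$ remains satisfied because we only added elements disjoint from those interpreting the variables of $\varphi$. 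For \textbf{stable finiteness}, given $\A\models\varphi$, I would take the substructure generated by the (finitely many) values of $\vars(\varphi)$ under $\A$; since the signature has no function symbols, this substructure has finite domain $D\subseteq\sigma_1^{\A}$ with $|D|\leq|\sigma_1^{\A}|$. The subtlety is that restricting to $D$ might violate the axiom: if many $P$-elements but few total elements land in $D$, the ratio could break. The fix is to ensure that when forming the finite model I keep at least twice as many non-$P$ elements as $P$-elements; since the original model satisfied the axiom, I can select a finite substructure respecting the $2{:}1$ ratio while still containing all witnesses of $\varphi$, possibly adding a bounded number of fresh non-$P$ elements, keeping the domain finite and no larger than $|\sigma_1^{\A}|$.

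For the \textbf{computable minimal model function}, the key observation is that for a quantifier-free $\varphi$, its $\tnn$-satisfiability and the minimal model size depend only on finitary combinatorial data extractable from $\varphi$. A satisfiable arrangement of the variables of $\varphi$ fixes how many distinct elements must exist and how many of them must (or must not) satisfy $P$; from a lower bound $k$ on the number of required $P$-elements, the axiom forces the $\sigma_1$-domain to have size at least $\max(v, 2k)$, where $v$ is the number of distinct elements forced by the arrangement. Minimizing over all satisfiable arrangements of $\vars(\varphi)$ (a finite search, decidable since the empty-predicate-free fragment with a single unary predicate is decidable) yields the minimal domain cardinality, and I would output the corresponding singleton set $\{\textbf{n}\}$ with $\textbf{n}(\sigma_1)$ equal to that value. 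Since $\Sp$ is a finite signature with a single unary predicate and no functions, $\tnn$ is decidable, so \Cref{DECandSFWimpliesCMMF} could alternatively be invoked once strong finite witnessability is shown; but computing $\minmod$ directly via the arrangement search is cleaner.

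The main obstacle I anticipate is \textbf{stable finiteness}: naively restricting to the subinterpretation generated by $\vars(\varphi)$ can violate the $|\sigma_1|\geq 2|P|$ constraint, so the argument must carefully control the $P$-to-non-$P$ ratio when extracting a finite model, adding a controlled number of fresh non-$P$ elements without exceeding $|\sigma_1^{\A}|$ and without disturbing $\varphi$. Once the three shininess ingredients are in place, \Cref{thm-unicorn3} immediately yields strong politeness, completing the proof.
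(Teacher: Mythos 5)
Your proposal is correct and follows essentially the same route as the paper: it establishes smoothness, stable finiteness, and a computable minimal model function for $\tnn$ and then invokes \Cref{thm-unicorn3}, with the stable-finiteness step handled exactly as in the paper by padding with extra non-$P$ elements to preserve the $2{:}1$ ratio. The only cosmetic difference is in the minimal model function: the paper brute-forces all interpretations of size at most $2\cdot|\vars(\varphi)|$ (a bound supplied by the stable-finiteness argument), whereas you search over arrangements together with $P$-assignments and take $\max(v,2k)$; both computations are valid.
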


\begin{restatable}{lemma}{tnnnnadp} \label{lem-tnn-add}
    The theory $\tnn$ is not additively polite.
\end{restatable}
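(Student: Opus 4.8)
The plan is to prove the stronger statement that $\tnn$ is not additively finitely witnessable; since additive politeness requires an additive witness (\Cref{additiveDef}), this immediately gives the lemma, and it is the substantive part, as $\tnn$ is already smooth by \Cref{lem-tnn-polite}. So I would assume for contradiction that $f$ is an $S$-additive witness for $\tnn$ with $S = \{\sigma_1\}$, and exploit the tension between two opposing requirements. On the one hand, the axioms $\psi_n^P \rightarrow \psi_{\geq 2n}$ force $|\sigma_1^{\A}| \geq 2\,|P^{\A}|$ in every $\tnn$-interpretation $\A$, so a formula asserting that $m$ distinct elements satisfy $P$ can be witnessed only by a formula carrying at least $2m$ variables of sort $\sigma_1$ (the witness guarantee demands a model whose whole domain is the set of values of its $\sigma_1$-variables). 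On the other hand, additivity forbids $f$ from introducing any fresh $\sigma_1$-variable when it is re-applied after conjoining a block of flat $S$-literals. Forcing more and more elements into $P$ must therefore eventually overflow this fixed variable budget.

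Concretely, fix the $\tnn$-valid formula $\varphi := (x_1 = x_1)$ and set $N := |\vars_{\sigma_1}(f(\varphi))|$. Choose any $m > N$ and let $\psi := \bigwedge_{i=1}^{m} P(y_i) \wedge \bigwedge_{1 \leq i < j \leq m} \neg(y_i = y_j)$, where $y_1, \ldots, y_m$ are variables not occurring in $f(\varphi)$. Then $\psi$ is a conjunction of flat literals all of whose terms are $\sigma_1$-variables, so the $S$-additivity hypothesis applies to the pair $(\varphi, \psi)$: the formulas $f(f(\varphi) \wedge \psi)$ and $f(\varphi) \wedge \psi$ have the same set of $\sigma_1$-variables, namely $\vars_{\sigma_1}(f(\varphi)) \cup \{y_1, \ldots, y_m\}$, of size exactly $N + m$.

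Next I would check that $f(\varphi) \wedge \psi$ is $\tnn$-satisfiable, which is what triggers the witnessing guarantee. Consider the $\tnn$-interpretation $\C$ with $|\sigma_1^{\C}| = 2m$ and exactly $m$ elements in $P^{\C}$; it satisfies every axiom $\psi_n^P \rightarrow \psi_{\geq 2n}$. Interpreting $y_1, \ldots, y_m$ as the $m$ distinct elements of $P^{\C}$ satisfies $\psi$. Since $\varphi$ is $\tnn$-valid, the equivalence between $\varphi$ and $\Exists{\overarrow{x}} f(\varphi)$ (with $\overarrow{x} = \vars(f(\varphi)) \setminus \vars(\varphi)$) makes $\Exists{\overarrow{x}} f(\varphi)$ valid, so $\C$ admits an assignment to the variables of $f(\varphi)$—all of sort $\sigma_1$ and chosen disjoint from $y_1, \ldots, y_m$—that satisfies $f(\varphi)$. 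Hence $\C \models f(\varphi) \wedge \psi$, and consequently $f(f(\varphi) \wedge \psi)$ is $\tnn$-satisfiable. The witness property, applied to the input $f(\varphi) \wedge \psi$, then supplies a $\tnn$-interpretation $\A$ with $\A \models f(\varphi) \wedge \psi$ and $\sigma_1^{\A} = \vars_{\sigma_1}(f(f(\varphi) \wedge \psi))^{\A}$, whence $|\sigma_1^{\A}| \leq N + m$ by the variable count from the previous paragraph. But $\A \models \psi$ puts $m$ distinct elements into $P^{\A}$, so $\A \models \psi_m^P$ and the axiom $\psi_m^P \rightarrow \psi_{\geq 2m}$ forces $|\sigma_1^{\A}| \geq 2m$. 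Combining the two bounds gives $2m \leq N + m$, i.e.\ $m \leq N$, contradicting $m > N$.

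The step I expect to require the most care is the satisfiability check for $f(\varphi) \wedge \psi$: I must ensure that the variables introduced by the \emph{arbitrary, unknown} witness $f$ on $\varphi$ can be assigned consistently with the forced interpretation of $y_1, \ldots, y_m$ inside a single $\tnn$-model. Choosing $\varphi$ to be $\tnn$-valid and keeping the $y_i$ disjoint from $\vars(f(\varphi))$ is exactly what decouples the two blocks of variables and makes this possible; the remainder is the counting argument, which is routine once additivity has pinned down the variable budget.
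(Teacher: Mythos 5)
Your proposal is correct and follows essentially the same route as the paper's proof: both fix a base formula (the paper uses $\top$, you use $x_1=x_1$), conjoin a block of $m$ fresh pairwise-distinct variables forced into $P$, use additivity to cap the $\sigma_1$-variable count of $\wit(\wit(\varphi)\wedge\psi)$ at $N+m$, and derive the contradiction $2m\le N+m$ from the axiom $\psi_m^P\rightarrow\psi_{\geq 2m}$. Your write-up is somewhat more careful than the paper's in explicitly verifying the $\tnn$-satisfiability of $\wit(\varphi)\wedge\psi$ before invoking the witnessing guarantee, but the underlying argument is the same.
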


Although strong politeness does not imply additive politeness in general, this implication holds for theories over \emph{algebraic signatures}, which are signatures containing no predicate symbols (except equality)~\cite{Hodges}. 

\begin{restatable}{theorem}{thmaddwit} \label{theo:addwit}
    Let $\T$ be a theory over a countable algebraic signature, and let $S$ be a set of sorts. If $\T$ is strongly polite with respect to $S$, then it is additively polite with respect to $S$.
\end{restatable}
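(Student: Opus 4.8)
The plan is to take an arbitrary strong witness \wit\ for \T\ \wrt\ $S$ and transform it into one that is additionally $S$-additive, without destroying strong finite witnessability. Recall that over an algebraic signature, a conjunction $\psi$ of flat literals whose terms are all $S$-sorted variables is just an arrangement-like conjunction of equalities and disequalities between variables (there are no nontrivial terms to flatten, since there are no non-equality predicates). The key observation is that additivity is a statement about how \wit\ behaves when applied to $\wit(\varphi)\wedge\psi$ for such simple $\psi$: we need $\wit(\wit(\varphi)\wedge\psi)$ to be \T-equivalent to $\wit(\varphi)\wedge\psi$ and to share the same $S$-sorted variables. The natural strategy is to define a new witness $\wit'$ that on input $\chi$ first decomposes $\chi$ (conceptually) into a ``core'' part and an appended conjunction of $S$-variable (dis)equalities, and guarantees that reapplying it is inert on the appended part.

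First I would make precise what a strong witness gives us: by definition, for each $\varphi$, the formula $\wit(\varphi)$ introduces fresh $S$-sorted variables that ``name'' all domain elements of sorts in $S$ in some satisfying interpretation, and this property is preserved under conjunction with any arrangement $\delta_V$ over $S$-sorted variables. The idea is that, since the signature is algebraic, conjoining a flat $S$-literal conjunction $\psi$ to $\wit(\varphi)$ only imposes equalities/disequalities among variables already present or freshly named, and does not force the introduction of genuinely new anonymous elements beyond what an arrangement does. I would then define $\wit'(\chi)$ by applying \wit\ once and then closing off: intuitively $\wit'(\chi) = \wit(\chi)$, but engineered so that variables appearing in the ``additive tail'' are treated as already-witnessed. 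Concretely, one wants to show that for $\chi = \wit(\varphi)\wedge\psi$ with $\psi$ as in \Cref{additiveDef}, the set $\vars_S(\wit(\varphi)\wedge\psi)$ already equals $\vars_S(\wit(\wit(\varphi)\wedge\psi))$ up to \T-equivalence, because the strong witness property already ensures every $S$-element is named by a variable in $\wit(\varphi)\wedge\delta_V$ for the arrangement $\delta_V$ entailed by $\psi$.

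The main technical step, and the place I expect the real work to be, is verifying the two additivity requirements simultaneously: $\T$-equivalence of $\wit'(\wit'(\varphi)\wedge\psi)$ with $\wit'(\varphi)\wedge\psi$, \emph{and} equality of their $S$-sorted variable sets. Equivalence is the easier half, following from the witness-equivalence clause $\varphi \equiv_\T \Exists{\overarrow{x}}\wit(\varphi)$ together with the fact that $\psi$ binds no fresh variables. The delicate half is the variable-set condition: I must ensure the second application of the witness does not introduce \emph{new} $S$-sorted variables when its input already has the ``all elements named'' property. This is exactly where algebraicity is essential — with no predicate symbols beyond equality, $\psi$ cannot force any new element into existence (it can only equate or separate named ones), so a strong witness applied to an already-saturated formula can be taken to be the identity on $S$-variables. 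I would formalize this by arguing that the arrangement $\delta_V$ induced by $\psi$ on $V = \vars_S(\wit(\varphi))\cup\vars_S(\psi)$ combines with the strong-witness guarantee to produce a model in which $\s^\A = \vars_\s(\wit(\varphi)\wedge\delta_V)^\A$, so no further witnessing is needed.

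Finally I would assemble the pieces: smoothness of \T\ \wrt\ $S$ is assumed (strong politeness includes it) and is untouched by the construction, so $\wit'$ being an $S$-additive strong witness yields additive finite witnessability, and hence additive politeness, completing the proof. The countability hypothesis on the signature is used to invoke \Cref{LowenheimSkolem} where needed to keep witnessing interpretations countable, matching the ambient framework; I would flag that it is the algebraic (predicate-free) hypothesis, not countability, that does the essential lifting, since it is precisely the absence of predicates that prevents the additive tail $\psi$ from demanding new witnessed elements.
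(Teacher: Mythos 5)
Your high-level strategy matches the paper's: modify the strong witness so that it acts as the identity on inputs of the form $\wit(\varphi)\wedge\psi$, and use algebraicity to replace the tail $\psi$ by an arrangement $\delta_V$ with $\delta_V\rightarrow\psi$ (read off from a satisfying model) so that the strong-witness guarantee supplies the required ``all $S$-elements named'' interpretation. That part of your outline is sound.

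The genuine gap is computability. A witness is by definition a \emph{computable} function, so your modified witness must computably decide, on input $\chi$, whether $\chi$ decomposes as $\wit(\varphi)\wedge\psi$ for some $\varphi$ and admissible tail $\psi$ --- that is, it must recognize the image of $\wit$. For an arbitrary strong witness this image is only computably enumerable, not decidable (you can enumerate candidates $\varphi$ and test $\wit(\varphi)$ against a conjunct of $\chi$, but you never know when to stop), so the case split in your ``conceptual decomposition'' is not an algorithm. The paper closes this hole with a preliminary step your proposal lacks: it first replaces $\wit$ by $\wit'(\varphi):=\varphi\wedge\wit(\varphi)$, checks that $\wit'$ is still a strong witness, and observes that the image of $\wit'$ \emph{is} decidable because the argument $\varphi$ can be read off as a conjunct of the output. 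Only then is the final witness $\wit''$ defined by the case split, which makes both additivity clauses hold by literal syntactic identity; the delicate part is therefore not the variable-set condition, as you suggest, but re-verifying that $\wit''$ is still a witness in the inert case --- which is exactly where algebraicity and the arrangement argument are used. Two smaller corrections: the variable-set clause of \Cref{additiveDef} demands literal equality of the $S$-sorted variable sets, not equality ``up to $\T$-equivalence'', which is a further reason the original $\wit$ cannot be assumed already additive; and the countability hypothesis is not used via \Cref{LowenheimSkolem} here --- it only ensures that computability of functions on $\qf{\Sigma}$ is meaningful in the first place.
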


Note that the strongly polite theories $\T_1$ and $\T_2$ from \Cref{sec:nonemptyunicorn2.0} are over algebraic signatures, so they are additively polite.

\section{Finite smoothness versus smoothness} \label{sec-fm}
As the current paper started with Unicorn 3.0 theories
(\Cref{sec:no-u-3}) and continued with Unicorn 2.0 theories (\Cref{sec:uni2}),
we end it with 
Unicorn 1.0 theories.

In \cite{nounicorns}, Unicorn 1.0 theories were proven
not to exist.
The main step in the proof was the following theorem:

\begin{theorem}[{\cite[Theorem~3]{nounicorns}}] \label{thm-smoothness-from-finite-smoothness}
    Let $\T$ be a theory over a countable signature. If $\T$ is stably finite and finitely smooth, both with respect to a set of sorts $S$, then $\T$ is smooth with respect to $S$.
\end{theorem}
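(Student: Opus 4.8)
The plan is to pass from finite to arbitrary target cardinalities in stages: reduce to a finite model, use finite smoothness to manufacture a family of finite models that already realize the finite targets while letting the infinite-target sorts grow unboundedly, and then cross into the infinite by compactness and Löwenheim--Skolem. Fix a quantifier-free $\varphi$, a $\T$-interpretation $\A$ with $\A \vDash \varphi$, and a function $\kappa$ from $S$ to the cardinals with $\kappa(\s) \geq |\s^{\A}|$ for every $\s \in S$. I would split $S$ into $S_{\mathrm{fin}} = \{\s \in S : \kappa(\s) < \aleph_0\}$ and $S_{\mathrm{inf}} = \{\s \in S : \kappa(\s) \geq \aleph_0\}$.

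First I would apply stable finiteness to $\A$, obtaining a $\T$-interpretation $\A_0 \vDash \varphi$ with $a_{\s} := |\s^{\A_0}| < \aleph_0$ and $a_{\s} \leq |\s^{\A}| \leq \kappa(\s)$ for all $\s \in S$; this finite starting point is exactly what makes finite smoothness usable. For each $n$ I would set $\kappa_n(\s) = \kappa(\s)$ for $\s \in S_{\mathrm{fin}}$ and $\kappa_n(\s) = \max(a_{\s}, n)$ for $\s \in S_{\mathrm{inf}}$. Then $a_{\s} \leq \kappa_n(\s) < \aleph_0$, so finite smoothness applied to $\A_0$ and $\kappa_n$ yields a $\T$-interpretation $\B_n \vDash \varphi$ with $|\s^{\B_n}| = \kappa(\s)$ for $\s \in S_{\mathrm{fin}}$ and $|\s^{\B_n}| \geq n$ for $\s \in S_{\mathrm{inf}}$.

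Next I would cross into the infinite. Treating the free variables of $\varphi$ as fresh constants, let $\Gamma = \ax(\T) \cup \{\varphi\} \cup \{\psi^{\s}_{=\kappa(\s)} : \s \in S_{\mathrm{fin}}\} \cup \{\psi^{\s}_{\geq n} : \s \in S_{\mathrm{inf}},\ n \in \No\}$. Every finite subset of $\Gamma$ is satisfied by some $\B_N$ (with $N$ chosen above the finitely many indices $n$ that occur), so by compactness $\Gamma$ is satisfiable, and by \Cref{LowenheimSkolem} it has a model $\C$ with all sorts countable; thus $\C \vDash \varphi$, $|\s^{\C}| = \kappa(\s)$ for $\s \in S_{\mathrm{fin}}$, and $|\s^{\C}| = \aleph_0$ for $\s \in S_{\mathrm{inf}}$. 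Finally I would inflate each $S_{\mathrm{inf}}$ sort from $\aleph_0$ to its exact target: a sort-refined Löwenheim--Skolem argument (upward to reach size $\geq \kappa(\s)$, then downward to trim to exactly $\kappa(\s)$) produces a model $\B$ in which the $S_{\mathrm{inf}}$ sorts have size $\kappa(\s)$, while the finite sorts, being pinned by the sentences $\psi^{\s}_{=\kappa(\s)}$, are preserved. As $\varphi$ is now a sentence over the added constants, it too is preserved, so $\B \vDash \varphi$ with $|\s^{\B}| = \kappa(\s)$ for all $\s \in S$, as required.

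The main obstacle is this last step. \Cref{LowenheimSkolem} as stated delivers only countable models, whereas smoothness demands hitting each $\kappa(\s)$ exactly, including uncountable values, and simultaneously holding the finite sorts fixed; this needs the full sort-specific form of Löwenheim--Skolem rather than the cited countable one. The delicate issue is whether inter-sort constraints of $\T$ could block setting the infinite cardinalities independently. The reassurance is that stable finiteness together with finite smoothness already forbid any rigid finite linkage between sorts, and a first-order theory cannot impose such a linkage purely in the infinite; hence the final cardinality adjustments go through and the construction closes.
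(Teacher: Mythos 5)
First, note that the paper does not prove this statement itself; it imports it as \cite[Theorem~3]{nounicorns}, so your proposal can only be judged on its own merits. Up to a point it is sound: reducing to a finite model via stable finiteness, using finite smoothness to pin the finite targets while letting the $S_{\mathrm{inf}}$ sorts grow, and then using compactness together with \Cref{LowenheimSkolem} to land on a model $\C$ with $|\s^{\C}|=\kappa(\s)$ for $\s\in S_{\mathrm{fin}}$ and $|\s^{\C}|=\aleph_0$ for $\s\in S_{\mathrm{inf}}$ is all correct (and the signature stays countable after adding finitely many constants for $\vars(\varphi)$). If every infinite target happens to be $\aleph_0$, you are done at that point.

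The genuine gap is the last step, which is where essentially all of the content of the theorem lives. Inflating each $\s\in S_{\mathrm{inf}}$ to an arbitrary prescribed uncountable $\kappa(\s)$ \emph{while holding other sorts at strictly smaller (or finite) cardinalities} is not a routine ``sort-refined L\"owenheim--Skolem'': going upward by adding $\kappa(\s)$ distinct constants is fine, but the downward trim fails naively, because the Skolem/Tarski--Vaught closure of $\kappa(\s_2)$ elements of sort $\s_2$ can inject up to $\kappa(\s_2)$ elements into a sort $\s_1$ whose target is smaller (think of a function symbol $g:\s_2\rightarrow\s_1$). Your ``reassurance'' --- that a first-order theory cannot impose a cardinality linkage between sorts purely in the infinite --- is false in general: the paper's own theory $\TNT$ (\Cref{thm-uncountable-unicorns}) is stably finite and finitely smooth yet every infinite model has a sort of size at least $2^{\aleph_0}$, so such linkages do occur; they are excluded only by the countability of the signature, and your argument never actually uses countability at the step where it matters (you invoke it only for the easy downward pass to $\aleph_0$). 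Ruling out these linkages for countable signatures is a two-cardinal--type theorem in the style of Vaught (one must show that arbitrarily large \emph{finite} gaps in the spectrum, which finite smoothness provides, propagate to all infinite gaps), and that argument --- the heart of \cite[Theorem~3]{nounicorns} --- is missing from your proof.
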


Since stably infinite and strongly finitely witnessable theories are both stably finite and finitely smooth~\cite[Lemmas~3 and 4]{nounicorns}, \Cref{thm-smoothness-from-finite-smoothness} implies that, over countable signatures, such theories are smooth. This is equivalent to the claim that Unicorn 1.0 theories do not exist.

Notice that in \Cref{thm-smoothness-from-finite-smoothness}
the signature is assumed to be countable. 
In \cite{nounicorns}, the necessity of this assumption
was left open.
We show that the assumption is necessary, by constructing a 
theory that is stably finite and finitely smooth, but not smooth,
over an uncountable signature.

We define in \Cref{tab:signatures} a single-sorted signature $\SNT$ with unary predicates $N$ and $T$, and a binary predicate $<$ (written infix).\footnote{Think of $N$ as being short for ``number'' and $T$ as being short for ``tree''.} The signature $\SNT$ also has function symbols $\{f_\rho \mid \rho \in 2^\omega\}$, where $2^{\omega}$
is the set of infinite binary sequences.
Next, we define the theory $\TNT$, which we will use
to show that \Cref{thm-smoothness-from-finite-smoothness} fails for uncountable signatures.
We do so by first defining a class of interpretations,
and then closing it to make it a theory.\footnote{The idea underlying the construction is inspired by \cite[Exercise 2.3.1 (2)]{tent-ziegler}.}

\begin{definition}
\label{def:tnt}
For each $n\in\mathbb{N}$, 
let $\{0,1\}^{\le n}$ be the set of sequences over $\{0,1\}$ of length at most $n$. 
A $\SNT$-interpretation $\A$ is called {\em \tnt-interpretation}
if there is $n\geq 2$ and 
$S \subseteq \{0,1\}^{n-1}$
such that:
 $\sigma_1^{\A}=[0,n-1] \cup \{0,1\}^{\le n-2} \cup S$;
 $N^{\A} = [0,n-1]$;
 $T^{\A} = \{0,1\}^{\le n-2} \cup S$;
 $a <^{\A} b$ if and only if $a, b \in N^\A$ and $a$ is less
than $b$ as natural numbers; and
for every $\rho\in 2^{\omega}$, we have that 
if $0\leq m\leq n-2$ then
$f_{\rho}^{\A}(m)$ is the sequence that consists of the first $m$
elements of $\rho$,
if $m=n-1$ then $f_{\rho}^{\A}(m)\in T^{\A}$,
and if $m\in T^{\A}$ then $f_{\rho}^{\A}(m)=m$.

Let $\TNT^{-}$ be the class of all \tnt-interpretations,
and let $\ax$ be the set of $\SNT$-sentences
that are satisfied by all $\tnt$-interpretations.
Then, $\TNT$ is the theory axiomatized by $\ax$, that is,
$\ax(\TNT)=\ax$.
\end{definition}

We give some intuition for the definition of a \tnt-interpretation $\A$. Given $S \subseteq \{0,1\}^{n-1}$, we can think of $T^\A = \{0,1\}^{\le n-2} \cup S$ as representing a binary tree of height $n$ in which the first $n-1$ layers are full (each binary sequence of length $m$ is a node in the $(m+1)$th layer of the tree). We can think of $N^\A = [0,n-1]$ as representing numbers corresponding to each layer of the tree. Then, $\rho \in 2^\omega$ is a path through an infinite binary tree, and $f_\rho(m)$ picks out the $(m+1)$th element along that path, unless $m = n-1$, in which case $f_\rho(m)$ can be any element of the tree. See \Cref{fig:bin-tree} for an illustrated example.

\begin{figure}[t]
\centering
\begin{tikzpicture}[level distance=1.5cm,
  level 1/.style={sibling distance=3cm},
  level 2/.style={sibling distance=1.5cm},
  level 2/.style={sibling distance=1.5cm},
  edge from parent/.style={draw, -},
  scale=0.5]
  \node (A) {.}
    child {node (B) {0}
      child {node (C) {00}}
      child {node (D) {01}
        child {node (E) {010}}
        child[missing]
      }
    }
    child {node (F) {1}
    child {node (G) {10}
        child {node (H) {100}}
        child {node (I) {101}}
    }
      child {node (J) {11}}
    };

  \node[left=4cm of A] (L0) {0};
  \node (L1) at (L0 |- B) {1};
  \node (L2) at (L0 |- C) {2};
  \node (L3) at (L0 |- E) {3};

    \node[draw=black, dotted, thick, fit=(A)(B)(C)(D)(E)(G)(H)(I)(J), inner sep=10pt, label=east:{$T^\A$}] {};
    \node[draw=black, dotted, thick, fit=(L0)(L1)(L2)(L3), inner sep=10pt, label=west:{$N^\A$}] {};

    \draw [->] (L0) to node[anchor=south] {$f_\rho$} (A);
    \draw [->] (L1) to node[anchor=south] {$f_\rho$} (B);
    \draw [->] (L2.east) to node[anchor=south] {$f_\rho$} (C.west);
    \draw [->] (L3) to[bend right = 10] node[anchor=south] {$f_\rho$} (G);
\end{tikzpicture}
\caption{Example of a $\tnt$-interpretation $\A$, including values of the function $f_\rho$ on $N^\A$, where $\rho = 0000\cdots$. Notice that $f_\rho(3)$ can be any element of the tree.}
\label{fig:bin-tree}
\end{figure}
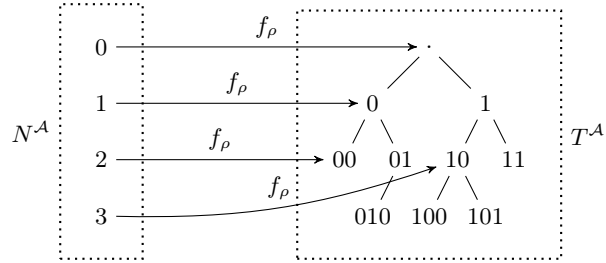

Indeed, $\TNT$ admits the required properties in order to refute
\Cref{thm-smoothness-from-finite-smoothness} over uncountable signatures:

\begin{restatable}{theorem}{thmuncountableunicorns} \label{thm-uncountable-unicorns}
    The theory $\TNT$ is stably finite and finitely smooth but not smooth.
\end{restatable}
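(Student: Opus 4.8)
The plan is to prove three separate claims about $\TNT$: stable finiteness, finite smoothness, and the failure of smoothness. The key tension the construction exploits is that each $\tnt$-interpretation is \emph{finite} (of some size parametrized by $n\geq 2$ and $S\subseteq\{0,1\}^{n-1}$), so stable finiteness and finite smoothness should follow from the finitary combinatorics of these interpretations, whereas smoothness fails precisely because the uncountable family $\{f_\rho \mid \rho\in 2^\omega\}$ lets a single quantifier-free formula force the domain to encode a binary tree whose cardinality cannot be freely increased. The first thing I would do is pin down exactly which quantifier-free formulas are $\TNT$-satisfiable and what finite models they admit, since $\ax(\TNT)$ is defined as the set of \emph{all} sentences true in every $\tnt$-interpretation, so that $\TNT^{-}\subseteq\TNT$ but $\TNT$ may contain additional (possibly infinite) interpretations that are elementarily equivalent models of those sentences.

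For stable finiteness, given a $\TNT$-satisfiable quantifier-free $\varphi$ witnessed by some $\A\models\varphi$, I would first note $\varphi$ mentions only finitely many function symbols $f_{\rho_1},\dots,f_{\rho_k}$ and finitely many variables, and then argue that one can extract a finite $\tnt$-interpretation $\B$ satisfying $\varphi$ with $|\sigma_1^\B|\leq|\sigma_1^\A|$; since $\tnt$-interpretations are genuinely finite, this gives the finite model required. The natural route is a downward Löwenheim--Skolem-style argument combined with the observation that the defining properties of a $\tnt$-interpretation are first-order expressible, so any model of $\ax(\TNT)$ with a satisfying assignment for $\varphi$ can be replaced by a small $\tnt$-interpretation. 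For finite smoothness, I would take a $\TNT$-interpretation $\A$ with $\A\models\varphi$ and a target finite cardinal $\kappa(\sigma_1)\geq|\sigma_1^\A|$, and show how to realize $\varphi$ in a larger \emph{finite} model: the idea is that increasing $n$ (the tree height) or enlarging $S\subseteq\{0,1\}^{n-1}$ adds elements to $\sigma_1^\A$ while preserving satisfaction of $\varphi$, and by choosing $n$ and $|S|$ appropriately one hits the target cardinality, since the available sizes $(n + |\{0,1\}^{\le n-2}| + |S|)$ range over a rich enough set of finite values. The technical care here is checking that enlarging the tree does not disturb the truth of the finitely many literals in $\varphi$, which constrain only finitely many $f_\rho$ and finitely many elements.

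The heart of the argument, and the main obstacle, is the failure of smoothness. Here I would exhibit a specific quantifier-free formula $\varphi^\ast$ that is $\TNT$-satisfiable but whose models cannot have their $\sigma_1$-domain freely inflated to \emph{every} larger cardinal; in particular I would show there is no model of $\varphi^\ast$ of some uncountable cardinality $\kappa$, even though finite models of every sufficiently large finite size exist. The mechanism is that the uncountably many functions $f_\rho$ force distinctness: if $\varphi^\ast$ asserts (via $N$, $<$, and the $f_\rho$) that the elements of $N^\A$ form a chain of numbers indexing the layers and that the values $f_\rho(m)$ trace out distinct paths, then the set $T^\A$ is forced to embed initial segments of all $2^\omega$ paths up to the height determined by $|N^\A|$. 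The delicate point is that smoothness demands realizability at \emph{arbitrary} cardinals above $|\sigma_1^\A|$, so I must produce a cardinal $\kappa$ at which no model of $\varphi^\ast$ exists. The plan is to show that any model of $\varphi^\ast$ has $|T^\A|$ rigidly tied to $|N^\A|$ through the tree structure — a full binary tree of height $|N^\A|$ has a determined size — so that $|\sigma_1^\A|$ cannot take values strictly between consecutive ``tree sizes''; choosing $\kappa$ to be any cardinal not of the required form (for instance, one just above a finite tree-size but below the next realizable size, or an uncountable $\kappa$ that the countably-many-layer constraint forbids) defeats smoothness. I expect the subtle work to lie in verifying that $\TNT$, as the deductive closure, does not accidentally contain extra large models of $\varphi^\ast$ that would restore smoothness; this requires arguing that the first-order theory of the $\tnt$-interpretations still pins down the tree-to-number correspondence tightly enough in every model of $\ax(\TNT)$, which is where the uncountability of the signature is essential, since each $f_\rho$ contributes a separate constraint that collectively cannot be satisfied by any single countable reshuffling.
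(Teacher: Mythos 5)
Your three-way decomposition (stable finiteness, finite smoothness, failure of smoothness) matches the paper's, and your finite-smoothness plan --- grow a model one element at a time by either adding a new layer index to $N$ or a new top-layer node to $S$ --- is exactly what the paper does (its Cases I and II, with a literal-by-literal check that $\varphi$ survives). But the other two arguments have genuine gaps.

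For stable finiteness, a downward L\"owenheim--Skolem-style argument cannot work: the signature has cardinality $2^{\aleph_0}$, so L\"owenheim--Skolem only takes you down to continuum-sized models, and the class of $\tnt$-interpretations is not itself first-order axiomatizable --- $\TNT$ is its elementary closure and does contain other, infinite, models. The paper's argument sidesteps all of this and is essentially one line: if $\varphi$ had no finite model, then in particular no $\tnt$-interpretation would satisfy it, so the universal closure of $\neg\varphi$ would belong to $\ax(\TNT)$ by the very definition of $\ax(\TNT)$ as the set of \emph{all} sentences true in every $\tnt$-interpretation, making $\varphi$ $\TNT$-unsatisfiable.

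The more serious gap is in the non-smoothness argument. Your suggestion to pick a cardinal ``just above a finite tree-size but below the next realizable size'' contradicts the finite smoothness you establish in the preceding paragraph: the finite sizes realized by $\tnt$-interpretations have no gaps above the minimum. Targeting an unspecified uncountable $\kappa$ is not safe either: under CH there is no uncountable cardinal in $[\aleph_0,2^{\aleph_0})$, and every cardinal $\geq 2^{\aleph_0}$ is realizable by upward L\"owenheim--Skolem. The cardinal at which smoothness provably fails is $\aleph_0$ itself: the paper shows that \emph{every infinite $\TNT$-structure has size at least $2^{\aleph_0}$}, so a finite model of, say, $\top$ cannot be inflated to a countably infinite one --- no auxiliary formula $\varphi^\ast$ is needed. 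The mechanism you would need, and do not supply, is this: in any infinite model, $N$ must be infinite (since $|N|\leq k$ forces $|T|\leq 2^k-1$ and $\forall x\,[N(x)\vee T(x)]$ is an axiom); $N$ has a $<$-maximum and every non-minimal element has a predecessor (first-order theorems of the $\tnt$-interpretations, hence axioms); so there is a penultimate element $a$ of $N$ with infinitely many predecessors and a successor; and for each $\rho\neq\tau$, choosing $n$ with $\rho(n-1)\neq\tau(n-1)$, the sentence $\forall x\,[(P_n(x)\wedge\Exists{y}x<y)\rightarrow f_\rho(x)\neq f_\tau(x)]$ is an axiom, so the $2^{\aleph_0}$ values $f_\rho(a)$ are pairwise distinct. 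Isolating these first-order consequences is precisely what resolves your (correct) worry that the elementary closure might admit extra models that restore smoothness; without them the argument does not close.
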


\subsection{The existence of unicorns?}

\Cref{thm-uncountable-unicorns} gives a counterexample to \Cref{thm-smoothness-from-finite-smoothness} for theories over uncountable signatures, but \Cref{thm-uncountable-unicorns} was used in \cite{nounicorns} as a stepping stone to the following, showing that Unicorn 1.0 theories do not exist:
\begin{theorem}[{\cite[Theorem~2]{nounicorns}}] \label{thm-no-unicorns}
    Let $\T$ be a theory over a countable signature. If $\T$ is stably infinite and strongly finitely witnessable, both with respect to a set of sorts $S$, then $\T$ is smooth with respect to $S$.
\end{theorem}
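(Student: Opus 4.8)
The plan is to reduce the statement to \Cref{thm-smoothness-from-finite-smoothness}: that theorem already derives smoothness from stable finiteness together with finite smoothness over a countable signature, so it suffices to show that stable infiniteness and strong finite witnessability \wrt $S$ jointly imply both stable finiteness and finite smoothness \wrt $S$. The countability hypothesis is then used only to invoke the cited theorem. Throughout I would fix a strong witness $\wit$ for $\T$ and exploit its two defining features: $\varphi$ and $\Exists{\overarrow{x}}\wit(\varphi)$ are $\T$-equivalent; and whenever $\wit(\varphi)\wedge\distinct{V}$ is $\T$-satisfiable for an arrangement $\distinct{V}$ over $S$-sorted variables, it has a $\T$-model whose every $S$-sorted domain equals the set of values of the corresponding variables.

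First I would show that strong finite witnessability alone gives stable finiteness. Given $\varphi$ and $\A\vDash\varphi$, pass to an extension $\A^{\ast}\vDash\wit(\varphi)$ (available since $\A\vDash\Exists{\overarrow{x}}\wit(\varphi)$), which changes no domain. Writing $W=\vars_{S}(\wit(\varphi))$ and letting $\distinct{W}$ be the arrangement induced on $W$ by equality of values in $\A^{\ast}$, the formula $\wit(\varphi)\wedge\distinct{W}$ is $\T$-satisfiable, so the strong witness property yields $\B\vDash\wit(\varphi)\wedge\distinct{W}$ with $\s^{\B}=\vars_{\s}(\wit(\varphi)\wedge\distinct{W})^{\B}$ for every $\s\in S$. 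As $W$ is finite, each $\s^{\B}$ is finite; and since $\B$ realizes the arrangement copied from $\A^{\ast}$, the $\s$-sorted witness variables take at most $|\s^{\A}|$ distinct values, so $|\s^{\B}|\leq|\s^{\A}|$. Finally $\B\vDash\wit(\varphi)$ forces $\B\vDash\varphi$, establishing stable finiteness.

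The main work is finite smoothness, where each prescribed finite cardinality must be hit exactly; this is where stable infiniteness is indispensable. Fix $\varphi$, a $\T$-model $\A\vDash\varphi$, and a finite target $\kappa(\s)=m_{\s}\geq|\s^{\A}|$ for $\s\in S$ (the premise forces $\A$ to be finite on $S$, so set $n_{\s}=|\s^{\A}|$). As above, take $\A^{\ast}\vDash\wit(\varphi)$, the set $W=\vars_{S}(\wit(\varphi))$, and the arrangement $\distinct{W}$ induced by $\A^{\ast}$; it has some $k_{\s}\leq n_{\s}\leq m_{\s}$ classes of sort $\s$. Since $\wit(\varphi)\wedge\distinct{W}$ is $\T$-satisfiable, stable infiniteness provides a $\T$-model $\C$ of it with every $S$-sorted domain infinite, and $\C$ still realizes exactly the $k_{\s}$ classes on $W$. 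I would then add $m_{\s}-k_{\s}$ fresh $\s$-sorted variables and extend the arrangement to $\distinct{V}$ on the enlarged set $V$, declaring every fresh variable distinct from one another and from all witness-variable classes; because the domains of $\C$ are infinite, this enlarged arrangement is realized by (an extension of) $\C$, so $\wit(\varphi)\wedge\distinct{V}$ is $\T$-satisfiable. The strong witness property now yields $\B\vDash\wit(\varphi)\wedge\distinct{V}$ with $\s^{\B}=\vars_{\s}(\wit(\varphi)\wedge\distinct{V})^{\B}$, and since $\distinct{V}$ prescribes exactly $k_{\s}+(m_{\s}-k_{\s})=m_{\s}$ classes of sort $\s$, we obtain $|\s^{\B}|=m_{\s}$; as before $\B\vDash\varphi$, so $\T$ is finitely smooth.

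The delicate point, and the reason stable infiniteness cannot be dropped, is the $\T$-satisfiability of the padded arrangement $\wit(\varphi)\wedge\distinct{V}$: the witness variables must stay confined to the few classes inherited from the finite model $\A$ (keeping $k_{\s}\leq m_{\s}$), while the realizing model must simultaneously have room for the extra fresh elements, which only an infinite domain guarantees. With stable finiteness and finite smoothness both in hand, \Cref{thm-smoothness-from-finite-smoothness} immediately yields that $\T$ is smooth \wrt $S$, completing the proof.
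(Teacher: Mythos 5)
Your proposal is correct and follows essentially the same route as the paper, which derives \Cref{thm-no-unicorns} by observing that stable infiniteness plus strong finite witnessability yield stable finiteness and finite smoothness (cf.\ \Cref{lem-stab-fin,lem-fin-smooth}, i.e.\ Lemmas~3 and~4 of \cite{nounicorns}) and then invoking \Cref{thm-smoothness-from-finite-smoothness} for countable signatures. Your fleshed-out arguments for the two lemmas (copying the arrangement induced by a model of $\wit(\varphi)$, and padding it with fresh distinct variables realized in an infinite model supplied by stable infiniteness) match the standard proofs the paper relies on.
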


Does this theorem generalize to uncountable signatures? 
The question is actually ill-posed. 
A strong witness is a \emph{computable} function.
If the signature is uncountable, the domain of this (computable) function is uncountable, which is impossible
in the standard notion of computability.
The question becomes well-posed if we consider a weaker
version of strong finite witnessability, where the strong witness may be non-computable:
Define a \emph{strong pre-witness} to be a strong witness without the computability requirement. 
We have the following:

\begin{restatable}{theorem}{equivalenceofunicornone} \label{thm-almost-weaker}
    Let $\T$ be a theory and $S$ be a finite set of sorts. Then, the following are equivalent:
        $(1)$~$\T$ is stably finite and finitely smooth with respect to $S$; and
        $(2)$~$\T$ is stably infinite and has a strong pre-witness with respect to $S$.
\end{restatable}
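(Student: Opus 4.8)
The plan is to prove the two implications separately, treating $(2)\Rightarrow(1)$ as a mild strengthening of existing work and $(1)\Rightarrow(2)$ as the substantive direction.

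For $(2)\Rightarrow(1)$, I would observe that this is exactly the content of \cite[Lemmas~3 and 4]{nounicorns}, \emph{except} that those lemmas assume strong finite witnessability whereas here I only have a strong pre-witness. The point is that neither proof uses the computability of the witness. To get stable finiteness from a strong (pre-)witness alone: given a $\T$-model $\A\vDash\varphi$, extend it to a model of $\wit(\varphi)$, let $\distinct{V}$ be the arrangement on $V=\vars_{S}(\wit(\varphi))$ induced by $\A$, and apply the covered-model clause to obtain a $\T$-model $\B\vDash\wit(\varphi)\wedge\distinct{V}$ with $\s^{\B}=\vars_{\s}(\wit(\varphi)\wedge\distinct{V})^{\B}$ for each $\s\in S$; then $\B\vDash\varphi$, $\B$ is finite on $S$, and $|\s^{\B}|\leq|\s^{\A}|$. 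For finite smoothness, I would invoke stable infiniteness to produce a model whose $S$-sorts are infinite, read off an arrangement with exactly $\kappa(\s)$ classes in each sort, and apply the strong pre-witness clause to pin the $S$-sorts to size $\kappa(\s)$. Since every step is purely semantic, the arguments transfer verbatim to pre-witnesses.

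For $(1)\Rightarrow(2)$, stable infiniteness comes from compactness. Given a $\T$-satisfiable $\varphi$, I treat its free variables as fresh constants and consider $\ax(\T)\cup\{\varphi\}\cup\{\psi^{\s}_{\geq n}\mid \s\in S,\ n\in\No\}$. Any finite subset mentions $\psi^{\s}_{\geq n}$ only for $n$ below some $N$; by stable finiteness $\varphi$ has a finite $\T$-model $\B_{0}$, and by finite smoothness (with $\kappa(\s)=\max(N,|\s^{\B_{0}}|)$) there is a $\T$-model of $\varphi$ with each $S$-sort of size $\geq N$, so the finite subset is satisfiable. Compactness, which holds for signatures of any cardinality, then yields a $\T$-model of $\varphi$ with $|\s^{\A}|\geq\aleph_{0}$ for all $\s\in S$.

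The remaining task, and the heart of the proof, is to build a strong pre-witness. For $\T$-satisfiable $\varphi$, stable finiteness makes the set of size-tuples $(|\s^{\B}|)_{\s\in S}$ of finite $\T$-models $\B\vDash\varphi$ nonempty, and by Dickson's lemma~\cite{Dickson} (used as in \Cref{empsigimpDEC}) its minimal elements form a finite set; I set $K_{\s}$ to be the maximum of the $\s$-coordinate over these finitely many minimal tuples, and define $\wit(\varphi)=\varphi\wedge\bigwedge_{\s\in S}\bigwedge_{i=1}^{K_{\s}}(w^{\s}_{i}=w^{\s}_{i})$ with fresh variables $w^{\s}_{i}$ (and $\wit(\varphi)=\varphi$ when $\varphi$ is $\T$-unsatisfiable); this is a legitimate \emph{non}-computable function, which is all that is required. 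The $\T$-equivalence clause is immediate since the $w^{\s}_{i}$ are unconstrained and domains are nonempty. For the plain covered-model clause I take a minimal $\T$-model $\B\vDash\varphi$, so $|\s^{\B}|\leq K_{\s}$, and assign the $K_{\s}$ spare variables $w^{\s}_{i}$ surjectively onto $\s^{\B}$, giving $\s^{\B}=\vars_{\s}(\wit(\varphi))^{\B}$.

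The delicate clause is the strong one. Suppose $\wit(\varphi)\wedge\distinct{V}$ is $\T$-satisfiable and let $c_{\s}$ be the number of $\distinct{V}$-classes of sort $\s$. When $V$ is disjoint from $\vars(\wit(\varphi))$, satisfiability gives a $\T$-model of $\varphi$ with at least $c_{\s}$ elements of sort $\s$; I start from a minimal $\T$-model $\B_{0}\vDash\varphi$ (so $|\s^{\B_{0}}|\leq K_{\s}$) and use finite smoothness to produce $\A_{1}\vDash\varphi$ with $|\s^{\A_{1}}|=\max(|\s^{\B_{0}}|,c_{\s})$. I then interpret the fresh $V$-variables so as to realize $\distinct{V}$, which is possible since there are at least $c_{\s}$ elements. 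Coverage now splits cleanly: if $c_{\s}\geq|\s^{\B_{0}}|$ the $V$-variables alone name every one of the $c_{\s}$ elements, and if $c_{\s}<|\s^{\B_{0}}|$ the model has at most $K_{\s}$ elements, so the spare variables $w^{\s}_{i}$ name them all. Either way $\s^{\A_{1}}=\vars_{\s}(\wit(\varphi)\wedge\distinct{V})^{\A_{1}}$. The \textbf{main obstacle} is the general case in which $V$ shares variables with $\varphi$: there the arrangement is entangled with $\varphi$'s own constraints, and one must bound by $K_{\s}$ the number of \emph{anonymous} elements (named by no variable) in a minimal $\T$-model of $\wit(\varphi)\wedge\distinct{V}$. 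I expect to reduce this to the disjoint case by absorbing the fragment of $\distinct{V}$ relating shared variables into $\varphi$, but establishing that the anonymous-element bound is uniform over all $V$ is exactly the point where finite smoothness, rather than mere stable finiteness, is indispensable, since it is finite smoothness that lets me realize the cardinality demanded by an arbitrarily large arrangement while keeping the witnessed model covered.
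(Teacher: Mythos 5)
Your direction $(2)\Rightarrow(1)$ and your compactness argument for stable infiniteness are fine and essentially match the paper, which likewise just observes that the relevant proofs in \cite{nounicorns} and \cite{CasalRasga2} never use computability of the witness. The genuine gap is in your construction of the strong pre-witness for $(1)\Rightarrow(2)$, and it is not merely the loose end you flag at the end: the witness you commit to is already wrong. You pad $\varphi$ with $K_{\s}$ tautologically constrained fresh variables, where $K_{\s}$ is read off from the minimal size-tuples of models of $\varphi$ \emph{alone}. Consider the paper's theory $\tnn$, which is smooth (hence finitely smooth) and stably finite, so it satisfies condition $(1)$, and take $\varphi = P(x_1)\wedge\cdots\wedge P(x_n)$. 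Its minimal models have two elements (all $x_i$ collapsed onto one $P$-element), so your $K_{\s}$ equals $2$; yet for the arrangement $\delta_V$ forcing $x_1,\ldots,x_n$ pairwise distinct, every model of $\wit(\varphi)\wedge\delta_V$ has at least $2n$ elements while only $n+2$ variables are available to name them, so no covered model exists for $n\geq 3$. This is exactly the computation in the paper's proof of \Cref{lem-tnn-add}. Your proposed repair --- absorbing the shared-variable fragment of $\delta_V$ into $\varphi$ and reducing to the disjoint case --- cannot be carried out, because $K_{\s}$ has already been fixed from $\varphi$ before $V$ and $\delta_V$ are seen.

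The fix, which is what Casal and Rasga's construction actually does and what the paper's \Cref{lem-strong-witness} imports (the paper proves the theorem by reducing to \cite[Proposition~2]{CasalRasga2} and \cite[Lemmas~3 and 4]{nounicorns} and auditing where computability is used, rather than rebuilding the witness), is to let $K_{\s}$ be the maximum, over \emph{all arrangements} $\delta$ of $\vars_{S}(\varphi)$, of the $\s$-coordinates of the minimal size-tuples of models of $\varphi\wedge\delta$; there are finitely many such $\delta$ and, by Dickson's lemma, finitely many minimal tuples for each, so $K_{\s}$ is well defined, and no computability is needed for a pre-witness. With that change your own case analysis does go through in general: extend $\delta_V$ over $\vars_{S}(\wit(\varphi))\cup V$, restrict it to an arrangement $\delta_0$ of $\vars_{S}(\varphi)$, start from a minimal model of $\varphi\wedge\delta_0$ (of size at most $K_{\s}$ in each sort), grow it by finite smoothness to size $\max$ of that and the number $c_{\s}$ of $\delta_V$-classes, place the remaining $V$-variables, and conclude coverage by your dichotomy on whether $c_{\s}$ exceeds the minimal size. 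So the self-contained route you chose is viable, but only with the arrangement-indexed bound.
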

In particular, this equivalence implies that $\TNT$ is stably infinite and has a strong pre-witness despite not being smooth. 
Thus, $\TNT$ is a Unicorn 1.0 theory, except for the requirement that its strong witness be computable. 
Unicorns may not exist, but $\TNT$ is pretty close to one.

\section{Conclusion}
\label{sec:conclusion}
We have refined the main result of \cite{CasalRasga2},
according to which strong politeness is the same as shininess for decidable theories,
by showing that shininess implies strong politeness (since shiny theories are always decidable), while the converse does not hold in general.
While doing so, we were able to close all open problems
raised in a series of papers \cite{CADE,FroCoS,LPAR,nounicorns}, by
proving that Unicorn 2.0 theories exist, Unicorn 3.0 theories do not,
and even though Unicorn 1.0 theories do not exist, there are theories
that admit a similar property.

This completes the classification of which Boolean combinations of eight properties relevant to theory combination are possible (namely: stable infiniteness and finiteness, smoothness, weak and strong finite witnessability,
the finite model property, computability of the minimal model function, and convexity). The menagerie of theories we have constructed has already been useful for an ongoing research program that studies impossibility results in theory combination~\cite{limitations}. In the future, we hope to make further use of these theories to discover more about when theory combination is possible.

\section*{Acknowledgments}
Toledo and Zohar were funded by the NSF-BSF grant 2020704, the ISF grant 619/21, and
the Colman-Soref fellowship. Przybocki was supported by the NSF Graduate Research Fellowship Program under Grant No. DGE-2140739.


\printbibliography 


\newpage

\appendix

\section{Extended preliminaries}
We will make use of the following well-known result (see, e.g., \cite{Monzano93}).

\begin{theorem}[\cite{Monzano93}] \label{compactness}
    Given a signature $\Sigma$ and a set $\Gamma$ of $\Sigma$-formulas, $\Gamma$ is satisfiable if, and only if, every finite $\Gamma_{0}\subseteq \Gamma$ is satisfiable.
\end{theorem}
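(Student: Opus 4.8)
The plan is to prove the nontrivial direction by an ultraproduct construction, carried out sort by sort in the many-sorted setting; the converse is immediate, since any interpretation satisfying $\Gamma$ also satisfies every finite $\Gamma_0 \subseteq \Gamma$. So suppose every finite subset of $\Gamma$ is satisfiable. Let $I$ be the set of all finite subsets of $\Gamma$, and for each $i \in I$ fix, using the hypothesis, a $\Sigma$-interpretation $\A_i$ with $\A_i \vDash i$. For each $\varphi \in \Gamma$ put $\hat\varphi = \{i \in I : \varphi \in i\}$. The family $\{\hat\varphi : \varphi \in \Gamma\}$ has the finite intersection property, because $\hat{\varphi_1} \cap \cdots \cap \hat{\varphi_n}$ contains the point $\{\varphi_1, \ldots, \varphi_n\} \in I$. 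Hence it generates a proper filter on $I$, which extends (by the ultrafilter lemma) to an ultrafilter $U$ on $I$.

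Next I would build the ultraproduct $\A = \prod_{i \in I} \A_i / U$. In the many-sorted setting this is done sort by sort: for each $\s \in \S_{\Sigma}$, the domain $\s^{\A}$ is the quotient of $\prod_{i \in I} \s^{\A_i}$ by the relation $(a_i)_i \sim (b_i)_i$ iff $\{i : a_i = b_i\} \in U$. Each $\s^{\A_i}$ is non-empty, so $\s^{\A}$ is non-empty, as required by the definition of a structure. Function and predicate symbols are interpreted componentwise modulo $U$ in the usual way, and the variable assignment is inherited by setting $x^{\A} = [(x^{\A_i})_i]_U$ for each variable $x$ of sort $\s$; this is legitimate precisely because the $\A_i$ are interpretations, carrying their own assignments, so that $(x^{\A_i})_i \in \prod_{i} \s^{\A_i}$.

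The heart of the argument is the many-sorted analogue of {\L}o{\'s}'s theorem: for every $\Sigma$-formula $\psi$,
\[
  \A \vDash \psi \quad\Longleftrightarrow\quad \{i \in I : \A_i \vDash \psi\} \in U.
\]
This is proved by induction on the structure of $\psi$: the atomic case follows from the componentwise definition of the interpretation in $\A$, the negation and disjunction cases use that $U$ is an ultrafilter (exactly one of a set and its complement lies in $U$, and $U$ is closed under finite intersection), and the existential case uses the axiom of choice to assemble a sort-correct witness componentwise from witnesses in each $\A_i$. Granting this, fix $\varphi \in \Gamma$: whenever $\varphi \in i$ we have $\A_i \vDash \varphi$, so $\{i : \A_i \vDash \varphi\} \supseteq \hat\varphi \in U$, whence $\A \vDash \varphi$. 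As $\varphi \in \Gamma$ was arbitrary, $\A \vDash \Gamma$, so $\Gamma$ is satisfiable.

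I expect the main obstacle to be the careful verification of the many-sorted {\L}o{\'s}'s theorem, in particular the bookkeeping in the existential step (selecting, for $U$-almost-every $i$, a witness in the correct sort $\s^{\A_i}$ and gluing the resulting tuple into a single element of $\s^{\A}$) and the uniform treatment of free variables through the inherited assignment $x^{\A} = [(x^{\A_i})_i]_U$; the remaining cases are routine. An alternative route would invoke a sound and complete proof calculus for many-sorted first-order logic and argue from the finiteness of derivations, but this merely relocates the work into the completeness theorem, so I prefer the direct ultraproduct construction, which moreover requires no cardinality assumption on $\Sigma$ or on $\Gamma$.
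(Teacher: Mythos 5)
Your proof is correct, but note that the paper itself contains no proof of this statement: it is quoted as a standard result with a pointer to Manzano's text on many-sorted logic, so the only meaningful comparison is with the canonical route in that reference. Your ultraproduct argument is sound at every step: the index set of finite subsets of $\Gamma$, the finite intersection property of the sets $\hat\varphi$, the ultrafilter extension, the sort-by-sort quotient (where the non-emptiness of each $\s^{\mathcal{A}_i}$ guarantees that every sort of the ultraproduct has a non-empty domain, as the paper's definition of structure requires), the inherited assignment $x^{\mathcal{A}} = [(x^{\mathcal{A}_i})_i]_U$ --- which correctly accounts for the fact that $\Gamma$ consists of \emph{formulas} with free variables rather than sentences, so satisfiability is witnessed by an interpretation carrying an assignment --- and the many-sorted {\L}o{\'s} theorem, whose existential step is indeed the only place where the per-sort bookkeeping differs from the one-sorted case (for the $U$-null set of indices you also need an arbitrary element of $\s^{\mathcal{A}_i}$, again available by non-emptiness). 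The standard treatment in the cited source instead derives compactness from the completeness of a Henkin-style calculus for many-sorted logic (or, equivalently, by reduction to one-sorted logic with relativization predicates for the sorts), with the finiteness of derivations doing the work; you correctly anticipate this alternative and its drawback of relocating the effort into the completeness theorem. Your purely semantic route buys two things: it needs no proof calculus at all, and, as you observe, it imposes no cardinality restriction on $\Sigma$ or $\Gamma$ --- a relevant point here, since the paper invokes \Cref{compactness} in the proof of \Cref{thm-almost} in a context that includes the theory $\TNT$ over the uncountable signature $\SNT$. Foundationally the two routes are comparable: yours uses the ultrafilter lemma plus choice in the witness step, while the Henkin construction for uncountable signatures requires well-ordering the set of formulas.
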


\section{Proof of \Cref{thm-shiny-imp-decidable}}

\sisfcmdec*

\begin{proof}
    If $\T$ is inconsistent, then it is trivially decidable, so suppose $\T$ is consistent.
    Let $\s \in S$. Then, $\T$ is stably infinite, stably finite, and has a computable minimal model function, all with respect to $\{\s\}$. Indeed, if $\varphi$ is a $\T$-satisfiable quantifier-free formula, then it has an interpretation $\A$ in which $\s^\A$ is infinite, so $\T$ is stably infinite with respect to $\{\s\}$. Similarly, $\T$ is stably finite with respect to $\{\s\}$. If $\minmod_\T$ is a minimal model function with respect to $S$, then $\minmod'_\T(\varphi) = \min \{\textbf{n}(\s) \mid \textbf{n} \in \minmod_\T(\varphi)\}$ is a minimal model function with respect to $\{\s\}$.

    We describe a decision procedure for satisfiability as follows. Let $\varphi$ be a quantifier-free formula, and let $k = \minmod'_\T(\varphi)$. Since $\T$ is stably finite with respect to $\{\s\}$, we have $k \in \No$.

    If $\varphi$ is $\T$-satisfiable, then it has a model of size $k$, but no smaller models. Let $x_{1},\ldots,x_{k+1}$ be fresh variables (i.e., not occurring in $\varphi$) of sort $\s$. Then, since $\T$ is stably infinite with respect to $\{\s\}$, the formula $\neq^{\s}(x_{1},\ldots,x_{k+1})$ is $\T$-satisfiable, and its smallest model has size greater than $k$. Therefore, $$\minmod'_\T(\varphi \lor \neq^{\s}(x_{1},\ldots,x_{k+1})) = k$$ if and only if $\varphi$ is $\T$-satisfiable. Since $\minmod'_\T$ is computable, this gives a decision procedure
    for $\T$-satisfiability.
\end{proof}
\section{Proof of \Cref{empsigimpDEC}}

The following is a version of what is known as Dickson's lemma, see \cite{Dickson};
although the proof is quite standard, we still write it down for completeness' sake.

\begin{lemma}\label{dicksonlemma}
    Let $n$ be a natural number, and consider any subset $A$ of $\N^{n}$ equipped with the order such that $(p_{1},\ldots,p_{n})\leq (q_{1}, \ldots, q_{n})$ iff $p_{i}\leq q_{i}$ for all $1\leq i\leq n$:\footnote{Of course, $\aleph_{0}>n$ for all $n\in\mathbb{N}$.} 
    then $A$ possesses at most a finite number of maximal elements under this order.
\end{lemma}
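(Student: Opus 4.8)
The plan is to recognize this as the statement that $\N^{n}$, under the componentwise order, is a \emph{well-quasi-order}, and to derive the finiteness of the maximal elements from the absence of infinite antichains. First I would observe that the set $M$ of maximal elements of $A$ is an antichain: if two distinct maximal elements were comparable, say $p < q$, then $p$ would fail to be maximal. Hence it suffices to prove that $\N^{n}$ contains no infinite antichain.

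The heart of the argument is the following combinatorial claim, which I would prove by induction on $n$: every infinite sequence $(p^{(k)})_{k \in \mathbb{N}}$ in $\N^{n}$ admits indices $i < j$ with $p^{(i)} \leq p^{(j)}$. For the base case $n = 1$, the set $\N = \mathbb{N} \cup \{\aleph_{0}\}$ is well-ordered (of order type $\omega + 1$), so any infinite sequence in it has an infinite non-decreasing subsequence, which immediately yields such a pair $i < j$. For the inductive step, given an infinite sequence in $\N^{n+1}$, I would first pass to a subsequence that is non-decreasing in the first coordinate---again possible because the first coordinates lie in the well-order $\N$---and then apply the induction hypothesis to the remaining $n$ coordinates along this subsequence to extract $i < j$ for which the last $n$ coordinates are non-decreasing; combined with monotonicity in the first coordinate, this gives $p^{(i)} \leq p^{(j)}$.

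Finally, I would conclude by contradiction: were $M$ infinite, I could enumerate it as an infinite sequence of pairwise incomparable elements, and the claim would furnish two comparable members, contradicting that $M$ is an antichain. Hence $M$ is finite.

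I expect the only delicate point to be the subsequence-extraction step in the well-order $\N$, specifically ensuring that the argument treats the top element $\aleph_{0}$ uniformly with the finite values; but since $\N$ is totally well-ordered this is routine, and the overall structure is the standard inductive proof of Dickson's lemma, here adapted from minimal elements in $\mathbb{N}^{n}$ to maximal elements in $\N^{n}$.
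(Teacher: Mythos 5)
Your proof is correct, but it takes a genuinely different route from the paper's. You prove that $\N^{n}$ is a well-quasi-order (no infinite sequence without a ``good pair'' $i<j$ with $p^{(i)}\leq p^{(j)}$), via the standard induction: extract an infinite subsequence non-decreasing in the first coordinate using the well-ordering of $\N\cong\omega+1$, then apply the inductive hypothesis to the remaining coordinates; since the maximal elements of $A$ form an antichain, finiteness follows. The paper instead argues directly by induction on $n$ with a slicing/counting argument: it peels off the hyperplanes where some coordinate equals $\aleph_{0}$ (handled by the inductive hypothesis after projection), then fixes one maximal element $\mathbf{p}$ of $A\cap\mathbb{N}^{n+1}$ and observes that every other maximal element must agree with some value $q_{i}<p_{i}$ in some coordinate $i$, hence is maximal in one of finitely many slices $A^{i}_{q_{i}}$, each of which has finitely many maximal elements by the inductive hypothesis; this yields the explicit bound $1+\sum_{i}\sum_{q_{i}<p_{i}}|\max(B^{i}_{q_{i}})|$. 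Your approach is more conceptual and generalizes immediately (finite products of well-quasi-orders are well-quasi-orders), at the cost of invoking infinite subsequence extraction and, implicitly, the fact that an infinite $M$ contains a countably infinite subset to enumerate (harmless here since $\N^{n}$ is countable); the paper's approach is more elementary and self-contained and produces a concrete, if crude, bound on the number of maximal elements. Both are complete proofs of the statement.
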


\begin{proof}
    We will prove the result by induction on $n$, it being obviously true if $n=1$. 
    As induction hypothesis assume our statement is true for $n$, and we shall prove it also holds for $n+1$. 
    Notice that if we write $A$ as the union of sets $A_{k}$, then a maximal element of $A$ that is in $A_{k}$ continues to be maximal;
    so, if we find a partition of $A$ into finitely many components, and where each component has finitely many maximal elements, this will mean that $A$ has only finitely many maximal elements.
    First, consider the subsets of $A$ of the form $\{\textbf{p}=(p_{1},\ldots,p_{n+1}) : p_{i}=\aleph_{0}\}$, for some $1\leq i\leq n$;
    we can consider them as subsets of $\N^{n}$ by removing the $i$-th coordinate, meaning they have finitely many maximal elements. 
    Notice as well that there are only $n+1$ of these sets.

    Now, we only need to prove that $A\cap \mathbb{N}^{n+1}$ has finitely many maximal elements as well.
    To that end, fix a maximal element $\textbf{p}=(p_{1}, \ldots , p_{n}, p_{n+1})$ of $A\cap \mathbb{N}^{n+1}$ (if there are none, we are done). 
    If $\textbf{q}=(q_{1}, \ldots , q_{n}, q_{n+1})$ is another maximal element, we have that there must exist distinct $1\leq i,j\leq n+1$ such that $p_{i}>q_{i}$ and $p_{j}<q_{j}$ (otherwise one would have either $\textbf{p}>\textbf{q}$ or $\textbf{p}<\textbf{q}$). 
    We get $\textbf{q}$ is a maximal element of
    \[A^{i}_{q_{i}}=\{\textbf{r}=(r_{1}, \ldots , r_{n}, r_{n+1}) \in A : r_{i}=q_{i}\}\] 
    with the order induced by the order on $A$:
    indeed, were $\textbf{r}$ a maximal element of $A^{i}_{q_{i}}$ with $\textbf{r}>\textbf{q}$, we would get $\textbf{q}$ is not a maximal element of $A\cap\mathbb{N}^{n+1}$.
    We transform $A^{i}_{q_{i}}$ into a subset $B^{i}_{q_{i}}$ of $\mathbb{N}^{n}$ by removing the $i$-th coordinate of its elements, and by induction hypothesis the set of maximal elements of $B^{i}_{q_{i}}$ is finite. 
    So every maximal element of $A\cap\mathbb{N}^{n+1}$ is either $\textbf{p}$, or projects into an element of some $\max(B^{i}_{q_{i}})$, the set of maximal elements of $B^{i}_{q_{i}}$.
    There are $n+1$ possible values for $i$, $1$ through $n+1$, and $p_{i}$ values for $q_{i}$ (given $q_{i}<p_{i}$), namely $0$ through $p_{i}-1$.
    so we can bound the number of maximal elements of $A\cap\mathbb{N}^{n+1}$ by the sum 
    \[1+\sum_{i=1}^{n+1}\sum_{q_{i}=0}^{p_{i}-1}|\max(B^{i}_{q_{i}})|,\]
    which is finite. 
    The result is then true for $n+1$.
\end{proof}

\empsigimpdec*

\begin{proof}
   Let $\{\s_{1},\ldots,\s_{n}\}$ be the sorts of $\Sigma$, for simplicity.
   Consider the set 
   \[\spec{\T}=\{(|\s_{1}^{\A}|,\ldots,|\s_{n}^{\A}|)\in \N^{n} : \text{$\A$ is a $\T$-interpretation}\}.\]
   By \Cref{dicksonlemma} we have that $\spec{\T}$ has at most finitely many maximal elements;
   let them be $(m_{1}^{j},\ldots,m_{n}^{j})$ for $1\leq j\leq M$.
   Take a quantifier-free formula $\varphi$ and consider its set of variables $V_{\s_{j}}$ of sort $\s_{i}$ and their union $V$, the set $\eq{\varphi}$ of equivalence relations on $V$ that respect sorts, and the subset $\eqp{\varphi}$ of $\eq{\varphi}$ of those equivalences $E$ such that $\delta_{V}^{E}$ implies $\varphi$.
   We state $\varphi$ is $\T$-satisfiable if, and only if, there exists $E\in\eqp{\varphi}$ and an $1\leq j\leq M$ such that $(|V_{\s_{1}}/E|,\ldots,|V_{\s_{n}}/E|)\leq (m_{1}^{j},\ldots,m_{n}^{j})$;
   by hardcoding the values $m_{i}^{j}$ we get that $\T$ is decidable once we remember $\eqp{\varphi}$ can be found algorithmically as a problem of equational logic.

   Now, if $\varphi$ is $\T$-satisfiable, take a $\T$-interpretation $\A$ that satisfies $\varphi$,\footnote{By \Cref{LowenheimSkolem} we can just assume $|\s_{i}^{\A}|\leq \aleph_{0}$ for all $1\leq i\leq n$} take the equivalence relation $E$ such that $xEy$ if and only if $x^{\A}=y^{\A}$, and it is clear that $E$ is in $\eqp{\varphi}$ and $(|V_{\s_{1}}/E|,\ldots,|V_{\s_{n}}/E|)\leq(|\s_{1}^{\A}|,\ldots,|\s_{n}^{\A}|)$. By \Cref{compactness} and Zorn's lemma, there must then exist a maximal element $(m_{1}^{j},\ldots,m_{n}^{j})$ of $\spec{\T}$ such that $(|\s_{1}^{\A}|,\ldots,|\s_{n}^{\A}|)\leq (m_{1}^{j},\ldots,m_{n}^{j})$, and thus $(|V_{\s_{1}}/E|,\ldots,|V_{\s_{n}}/E|)\leq (m_{1}^{j},\ldots,m_{n}^{j})$.

   For the reciprocal, suppose $E\in\eqp{\varphi}$ and $(|V_{\s_{1}}/E|,\ldots,|V_{\s_{n}}/E|)$ is less than or equal to $(m_{1}^{j},\ldots,m_{n}^{j})$, and take a $\T$-interpretation $\A$ with $(|\s_{1}^{\A}|,\ldots,|\s_{n}^{\A}|)=(m_{1}^{j},\ldots,m_{n}^{j})$.
   This means we can find a reassignment of the variables on $\A$ such that, in the resulting $\T$-interpretation $\A^{\prime}$, $x^{\A^{\prime}}=y^{\A^{\prime}}$ if, and only if, $xEy$. 
   This means $\A^{\prime}$ satisfies $\delta_{V}^{E}$ and, since $E\in\eqp{\varphi}$, it also satisfies $\varphi$.
\end{proof}

\section{Proof of \Cref{DECandSFWimpliesCMMF}}

\decandsfwimpliescmmf*

\begin{proof}
    Suppose $\T$ has $\wit$ as a strong witness with respect to $S$, and take a quantifier-free formula $\varphi$. 
    Let $V_{\s}$ equal $\vars_{\s}(\wit(\varphi))$, and $V$ the union of these set for $\s\in S$.
    Let $\eq{\varphi}$ be the set of equivalence relations on $V$ such that $\wit(\varphi)\wedge\delta_{V}^{E}$ is $\T$-satisfiable:
    this set can be found algorithmically as $\T$ is decidable, and the set of all equivalence relations on $V$ is finite.
    Given that there are only finitely many functions $\textbf{n}_{E}:\s\in S\mapsto |V_{\s}/E|$ for $E\in\eq{\varphi}$, we can, also algorithmically, find the (maximal) subset $\eqp{\varphi}$ of $\eq{\varphi}$ such that, if $E\in\eqp{\varphi}$ and $F\in \eq{\varphi}\setminus\{E\}$, then there exists $\s\in S$ such that $|V_{\s}/E|<|V_{\s}/F|$.
    Finally, we state that the function $\minmod_{\T}$ taking $\varphi$ to $\{\textbf{n}_{E} : E\in\eqp{\varphi}\}$ is a minimal model function for $\T$, being by construction certainly computable.

    Take an $E\in\eqp{\varphi}$:
    by definition of $\eq{\varphi}$ there is a $\T$-interpretation $\B$ that satisfies $\wit(\varphi)\wedge\delta_{V}^{E}$;
    as $\wit$ is a strong witness, there exists a $\T$-interpretation $\A$ that satisfies $\wit(\varphi)\wedge\delta_{V}^{E}$ with $\s^{\A}=\vars_{\s}(\wit(\varphi)\wedge\delta_{V}^{E})^{\A}$ for each $\s\in S$.
    Given that $\A$ satisfies $\delta_{V}^{E}$ and $V_{\s}=\vars_{\s}(\wit(\varphi)\wedge\delta_{V}^{E})$, we get that $|\s^{\A}|=|V_{\s}/E|$ for each $\s\in S$, and thus we have found a $\T$-interpretation that satisfies $\varphi$\footnote{As it satisfies $\wit(\varphi)$ and therefore $\exists{\overarrow{x}}\wit(\varphi)$ for $\overarrow{x}=\vars(\wit(\varphi))\setminus\vars(\varphi)$, $\T$-equivalent to $\varphi$.} and $|\s^{\A}|=\textbf{n}_{E}(\s)$ for each $\s\in S$.
    Finally, suppose that $E$ and $F$ are (distinct elements) in $\eqp{\varphi}$:
    by definition of $\eqp{\varphi}$ there exists a $\s\in S$ such that $\textbf{n}_{E}(\s)=|V_{\s}/E|<|V_{\s}/F|=\textbf{n}_{F}(\s)$, which finishes the proof.
\end{proof}

\section{Proof of \Cref{lem:TiSFW}}

\lemtisfw*
\begin{proof}
It suffices to define a strong witness on conjunctions of flat literals
(see, e.g., \cite{JB10-LPAR}).
Hence, let $\varphi$ be a conjunction of literals of the form $f(v) = w$, $v=w$ and $v \neq w$. If $\vars(\varphi) = \emptyset$, then let $\wit(\varphi) \coloneqq \varphi \land x=x$; otherwise, let $\wit(\varphi) = \varphi$. It is clear that $\varphi$ and $\Exists{\overarrow{w}} \wit(\varphi)$ are $\T_i$-equivalent, where $\overarrow{w} = \vars(\wit(\varphi)) \setminus \vars(\varphi)$. Let $\delta$ be an arrangement on a finite set of variables $V$ such that $\wit(\varphi) \land \delta$ is $\T_i$-satisfiable, and let $\A$ be a $\T_i$-interpretation satisfying $\wit(\varphi) \land \delta$. We need to show that there is a $\T_i$-interpretation $\B$ satisfying $\wit(\varphi) \land \delta$ such that $\sun^\B = \vars(\wit(\varphi) \land \delta)^\B$.
    Without loss of generality, $\vars(\wit(\varphi)) \subseteq V$.\footnote{Even if $\wit(\varphi)$ has variables outside of $V$,
then $\B$ satisfies an arrangement over $\vars(\wit(\varphi)) \cup V$, so it suffices to consider arrangements over a set of variables including $\vars(\wit(\varphi))$.}

    Let $E$ be the equivalence relation on $V$ induced by $\delta$. Given a variable $v \in V$, let $[v] \in V/E$ be its equivalence class. Our goal is to build a $\T_i$-interpretation $\B$ of $\wit(\varphi) \land \delta$ in which the elements are $V/E$. Viewing $\A$ as a directed graph, we may identify $V/E$ with some vertices of $\A$ (namely, $[v] \in V/E$ is identified with $v^\A$). If $f(v) = w$ is a literal in $\wit(\varphi)$, then $\A$ has an edge from $[v]$ to $[w]$. Let $G$ be the induced subgraph of $\A$ on $V/E$. Let $G_1$ be the induced subgraph of $G$ on the vertices with out-degree 1, and let $G_2$ be the induced subgraph of $G$ on the remaining vertices (i.e., those with out-degree 0). To construct $\B$, we add edges to $G$ emanating from each of the vertices in $G_2$, being careful to make the resulting graph a model of $\T_i$. There are a few cases.

    \begin{enumerate}
    \item First, suppose $G_1$ contains a cycle. Then, let $\B$ be obtained from $G$ by adding an edge from every vertex in $G_2$ to some element of the cycle. Since this doesn't create any new cycles, $\B$ is a $\T_i$-interpretation.

    \item Second, suppose that $G_1$ is acyclic and $|G_2| \ge 2$. We have two subcases. If $|G_2| = 6$, then arrange the vertices of $G_2$ into two disjoint cycles of length 3; otherwise, arrange the vertices of $G_2$ into a single cycle. The resulting graph, $\B$, does not have any cycles of length 1 or 6. It may have a cycle whose length is in $S$, but in that case it does not have a cycle of length 2, since all of the elements of $S$ are at least 7. Hence, $\B$ is a $\T_i$-interpretation.

    \item Third, suppose that $G_1$ is acyclic and $|G_2| = 1$. Let $v$ be the unique element of $G_2$. We also have two subcases here. If $G_1 = \emptyset$, then we have no choice but to create an edge from $v$ to itself, which results in a valid $\T_i$-interpretation. Otherwise, if $G_1 \neq \emptyset$, then there must be some vertex $w \in G_1$ with an edge from $w$ to $v$. Create an edge from $v$ to $w$. The resulting graph, $\B$, contains a cycle of length 2 and no other cycles. Hence, $\B$ is a valid $\T_i$-interpretation.

    \item The only remaining case is when $G_2 = \emptyset$, but in that case $\B = G = G_1$ is our desired $\T_i$-interpretation. \qedhere
    \end{enumerate}
\end{proof}

\section{Proof of \Cref{lem:TiCMMF}}
\lemticmmf*
\begin{proof}
Recall that the set $S$ is undecidable.
Assume for contradiction that $T_i$ has a computable minimal model function.
Then, given $n$, we have that $n\in S$ if and only if
the formula $\cycle{n}(x)\land f^4(y)=y$,
which is $\T_i$-satisfiable,
has a minimal model of size 
$n+4$.
Indeed, if $n\in S$ then we can obtain a
 $\T_i$-interpretation of size $n + 4$ by a disjoint union of a cycle of length $n$ and a cycle of length 4.
 And if $n\notin S$, then we can obtain
 a $T_i$-interpretation of size $n+2$
 by a disjoint union of a cycle of length $n$ and a cycle of length 2.
 This gives us a procedure for determining whether a given number
 is in $S$, which contradicts its undecidability.
\end{proof}

\section{Proof of \Cref{lem:TiSinf}}
\lemtisinf*
\begin{proof}
Let $\varphi$ be a $\T_i$-satisfiable quantifier-free formula, and let $\A$ be a $\T_i$-interpretation that satisfies $\varphi$. We create an infinite $\T_i$-interpretation $\B$ of $\varphi$ by setting
    $\sigma^{\B}=\sigma^{\A}\cup C$, for some infinite set $C=\{c_1,c_2,\ldots\}$ with $C\cap\sigma^{\A}=\emptyset$,
    and $f^{\B}(a)=f^{\A}(a)$ whenever $a\in\sigma^{\A}$
    and $f^{\B}(c_i)=c_{i+1}$ for each $i\geq 1$.
    Graph-theoretically, $\graphof{\B}$ is the result of adding a ray (i.e., a one-way infinite path) to $\graphof{\A}$. Since this doesn't create any new cycles, $\B$ is a $\T_i$-interpretation.
    And since $\varphi$ is quantifier-free and $\A\models\varphi$, we have $\B\models\varphi$.
\end{proof}

\section{Proof of \Cref{lem-tnn-polite}}

\tnnnsp*

\begin{proof}
    We show that $\tnn$ is shiny, which implies that it is strongly polite by \Cref{thm-unicorn3}.

    First, we show that $\tnn$ is smooth. Let $\A$ be a $\tnn$-interpretation satisfying a quantifier-free formula $\varphi$, and let $\kappa \ge |\sigma^\A|$. Let $\B$ be the interpretation given by adjoining $\kappa - |\sigma^\A|$ new elements to $\sigma^\A$, none of which in $P^\B$. Then, $\sigma^\B$ is a $\tnn$-interpretation satisfying $\varphi$ such that $|\sigma^\B| = \kappa$.

    Second, we show that $\tnn$ is stably finite. Let $\A$ be a $\tnn$-interpretation satisfying a quantifier-free formula $\varphi$. Let $X = \vars(\varphi)^\A$, and let $Y$ be an arbitrary subset of $\sigma^\A \setminus P^\A$ such that $|Y| = |X \cap P^\A|$. Let $\B$ be the sub-interpretation of $\A$ restricted to $X \cup Y$,
    that is, $\sigma^{\B}=X\cup Y$ and
    $P^{\B}=P^{\A}\cap\sigma^{\B}$. Then, $\B$ is a finite $\tnn$-interpretation satisfying $\varphi$.

    Third, we show that $\tnn$ has a computable minimal model function. Let $\varphi$ be a quantifier-free formula. By the previous paragraph, if $\varphi$ is $\tnn$-satisfiable, then it is satisfiable by a model of size at most $2 \cdot |\vars(\varphi)|$. Given a finite $\Sp$-interpretation $\A$, we can decide whether it is a $\tnn$-interpretation and whether it satisfies $\varphi$. Also, since $\Sp$ is finite, there are finitely many $\Sp$-interpretations of size $n \in \No$. Hence, we can compute the minimal model function by enumerating models, searching the the smallest one.
\end{proof}

\section{Proof of \Cref{lem-tnn-add}}

\tnnnnadp*

\begin{proof}
    Suppose for the sake of contradiction that $\wit$ is an additive witness. Let $n = |\vars(\wit(\top))|$, and let
    \[
        \psi \coloneqq (\neq(w_{1},\ldots,w_{n+1})) \land \bigwedge_{i=1}^{n+1} P(w_i),
    \]
    where the $w_i$ are fresh variables. Since $\wit(\wit(\varphi) \land \psi)$ and $\wit(\varphi) \land \psi$ are $\tnn$-equivalent and have the same set of variables, $\wit(\varphi) \land \psi$ has a $\tnn$-interpretation of size at most $2n+1$ (as $\tnn$ is strongly finitely witnessable). Yet, the axioms of $\tnn$ imply that every such interpretation has size at least $2n+2$, a contradiction.
\end{proof}

\section{Proof of \Cref{theo:addwit}}
\thmaddwit*
\begin{proof}
It suffices to prove that over algebraic signatures, strong finite witnessability implies
additive finite witnessability.
    Let $\wit$ be a strong witness for $\T$ with respect to $S$. First, let
    $
        \wit'(\varphi) \coloneqq \varphi \land \wit(\varphi)
    $.
   Then, we claim that $\wit'$ is still a strong witness. It is clear that $\varphi$ and $\Exists{\overarrow{w}} \wit'(\varphi)$ are $\T$-equivalent, given that this property holds for $\wit$. Now, let $V$ be a finite set of variables whose sorts lie in $S$ and $\delta_{V}$ an arrangement on $V$, and suppose $\wit^{\prime}(\varphi)\wedge\delta_{V}$ is $\T$-satisfiable. Then, $\wit(\varphi)\wedge\delta_{V}$ is $\T$-satisfiable, and since $\wit$ is a strong witness, there exists a $\T$-interpretation $\A$ that satisfies $\wit(\varphi)\wedge\delta_{V}$ with $\s^{\A}=\vars_{\s}(\wit(\varphi)\wedge\delta_{V})^{\A}$ for each $\s\in S$. Since $\A$ satisfies $\varphi$, it also satisfies $\wit'(\varphi)\wedge\delta_{V}$. And since $\vars_{\s}(\wit(\varphi)\wedge\delta_{V})\subseteq\vars_{\s}(\wit^{\prime}(\varphi)\wedge\delta_{V})$, we have $\s^{\A}=\vars_{\s}(\wit^{\prime}(\varphi)\wedge\delta_{V})^{\A}$ for each $\s\in S$.
   
   Additionally, it is decidable whether a quantifier-free formula is in the image of $\wit'$: a quantifier-free formula $\varphi$ is in the image of $\wit'$ if and only if $\varphi$ is of the form
    $
        \psi \land \chi
    $
    for some $\psi$ and $\chi$ such that $\wit'(\psi) = \varphi$.

    
    Next, for every quantifier-free formula $\chi$, let $\wit''(\chi) = \chi$ if $\chi$ is of the form $\wit'(\varphi) \land \psi$, where $\psi$ is a conjunction of flat literals such that every term in $\psi$ is a variable whose sorts are in $S$; otherwise, let $\wit''(\chi) = \wit'(\chi)$. By the previous paragraph, we can effectively check whether $\chi$ is of the form $\wit'(\varphi) \land \psi$. Then, we claim $\wit''$ is an additive witness. It is clearly $S$-additive: $\wit''(\wit''(\varphi)\wedge\psi)$ is equal to $\wit''(\varphi)\wedge\psi$ for all $\varphi,\psi\in \qf{\Sigma}$, where $\psi$ is a conjunction of flat literals such that every term in $\psi$ is a variable whose sort is in $S$. It remains to prove that $\wit''$ is a witness. It is clear that $\chi$ and $\Exists{\overarrow{w}} \wit''(\chi)$ are $\T$-equivalent, where $\overarrow{w} = \vars(\wit(\chi)) \setminus \vars(\chi)$. Now, suppose $\wit''(\chi)$ is $\T$-satisfiable. We need to show that there is a $\T$-interpretation $\A$ satisfying $\wit''(\chi)$ such that $\s^\A = \vars_\s(\wit''(\chi))^\A$ for every $\s \in S$. If $\wit''(\chi) = \wit'(\chi)$, then we are done by the fact that $\wit'$ is a witness. Otherwise, $\wit''(\chi) = \wit'(\varphi) \land \psi$ for some $\varphi,\psi\in \qf{\Sigma}$, where $\psi$ is a conjunction of flat literals such that every term in $\psi$ is a variable whose sort is in $S$. But, since the signature is algebraic, $\psi$ is a conjunction of equalities and disequalities. Thus, there is an arrangement $\delta_V$, where $V = \vars_S(\psi)$, such that $\delta_V \rightarrow \psi$ and $\wit'(\varphi) \land \delta_V$ is $\T$-satisfiable. Since $\wit'$ is a strong witness, there is a $\T$-interpretation $\A$ satisfying $\wit'(\varphi) \land \delta_V$ (and therefore also $\wit''(\chi)$) such that $\s^\A = \vars_\s(\wit'(\varphi) \land \delta_V)^\A = \vars_\s(\wit'(\varphi) \land \psi)^\A = \vars_\s(\wit''(\chi))^\A$.
\end{proof}

\section{Proof of \Cref{thm-uncountable-unicorns}}

\begin{lemma} \label{lem-tnt-stab-fin}
    The theory $\TNT$ is stably finite.
\end{lemma}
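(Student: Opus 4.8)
The plan is to prove stable finiteness straight from the definition: given a quantifier-free formula $\varphi$ and a $\TNT$-interpretation $\A$ with $\A\models\varphi$, I must exhibit a finite $\TNT$-interpretation $\B$ with $\B\models\varphi$, $|\sigma_1^{\B}|<\aleph_0$, and $|\sigma_1^{\B}|\leq|\sigma_1^{\A}|$. The first thing I would record is that \emph{every} \tnt-interpretation is finite: for the witnessing $n\geq 2$ of \Cref{def:tnt}, the domain $\sigma_1^{\A}=[0,n-1]\cup\{0,1\}^{\le n-2}\cup S$ is the union of $n$ numbers, the $2^{n-1}-1$ strings of length at most $n-2$, and at most $2^{n-1}$ strings of length $n-1$, so $|\sigma_1^{\A}|\leq n+2^{n}-1<\aleph_0$. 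Thus the generating class consists entirely of finite interpretations, even though $\TNT$ itself, being the class of all $\SNT$-interpretations satisfying $\ax(\TNT)$, may also contain infinite (non-standard) models.

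I would then split on the size of $\A$. If $\sigma_1^{\A}$ is finite, there is nothing to do: take $\B=\A$, which is already a finite $\TNT$-interpretation satisfying $\varphi$ with $|\sigma_1^{\B}|=|\sigma_1^{\A}|$. The only substantive case is when $\sigma_1^{\A}$ is infinite, where it suffices to produce \emph{any} finite $\TNT$-model of $\varphi$, since its size is then automatically $<\aleph_0\leq|\sigma_1^{\A}|$.

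The key step is a finite-model-property argument that uses crucially that, by \Cref{def:tnt}, $\ax(\TNT)$ is precisely the set of $\SNT$-sentences true in every \tnt-interpretation. Let $\vec{x}=\vars(\varphi)$. Since $\A\models\varphi$, the sentence $\exists\vec{x}\,\varphi$ is $\TNT$-satisfiable, and I claim it holds in some \tnt-interpretation. Indeed, were it false in all \tnt-interpretations, then $\neg\exists\vec{x}\,\varphi$ would be true in all of them and hence belong to $\ax(\TNT)$; but $\A\models\ax(\TNT)$, so $\A\models\neg\exists\vec{x}\,\varphi$, contradicting $\A\models\exists\vec{x}\,\varphi$. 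I would then fix a \tnt-interpretation satisfying $\exists\vec{x}\,\varphi$ and reassign its variables to a witnessing tuple; because membership in the generating class depends only on the structure (the conditions on $\sigma_1^{\A}$, $N^{\A}$, $T^{\A}$, $<^{\A}$, and the $f_\rho^{\A}$) and not on the variable assignment, the result is a finite \tnt-interpretation $\B$, and therefore a $\TNT$-interpretation, with $\B\models\varphi$.

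The single point requiring care — which I would flag as a subtlety rather than a genuine obstacle — is the gap between the generating class of \tnt-interpretations and the theory $\TNT$ it axiomatizes: the given $\A$ need not be a \tnt-interpretation, so I cannot manipulate its tree-and-number structure directly. The finite-model-property argument sidesteps this by descending from an arbitrary model of $\ax(\TNT)$ to a \emph{finite} generating interpretation using only the definition of $\ax(\TNT)$ as the theory of $\TNT^{-}$, while the cardinality requirement $|\sigma_1^{\B}|\leq|\sigma_1^{\A}|$ is discharged by the trivial choice $\B=\A$ whenever $\A$ is already finite.
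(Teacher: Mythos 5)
Your proof is correct and follows essentially the same route as the paper's: both hinge on the observation that a quantifier-free formula falsified by every (necessarily finite) $\tnt$-interpretation has the universal closure of its negation in $\ax(\TNT)$, hence cannot be $\TNT$-satisfiable. You are in fact slightly more careful than the paper, which silently elides the finiteness of all $\tnt$-interpretations, the handling of free variables, and the $|\sigma_1^{\B}|\leq|\sigma_1^{\A}|$ clause that your case split on $|\sigma_1^{\A}|$ discharges.
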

\begin{proof}
Let $\varphi$ be a $\SNT$-formula that is
$\TNT$-satisfiable. 
Suppose for contradiction it is not satisfied by any finite 
$\TNT$-interpretation.
In particular, it is not satisfied by any $\tnt$-interpretation,
and so its negation is satisfied by all, and therefore
so is its universal closure, which means
that the latter belongs to $\ax=\ax(\TNT)$.
In particular, $\neg\varphi$ then becomes $\TNT$-valid,
which is a contradiction as $\varphi$ is $\TNT$-satisfiable.
\end{proof}

\begin{lemma} \label{lem:finsmoothtnt}
    The theory $\TNT$ is finitely smooth.
\end{lemma}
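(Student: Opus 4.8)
The plan is to reduce finite smoothness, via a strengthening of the device used in the proof of \Cref{lem-tnt-stab-fin}, to a purely combinatorial statement about $\tnt$-interpretations. First I would record the following observation, which is proved exactly as in \Cref{lem-tnt-stab-fin}: a formula $\chi$ is $\TNT$-satisfiable if and only if it is satisfied by some $\tnt$-interpretation. Indeed, $\tnt$-interpretations satisfy $\ax(\TNT)$, giving one direction; conversely, if no $\tnt$-interpretation satisfies $\chi$, then every $\tnt$-interpretation satisfies the universal closure of $\neg\chi$, so this sentence lies in $\ax(\TNT)$ and hence $\chi$ is $\TNT$-unsatisfiable.

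Next I would unwind the definition of finite smoothness for the single sort $\sigma_1$. Given a $\TNT$-interpretation $\A \models \varphi$ and a target $\kappa \ge |\sigma_1^{\A}|$ with $\kappa < \aleph_0$, the finiteness of $\kappa$ forces $\A$ to be finite, of some size $s$. Applying the observation above to $\varphi \wedge \psi_{=s}$ yields a genuine $\tnt$-interpretation $\B_0 \models \varphi$ of size exactly $s$. Since every $\tnt$-interpretation is a $\TNT$-interpretation, the whole statement then reduces to the following combinatorial claim: \emph{if $\varphi$ has a $\tnt$-model of size $s$, then it has one of every size $\kappa \ge s$.}

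For the combinatorial core I would first note that the sizes of $\tnt$-interpretations cover every integer $\ge 3$ without gaps: a $\tnt$-interpretation with parameter $n$ and $S \subseteq \{0,1\}^{n-1}$ has size $n + (2^{n-1}-1) + |S|$, so for fixed $n$ the attainable sizes sweep the interval $[\,n+2^{n-1}-1,\ n+2^{n}-1\,]$, and the right endpoint for $n$ is exactly one below the left endpoint for $n+1$. Hence each $\kappa \ge s$ lies in the range of a unique parameter $n_\kappa \ge n_s$. To transfer satisfaction of $\varphi$ from $\B_0$ to size $\kappa$, I would use two features of how a quantifier-free formula reads a $\tnt$-interpretation: every term collapses, since $f_\rho$ is the identity on $T$ and sends a number $m \le n-2$ to the fixed node $\rho|_m$, so the only subterm values that matter are the variable values together with the $f_\rho(m)$ for numbers $m$ assigned to variables; and these values are forced to be $\rho|_m$ \emph{except} when $m$ is the top number $n-1$, where they may be chosen freely in $T$. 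I would then realize the quantifier-free type of $\B_0$'s assignment inside the band $n_\kappa$ by leaving every number- or tree-valued variable at its $\B_0$-position when that position lies below the top, relocating any variable sitting at $\B_0$'s top number $n_s-1$ to the new top number $n_\kappa-1$ (where it regains the freedom to reproduce the old free values), and choosing $|S|$ to fix the total size at $\kappa$.

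The main obstacle is precisely this change-of-band step: when $n_\kappa > n_s$, the number $n_s-1$ ceases to be a top number, so its formerly free $f_\rho$-values become the forced prefixes $\rho|_{n_s-1}$, which a quantifier-free equality can detect; a naive inclusion of $\B_0$ into the larger band can therefore destroy $\varphi$. The relocation of top-number variables to the new top number is what circumvents this, and the bulk of the careful write-up will be verifying that the relocation preserves every literal of the type — membership in $N$ and in $T$, the $<$-order among the number-valued variables, and all equalities among the surviving subterm values — after which the resulting $\tnt$-interpretation of size $\kappa$ is the required witness to finite smoothness.
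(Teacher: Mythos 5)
Your proposal is correct and matches the paper's proof in its essential content: the paper likewise works with $\tnt$-interpretations, and its Case I is exactly your change-of-band relocation (variables sitting at the old top number $k-1$ are moved to the new top number $k$ so that the free values $f_\rho^{\A}(k-1)$ can be reproduced there), while its Case II is your padding of $S$ by one more leaf. The only difference is packaging — the paper increases the cardinality by $1$ at a time and iterates, whereas you construct the target-size model in one jump — and both routes require the same literal-by-literal verification that $N$, $T$, $<$, equalities, and $f_\rho$-values are preserved.
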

\begin{proof}
Let $\varphi$ be a quantifier-free formula satisfied by a $\TNT$-interpretation of size $n$. Then, $\varphi \land \psi^{\sun}_{=n}$ is $\TNT$-satisfiable, so it is satisfied by a $\TNT$-interpretation $\A$ with $|\sigma_1^\A|=n$. To show that $\TNT$ is finitely smooth, it suffices to construct a $\TNT$-interpretation $\B \vDash \varphi$ with $|\sigma_1^{\B}| = |\sigma_1^{\A}| + 1$. 
    We proceed by cases.

    \textbf{Case I:} $|\sigma_1^\A| = 2^k - 1 + k$ for some $k \in \mathbb{N}$. We have $N^\A = [0,k-1]$ and $T^\A = \{0,1\}^{\le k-1}$. We create a $\TNT$-structure $\sB$ with $N^\sB = [0,k]$ and $T^\sB = \{0,1\}^{\le k-1}$. Let $f_\rho^\sB(k) = f_\rho^\A(k-1)$ for all $\rho \in 2^\omega$. This completely determines $\sB$, since $f_\rho^\sB(m)$ for $m \in [0,k-1]$ is given by
    \[
        \rho(0) \rho(1) \cdots \rho(m-1) \in \{0,1\}^m.
    \]
    We expand $\sB$ to a $\TNT$-interpretation $\B$ by interpreting the variables as follows:
    \[
        x^\B = \begin{cases}
            k \qquad &\text{if} \ x^\A = k-1 \\
            x^\A \qquad &\text{otherwise}.
        \end{cases}
    \]
    We claim that $\B \vDash \varphi$. We do this by checking that every literal satisfied by $\A$ is satisfied by $\B$. We may assume that $\varphi$ has been flattened, so it suffices to consider literals of the form $\pm N(x)$, $\pm T(x)$, $\pm (x < y)$, $\pm (x = y)$, and $f_\rho(x) = y$, where $\pm$ indicates that the atomic formula may or may not be negated.
    \begin{itemize}[label=$\bullet$]
        \item $\pm N(x)$: We have $x^\A \in N^\A$ if and only if $x^\B \in N^\B$.
        \item $\pm T(x)$: We have $x^\A \in T^\A$ if and only if $x^\B \in T^\B$.
        \item $\pm (x < y)$: If $x^\A <^\A y^\A$, then $x^\A,y^\A \in N^\A = [0,k-1]$ and $x^\A < y^\A$. Hence, $x^\A < k-1$, so $x^\A = x^\B$. Also, $y^\A \le y^\B$. Thus, $x^\B < y^\B$, which implies $x^\B <^\B y^\B$.
        
        Conversely, if $x^\B <^\B y^\B$, then $x^\B,y^\B \in [0,k-2] \cup \{k\}$ and $x^\B < y^\B$. Hence, $x^\B \in [0,k-2]$, so $x^\B = x^\A$. If $y^\B \in [0,k-2]$, then $y^\B = y^\A$, and we have $x^\A < y^\A$. Otherwise, $y^\B = k$, in which case $y^\A = k-1$, so we still have $x^\A < y^\A$. Therefore, $x^\A <^\A y^\A$.
        \item $\pm (x = y)$: We have $x^\A = y^\A$ if and only if $x^\B= y^\B$.
        \item $f_\rho(x) = y$: Suppose $f^\A_\rho(x^\A) = y^\A$. If $x^\A \in T^\A$, then $x^\B = x^\A = y^\A = y^\B$ and $x^\B \in T^\B$. Hence, $f^\B_\rho(x^\B) = x^\B = y^\B$. Otherwise, $x^\A \in N^\A$ and $y^\A \in T^\A$, which implies $y^\A = y^\B$. If $x^\A < k-1$, then $x^\A = x^\B$ and
        \[
            f^\B_\rho(x^\B) = f^\B_\rho(x^\A) = f^\A_\rho(x^\A) = y^\A = y^\B.
        \]
        If $x^\A = k-1$, then $x^\B = k$, so
        \[
            f^\B_\rho(x^\B) = f^\B_\rho(k) = f^\A_\rho(k-1) = f^\A_\rho(x^\A) = y^\A = y^\B.
        \]
    \end{itemize}

    \textbf{Case II:} $|\sigma_1^\A| \neq 2^k - 1 + k$ for all $k \in \mathbb{N}$. In this case, for some $k \in \mathbb{N}$ we have $N^\A = [0,k-1]$ and $T^\A = \{0,1\}^{\le k-2} \cup S$ for some $S \subsetneq \{0,1\}^{k-1}$. Let $s \in \{0,1\}^{k-1} \setminus S$. We create a $\TNT$-structure $\sB$ with $N^\sB = [0,k-1]$ and $T^\sB = \{0,1\}^{\le k-2} \cup S \cup \{s\}$. Expand $\sB$ to a $\TNT$-interpretation $\B$ by interpreting the variables the same as in $\A$. Since $\A$ is a sub-interpretation of $\B$, we have $\B \vDash \varphi$. This completes the proof that $\TNT$ is finitely smooth.
\end{proof}

\begin{lemma}
\label{lem:tntnotsmooth}
    The theory $\TNT$ is not smooth.
\end{lemma}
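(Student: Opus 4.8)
The plan is to exhibit a single quantifier-free formula $\varphi$ together with a $\TNT$-interpretation $\A$ that satisfies it, such that no larger $\TNT$-interpretation satisfies $\varphi$. Smoothness \wrt $\{\sigma_1\}$ would require that from any model of size $\kappa$ we can produce models of every larger cardinality satisfying the same formula; so a single formula whose satisfying $\TNT$-interpretations are bounded in size (indeed, all of one fixed finite size) immediately refutes smoothness. The key structural fact I would exploit is that a $\tnt$-interpretation is rigidly controlled by its parameter $n$: once $N^\A=[0,n-1]$ is pinned down, the tree part $T^\A$ is forced to contain all of $\{0,1\}^{\le n-2}$ (the full lower layers) plus a subset $S\subseteq\{0,1\}^{n-1}$, so its cardinality lives in a narrow window. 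The idea underlying the whole construction (the ``number'' sort $N$ indexing tree layers via the functions $f_\rho$) is precisely what lets a first-order formula measure the height of the tree, and hence constrain the size.

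First I would write down a formula $\varphi$ that forces a specific finite height. For instance, I would use the functions $f_\rho$ and the predicate $<$ to assert the existence of $N$-elements $0 < 1 < \dots < k-1$ together with the statement that these are \emph{all} the elements of $N$ (expressible because $N^\A=[0,n-1]$ is a finite linear order under $<^\A$, so one can say every $N$-element equals one of $k$ named variables and that no element lies strictly above the top one). This pins $n=k$. Having fixed $n$, the domain of any $\tnt$-interpretation satisfying $\varphi$ has size $|[0,n-1]| + |\{0,1\}^{\le n-2}| + |S|$, where $|S|$ ranges between $0$ and $2^{n-1}$. Thus the possible sizes of $\tnt$-interpretations of $\varphi$ form the finite set $\{\,n + (2^{n-1}-1) + j \mid 0 \le j \le 2^{n-1}\,\}$, which is bounded above by $n + 2^n - 1$.

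Next I would transfer this bound from $\tnt$-interpretations to arbitrary $\TNT$-interpretations. Here the subtlety is that $\TNT$ is not the class $\TNT^{-}$ of $\tnt$-interpretations itself, but the theory axiomatized by $\ax(\TNT)=\ax$, the set of all $\SNT$-sentences true in every $\tnt$-interpretation; so $\TNT$ may contain models that are not literally $\tnt$-interpretations. The point I would make is that $\varphi$ can be chosen so that all its \emph{finite} $\TNT$-models coincide with $\tnt$-interpretations of the pinned height, or more safely, I would argue purely about cardinality: I would pick a witnessing $\A$ of the largest achievable size $n+2^n-1$ (taking $S=\{0,1\}^{n-1}$ full), note $\A\models\varphi$, and then argue that $\varphi\wedge\psi^{\sun}_{\ge n+2^n}$ is $\TNT$-unsatisfiable. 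This last unsatisfiability is equivalent to the $\SNT$-sentence $\neg(\varphi\wedge\psi^{\sun}_{\ge n+2^n})$'s universal closure being in $\ax$, which holds provided no $\tnt$-interpretation satisfies $\varphi$ with more than $n+2^n-1$ elements — exactly the finite bound established above. This is the same style of argument used in \Cref{lem-tnt-stab-fin}.

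The main obstacle will be the bookkeeping to guarantee that $\varphi$ really forces a fixed height $n$ across all $\tnt$-interpretations, and in particular that no $\tnt$-interpretation of a \emph{different} $n$ can sneak in a satisfying assignment of the variables of $\varphi$ giving an even larger domain. I would handle this by making $\varphi$ assert both a lower bound on the height (there are at least $k$ distinct $<$-comparable $N$-elements, with $f_\rho$ applied to them producing the distinct initial segments of $\rho$, forcing $k-1\le n-1$) and an upper bound (the top named $N$-element has no $<$-successor in $N$, forcing $n-1\le k-1$); together these give $n=k$ exactly. Once the height is rigidly fixed, the cardinality window is finite and the argument closes. The contrast with \Cref{lem:finsmoothtnt} is instructive and reassuring: \emph{finite} smoothness succeeds because one can always add one element by growing $S$ by a single leaf or by extending $N$ by one when the tree is full, but this process \emph{terminates} once $S=\{0,1\}^{n-1}$ is full and $N$ is pinned by $\varphi$, so the jump to an infinite (or merely much larger) model is blocked — which is precisely the failure of full smoothness.
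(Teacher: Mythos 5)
There is a genuine gap, and it is fatal to the whole strategy. Smoothness is defined only over \emph{quantifier-free} formulas, but the formula $\varphi$ you describe is not quantifier-free: saying that the named $N$-elements are \emph{all} of $N$, or that the top named element has no $<$-successor, requires a universal quantifier. This is not a fixable bookkeeping issue. No quantifier-free formula can have the bounded-model property you need, because $\TNT$ is finitely smooth (\Cref{lem:finsmoothtnt}): every quantifier-free formula with a finite $\TNT$-model has finite $\TNT$-models of every larger finite cardinality. Indeed this is forced by the role $\TNT$ plays in the paper --- it must be stably finite and finitely smooth yet not smooth. Your closing claim that the one-element-at-a-time extension ``terminates once $S=\{0,1\}^{n-1}$ is full and $N$ is pinned by $\varphi$'' contradicts Case I of the proof of \Cref{lem:finsmoothtnt}, which handles exactly the full-tree situation by adjoining a new top element to $N$ and re-interpreting the variables; that construction preserves all quantifier-free formulas, so a quantifier-free $\varphi$ cannot pin the height.

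The failure of smoothness for $\TNT$ is not that models of some formula are bounded above, but that the cardinality spectrum has a gap at $\aleph_0$. The paper shows that every \emph{infinite} $\TNT$-structure $\sB$ has $|\sigma_1^{\sB}| \ge 2^{\aleph_0}$: one first argues $N^{\sB}$ must be infinite (since $|N|\le k$ forces $|T|\le 2^k-1$ and every element is in $N$ or $T$), then takes the penultimate element $a$ of $N^{\sB}$ under $<^{\sB}$ and shows $f_\rho^{\sB}(a)\neq f_\tau^{\sB}(a)$ for all distinct $\rho,\tau\in 2^\omega$, using $\TNT$-valid sentences asserting that an element with at least $n$ predecessors and a successor separates $f_\rho$ from $f_\tau$ whenever $\rho(n-1)\neq\tau(n-1)$. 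Then smoothness fails already for the trivial formula $x=x$ with any finite model and target cardinality $\kappa=\aleph_0$, since no $\TNT$-model of size exactly $\aleph_0$ exists. I would redirect your effort toward an argument of this shape.
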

\begin{proof}
We show that $\TNT$ is not smooth by showing that if $\sB$ is an infinite $\TNT$-structure, then $|\sB| \ge 2^{\aleph_0}$. Let $\sB$ be an infinite $\TNT$-structure. For every $k \in \mathbb{N}$, 
    we have that 
    if $|N^{\sB}| \le k$, then $|T^{\sB}| \le 2^k - 1$. Together with the fact that $\vDash_\T \Forall{x} [N(x) \lor T(x)]$, this implies that $N^\sB$ is infinite.

    It is a theorem of $\TNT$ that $N$ has a maximal element with respect to $<$, and it is also a theorem that every non-minimal element has a predecessor. So let $a$ be the penultimate element of $N^\sB$ with respect to $<^\sB$.
    
    We claim that $f_\rho^\sB(a) \neq f_\tau^\sB(a)$ for all $\rho, \tau \in 2^\omega$ with $\rho \neq \tau$. Note that this claim implies $|\sB| \ge 2^{\aleph_0}$, which will finish the proof. Let $\rho, \tau \in 2^\omega$ with $\rho \neq \tau$. Given $n \in \mathbb{N}$, let $P_n(x)$ be an abbreviation for the formula
    \[
        \Exists{y_1} \Exists{y_2} \cdots \Exists{y_{n-1}} \Exists{y_n} [y_1 < y_2 \land \dots \land y_{n-1} < y_n \land y_n < x],
    \]
    which asserts that $x$ has at least $n$ distinct predecessors. Let $n \in \mathbb{N}$ be such that $\rho(n-1) \neq \tau(n-1)$. Then, we have
    \[
        \vDash_\T \Forall{x} [(P_n(x) \land \Exists{y} x < y) \rightarrow f_\rho(x) \neq f_\tau(x)].
    \]
    Since $a$ has infinitely many distinct predecessors, it follows that $f_\rho^\sB(a) \neq f_\tau^\sB(a)$, as desired.
\end{proof}

\thmuncountableunicorns*
\begin{proof}
    This is immediate from \Cref{lem-tnt-stab-fin,lem:finsmoothtnt,lem:tntnotsmooth}.
\end{proof}

\section{Proof of \Cref{thm-almost-weaker}}

\equivalenceofunicornone*

Casal and Rasga~\cite[Lemma~7]{CasalRasga2} proved that a strongly polite theory is stably finite. Toledo, Zohar, and Barrett~\cite[Theorem~3]{FroCoS} observed that the proof does not make use of the smoothness assumption, so that any strongly finitely witnessable theory is stably finite. In fact, the proof also doesn't use the assumption that the strong witness is computable, so we have:
\begin{lemma} \label{lem-stab-fin}
    If $\T$ has a strong pre-witness with respect to a set of sorts $S$, then $\T$ is stably finite with respect to $S$.
\end{lemma}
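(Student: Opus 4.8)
The plan is to reproduce the argument of Casal and Rasga~\cite[Lemma~7]{CasalRasga2}, as streamlined in~\cite[Theorem~3]{FroCoS}, and to observe that it never appeals to computability of the witness—only to the two defining properties of a (pre-)witness. Let $\wit$ be a strong pre-witness for $\T$ \wrt $S$, let $\varphi\in\qf{\Sigma}$ be $\T$-satisfiable, and let $\A$ be a $\T$-interpretation with $\A\vDash\varphi$. The goal is to produce a $\T$-interpretation $\B$ with $\B\vDash\varphi$ such that $|\s^{\B}|<\aleph_{0}$ and $|\s^{\B}|\leq|\s^{\A}|$ for every $\s\in S$.

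First I would pass from $\A$ to an interpretation satisfying $\wit(\varphi)$. By the first defining property, $\varphi$ and $\Exists{\overarrow{x}}\wit(\varphi)$ are $\T$-equivalent for $\overarrow{x}=\vars(\wit(\varphi))\setminus\vars(\varphi)$, so $\A\vDash\Exists{\overarrow{x}}\wit(\varphi)$. Reassigning the fresh variables $\overarrow{x}$ while keeping the underlying structure and the values of all other variables fixed yields a $\T$-interpretation $\A^{\prime}$ with $\A^{\prime}\vDash\wit(\varphi)$ and $\s^{\A^{\prime}}=\s^{\A}$ for all $\s$. Next I would invoke the strong witnessing property on the arrangement read off from $\A^{\prime}$. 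Set $V=\vars_{S}(\wit(\varphi))$ and let $E$ be the equivalence relation on $V$ given by $xEy$ iff $x^{\A^{\prime}}=y^{\A^{\prime}}$; then $\A^{\prime}\vDash\wit(\varphi)\wedge\delta_{V}^{E}$, so $\wit(\varphi)\wedge\delta_{V}^{E}$ is $\T$-satisfiable. Since $\wit$ is a strong pre-witness, there is a $\T$-interpretation $\B$ with $\B\vDash\wit(\varphi)\wedge\delta_{V}^{E}$ and $\s^{\B}=\vars_{\s}(\wit(\varphi)\wedge\delta_{V}^{E})^{\B}$ for each $\s\in S$. Because $\delta_{V}^{E}$ mentions only variables already in $V\subseteq\vars(\wit(\varphi))$, we have $\vars_{\s}(\wit(\varphi)\wedge\delta_{V}^{E})=\vars_{\s}(\wit(\varphi))$, so $\s^{\B}$ is the image of finitely many variables and hence finite, giving the first requirement.

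Finally I would check the two remaining claims. For the cardinality bound, both $\A^{\prime}$ and $\B$ satisfy $\delta_{V}^{E}$, so in each the number of distinct values taken by the sort-$\s$ variables of $V$ equals the number of $E$-classes of that sort; hence $|\s^{\B}|=|\vars_{\s}(\wit(\varphi))^{\B}|=|\{x^{\A^{\prime}}:x\in\vars_{\s}(\wit(\varphi))\}|\leq|\s^{\A^{\prime}}|=|\s^{\A}|$. For $\B\vDash\varphi$, note that $\B\vDash\wit(\varphi)$ gives $\B\vDash\Exists{\overarrow{x}}\wit(\varphi)$, which is $\T$-equivalent to $\varphi$.

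Since this is in essence an observation that a known proof goes through unchanged, there is no deep obstacle; the one point deserving care is the cardinality step. It relies on the witnessing property collapsing $\s^{\B}$ \emph{exactly} onto the $S$-sorted variables, so that satisfying the same arrangement $\delta_{V}^{E}$ forces $|\s^{\B}|$ to match the $E$-class count of $\A^{\prime}$ rather than merely bounding it loosely; getting this equality (and not just an inequality in the wrong direction) is where the bookkeeping must be done carefully. As no step uses that $\wit$ is computable, the argument applies verbatim to a strong pre-witness.
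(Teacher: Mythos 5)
Your proposal is correct and matches the paper's approach exactly: the paper proves this lemma simply by citing the Casal--Rasga argument (as streamlined in \cite{FroCoS}) and observing that it nowhere uses computability of the witness, which is precisely the argument you reconstruct in full (read the arrangement $\delta_{V}^{E}$ off a model of $\wit(\varphi)$, apply the strong witnessing property, and note that $\s^{\B}$ equals the image of the finitely many sort-$\s$ witness variables, with the $E$-class count giving $|\s^{\B}|\leq|\s^{\A}|$). Your flagged point of care---that the domain of $\B$ collapses \emph{exactly} onto the $S$-sorted variables, so the shared arrangement pins $|\s^{\B}|$ to the class count rather than merely bounding it---is indeed where the argument's content lies, and your handling of it is sound.
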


Przybocki, Toledo, Zohar, and Barrett~\cite[Lemma~3]{nounicorns} proved that a stably infinite and strongly finitely witnessable theory is finitely smooth. That proof also doesn't use the assumption that the strong witness is computable, so we have:
\begin{lemma} \label{lem-fin-smooth}
    If $\T$ is stably infinite and has a strong pre-witness, both with respect to $S$, then $\T$ is finitely smooth with respect to $S$.
\end{lemma}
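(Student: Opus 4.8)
The plan is to reduce \Cref{lem-fin-smooth} to the already-established \cite[Lemma~3]{nounicorns}, which shows that a stably infinite, strongly finitely witnessable theory is finitely smooth. Rather than merely citing it, I would rehearse its proof so as to make transparent that the witness enters only through its defining semantic properties and never through its computability; the identical argument then yields the conclusion with a strong pre-witness replacing a strong witness. Throughout, let $\wit$ be a strong pre-witness for $\T$ with respect to $S$.

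First I would fix the data of the finite-smoothness requirement: a formula $\varphi\in\qf{\Sigma}$ satisfied by a $\T$-interpretation $\A$, together with an assignment $\kappa$ sending each $\s\in S$ to a finite cardinal with $\kappa(\s)\geq|\s^{\A}|$. Since $\kappa(\s)$ is finite, each $|\s^{\A}|$ is finite as well. I would extend $\A$ to a $\T$-interpretation $\A^{+}\vDash\wit(\varphi)$ by assigning the fresh variables $\overarrow{x}=\vars(\wit(\varphi))\setminus\vars(\varphi)$; this is legitimate because $\varphi$ and $\Exists{\overarrow{x}}\wit(\varphi)$ are $\T$-equivalent. Reading off the arrangement $\delta_{0}$ that $\A^{+}$ induces on $\vars_{S}(\wit(\varphi))$, I note that its sort-$\s$ block has at most $|\s^{\A}|\leq\kappa(\s)$ classes, because extending the variable assignment does not enlarge the domains and so the $S$-sorted variables take their values inside $\s^{\A}$.

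Next I would pad up to the exact target sizes. For each $\s\in S$ I introduce fresh variables and enlarge $\delta_{0}$ to an arrangement $\delta_{V}$ whose sort-$\s$ block has exactly $\kappa(\s)$ classes, keeping the old classes and the new variables mutually distinct. Applying stable infiniteness to the $\T$-satisfiable formula $\wit(\varphi)\wedge\delta_{0}$ yields a $\T$-interpretation with infinite $S$-sorted domains that still respects $\delta_{0}$; since every $\kappa(\s)$ is finite, this interpretation has room to realize the new distinct variables, showing that $\wit(\varphi)\wedge\delta_{V}$ is $\T$-satisfiable. Now the strong pre-witness property applies directly: there is a $\T$-interpretation $\B\vDash\wit(\varphi)\wedge\delta_{V}$ with $\s^{\B}=\vars_{\s}(\wit(\varphi)\wedge\delta_{V})^{\B}$ for each $\s\in S$. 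Because $\B\vDash\delta_{V}$, the domain $\s^{\B}$ consists of exactly $\kappa(\s)$ elements; and because $\B\vDash\wit(\varphi)$ it satisfies $\Exists{\overarrow{x}}\wit(\varphi)$, hence $\varphi$. This $\B$ is the interpretation demanded by finite smoothness.

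The single delicate point—and the reason the lemma is not quite immediate—is the $\T$-satisfiability of the padded formula $\wit(\varphi)\wedge\delta_{V}$, which is exactly where stable infiniteness is indispensable: absent it, there is no guarantee that the arrangement $\delta_{0}$ and $\wit(\varphi)$ can be jointly realized in a model large enough to host $\kappa(\s)$ distinct elements. By contrast, checking that no step depends on \emph{how} $\wit$ is computed is routine: only the $\T$-equivalence of $\varphi$ with $\Exists{\overarrow{x}}\wit(\varphi)$ and the existence clause of the strong-witness definition are ever invoked, and both hold verbatim for a strong pre-witness. Hence the argument of \cite[Lemma~3]{nounicorns} transfers without modification, establishing \Cref{lem-fin-smooth}.
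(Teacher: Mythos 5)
Your proposal is correct and takes essentially the same approach as the paper: the paper proves this lemma precisely by observing that the proof of \cite[Lemma~3]{nounicorns} (stable infiniteness plus strong finite witnessability implies finite smoothness) never uses the computability of the witness, only its semantic clauses, so it transfers verbatim to a strong pre-witness. Your explicit rehearsal of that argument---inducing an arrangement from a model of $\wit(\varphi)$, padding it to exactly $\kappa(\s)$ classes per sort, using stable infiniteness to show the padded formula $\wit(\varphi)\wedge\delta_{V}$ is $\T$-satisfiable, and then invoking the existence clause of the (pre-)witness to pin the domain sizes---is a faithful reconstruction of the cited proof and correctly identifies that computability plays no role.
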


Thus, a stably infinite theory with a strong pre-witness is stably finite and finitely smooth.  In fact, we have the following converse:
\begin{lemma} \label{lem-strong-witness}
    If $\T$ is stably finite and finitely smooth, both with respect to a finite set of sorts $S$, then $\T$ has a strong pre-witness with respect to $S$.
\end{lemma}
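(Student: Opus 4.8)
The plan is to construct a strong pre-witness $\wit$ by brute force, exploiting the fact that $\wit$ need not be computable: for each input we may consult, non-effectively, both its $\T$-satisfiability and the sort-cardinalities of its minimal $\T$-models. As usual it suffices to define $\wit$ on conjunctions of flat literals and extend it in the standard way (cf.\ the proof of \Cref{lem:TiSFW}). On a $\T$-unsatisfiable $\varphi$ I set $\wit(\varphi)=\varphi$. So fix a $\T$-satisfiable conjunction of flat literals $\varphi$ and let $V_0=\vars(\varphi)$. For every arrangement $\delta$ of $V_0$ with $\varphi\wedge\delta$ $\T$-satisfiable, I record the finitely many coordinatewise-minimal tuples of sort-sizes realized by finite $\T$-models of $\varphi\wedge\delta$; these exist and are finite in number by stable finiteness together with \Cref{dicksonlemma}. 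For each such minimal tuple $(k_\s)_{\s\in S}$ I introduce fresh variables and the distinctness formula $\bigwedge_{\s\in S}\neq^{\s}(\ldots)$ forcing at least $k_\s$ elements of sort $\s$, and I let $\wit(\varphi)$ be the disjunction, over all satisfiable $\delta$ and all their minimal tuples, of $\varphi\wedge\delta\wedge(\text{this distinctness formula})$.

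For the ordinary witness conditions, equivalence of $\varphi$ and $\Exists{\overarrow{x}}\wit(\varphi)$ holds because any $\T$-model of $\varphi$ induces one of the recorded satisfiable arrangements $\delta$ and, by stable finiteness, dominates coordinatewise some finite model of $\varphi\wedge\delta$, hence one of the recorded minimal tuples; so it satisfies the corresponding disjunct, while conversely each disjunct implies $\varphi$. For the strong property I may assume $V\supseteq\vars(\wit(\varphi))$ (as in the footnote of \Cref{lem:TiSFW}). Suppose $\wit(\varphi)\wedge\delta_V$ is $\T$-satisfiable; then some disjunct, say the one for an arrangement $\delta^{*}$ and tuple $(k_\s)$, is satisfiable alongside $\delta_V$, and the forced distinct fresh variables guarantee that the number $n_\s$ of $\s$-classes of $\delta_V$ is at least $k_\s$. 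Taking a minimal $\T$-model of $\varphi\wedge\delta^{*}$ and applying \emph{finite smoothness}, I obtain a $\T$-model $\A$ of $\varphi\wedge\delta^{*}$ with exactly $n_\s$ elements in each sort $\s\in S$. Since $\varphi$ is flat, every function value of $\A$ already lies in $\A$, so it remains only to place the variables of $V$ onto the elements of $\A$: the $V_0$-variables are positioned by $\A$, and the remaining classes of $\delta_V$ are sent bijectively onto the as-yet-unused elements, of which there are exactly as many. This assignment respects $\delta_V$ and makes every element the value of a variable, yielding the required spanned model (and, as the special case with $\delta_V$ read off a satisfying interpretation, the plain witness condition as well).

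The step I expect to be the main obstacle, and the one driving the whole design, is that \emph{no uniform bound on the number of forced elements can work}. An arrangement of $\vars(\varphi)$ can force $\T$-models far larger than the minimal models of $\varphi$ itself — for instance when $\T$ decrees that the mere presence of a single non-loop edge entails a large domain — while $\varphi$ may simultaneously admit very small models, so committing in advance to the larger count would destroy $\T$-equivalence. This is precisely why $\wit$ must branch on the arrangement of $\vars(\varphi)$ and force, in each branch, exactly the minimal cardinalities appropriate to that branch. Finite smoothness is then the essential tool for hitting the exact class-count $n_\s$ imposed by the adversarial $\delta_V$: a size that we could neither shrink down to (stable finiteness yields only \emph{some} finite model, typically too large) nor build up to from a smaller model of $\wit(\varphi)\wedge\delta_V$, since none need exist. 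The remaining routine points are the standard reduction to flat conjunctions and the bookkeeping, via \Cref{dicksonlemma}, ensuring that the disjunction defining $\wit(\varphi)$ is a finite — hence genuine — quantifier-free formula even over an uncountable signature, which is legitimate here only because the pre-witness is exempt from the computability requirement.
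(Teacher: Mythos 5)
Your construction is correct and is essentially the proof the paper gives: the paper's argument for this lemma is precisely to invoke Casal and Rasga's witness construction --- a disjunction over arrangements of $\vars(\varphi)$ and over the finitely many minimal cardinality tuples of each $\varphi\wedge\delta$, padded with fresh pairwise-distinct variables, with smoothness used only in its finite form to reach the exact class counts imposed by $\delta_V$ --- while noting that computability of the minimal model function is used only to make the witness computable, which is exactly the non-effective reading you carry out in detail. One small nit: \Cref{dicksonlemma} as stated concerns \emph{maximal} elements of subsets of $\N^{n}$, whereas you need the classical Dickson's lemma on \emph{minimal} elements of subsets of $\mathbb{N}^{n}$ (true, and mentioned in the paper's proof sketch of \Cref{empsigimpDEC}, but not literally the cited statement).
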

The proof is the same as Casal and Rasga's proof that a shiny theory is strongly finitely witnessable~\cite[Proposition~2]{CasalRasga2}. One just needs to check that the assumption that the minimal model function is computable is only used to show that the strong witness is computable; also, their application of smoothness only requires finite smoothness.

We actually prove the following extension of \Cref{thm-almost-weaker}:

\begin{theorem} \label{thm-almost}
    Let $\T$ be a theory and $S$ be a finite set of sorts. Then, the following are equivalent:
    \begin{enumerate}
        \item $\T$ is stably finite and finitely smooth with respect to $S$; and
        \item $\T$ is stably infinite and has a strong pre-witness with respect to $S$.
    \end{enumerate}
    Furthermore, the following are equivalent:
    \begin{enumerate}
        \setcounter{enumi}{2}
        \item $\T$ is stably finite and smooth with respect to $S$; and
        \item $\T$ is smooth and has a strong pre-witness with respect to $S$.
    \end{enumerate}
    If $\T$ is over a countable signature, then all four conditions are equivalent.
\end{theorem}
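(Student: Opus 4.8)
The plan is to derive both biconditionals almost immediately from the three preparatory lemmas (\Cref{lem-stab-fin,lem-fin-smooth,lem-strong-witness}), supplementing them with two short observations: that smoothness trivially implies finite smoothness, and that stable finiteness together with finite smoothness forces stable infiniteness via a compactness argument.

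For the equivalence of $(1)$ and $(2)$, the direction $(2)\Rightarrow(1)$ is purely a matter of citing lemmas: a strong pre-witness yields stable finiteness by \Cref{lem-stab-fin}, and stable infiniteness together with a strong pre-witness yields finite smoothness by \Cref{lem-fin-smooth}. For $(1)\Rightarrow(2)$, \Cref{lem-strong-witness} supplies the strong pre-witness directly from stable finiteness and finite smoothness (this is one place where finiteness of $S$ is used). The remaining obligation, and the only genuinely new content of the proof, is to show that $(1)$ entails stable infiniteness.

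To prove stable infiniteness from stable finiteness and finite smoothness, I would fix a $\T$-satisfiable quantifier-free $\varphi$ and build a $\T$-model of $\varphi$ in which every sort of $S$ is infinite. First, stable finiteness provides a finite $\T$-model $\A_0 \vDash \varphi$. For each $N$, applying finite smoothness with $\kappa(\s) = \max(N, |\s^{\A_0}|)$ for $\s \in S$ yields a finite $\T$-model of $\varphi$ in which every $S$-sort has size at least $N$; here it is essential that $S$ be finite, so that all these cardinalities can be enlarged simultaneously while $\kappa$ stays finite. I would then replace the free variables of $\varphi$ by fresh constants to obtain a sentence $\varphi^{*}$, and consider $\Gamma = \ax(\T) \cup \{\varphi^{*}\} \cup \{\psi^{\s}_{\geq n} \mid \s \in S,\ n \in \No\}$. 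Every finite subset of $\Gamma$ mentions only boundedly many cardinality constraints, so it is satisfied by one of the finite models just constructed; by compactness (\Cref{compactness}, which holds for arbitrary signatures) $\Gamma$ itself is satisfiable, and any model of $\Gamma$ witnesses that $\T$ is stably infinite. The main subtlety to get right is the bookkeeping of free variables as constants together with the simultaneous enlargement of all sorts in $S$, and this compactness step is where I expect the bulk of the work to lie.

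The equivalence of $(3)$ and $(4)$ is even shorter: smoothness appears in both conditions, so $(4)\Rightarrow(3)$ needs only that a strong pre-witness implies stable finiteness (\Cref{lem-stab-fin}), while $(3)\Rightarrow(4)$ follows since smoothness implies finite smoothness, whence \Cref{lem-strong-witness} supplies the strong pre-witness. Finally, for the claim that all four conditions coincide over countable signatures, I would connect the two blocks through $(1)\Leftrightarrow(3)$: the implication $(3)\Rightarrow(1)$ holds over any signature because smoothness implies finite smoothness, whereas $(1)\Rightarrow(3)$ is precisely \Cref{thm-smoothness-from-finite-smoothness}, which upgrades finite smoothness to smoothness under the countability hypothesis. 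Combining this with the two biconditionals already established closes the loop.
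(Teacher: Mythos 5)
Your proposal is correct and follows essentially the same route as the paper: both directions of each biconditional are obtained by citing Lemmas~\ref{lem-stab-fin}, \ref{lem-fin-smooth}, and \ref{lem-strong-witness}, stable infiniteness is extracted from stable finiteness plus finite smoothness via compactness, and the countable-signature case reduces to Theorem~\ref{thm-smoothness-from-finite-smoothness}. The only difference is that you spell out the compactness step (starting from a finite model supplied by stable finiteness) in more detail than the paper does, which is a harmless — indeed slightly more careful — elaboration; the parenthetical claim that finiteness of $S$ is needed for the simultaneous enlargement is not quite right (it is needed only for Lemma~\ref{lem-strong-witness}), but this does not affect the argument.
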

\begin{proof}
    First, we show that condition 1 implies condition 2. If $\T$ is stably finite and finitely smooth with respect to $S$, then $\T$ has a strong pre-witness with respect to $S$ by \Cref{lem-strong-witness}. By compactness (\Cref{compactness}), finite smoothness implies stable infiniteness. The same reasoning shows that condition 3 implies condition 4.

    Second, we show that condition 2 implies condition 1. If $\T$ is stably infinite and has a strong pre-witness with respect to $S$, then $\T$ is finitely smooth with respect to $S$ by \Cref{lem-fin-smooth}. By \Cref{lem-stab-fin}, $\T$ is stably finite with respect to $S$. The same reasoning shows that condition 4 implies condition 3.

    It is trivial that condition 3/4 implies condition 1/2. The converse is exactly \Cref{thm-smoothness-from-finite-smoothness}, which holds when $\T$ is over a countable signature.
\end{proof}
The second equivalence can be seen as another generalization of Casal and Rasga's equivalence between shininess and strong politeness for decidable theories. We can think of a theory satisfying condition 3 as being ``almost'' shiny, meaning that it is shiny except for the requirement that its minimal model function be computable. We can think of a theory satisfying condition 4 as being ``almost'' strongly polite, meaning that it is strongly polite except for the requirement that its strong witness be computable. Then, the second equivalence says that ``almost'' shininess is equivalent to ``almost'' strong politeness. Unlike Casal and Rasga's equivalence, this result does not assume that the theories are decidable, which is not surprising, since we have stripped the statement of any reference to computability.

There is one last loose end to tie up. We assumed in \Cref{thm-almost} that the set of sorts is finite, an assumption necessary to invoke \Cref{lem-strong-witness}. This assumption is necessary. Indeed, we have the following:
\begin{proposition} \label{prop-fin-sorts}
    If $\T$ is consistent and has a strong pre-witness with respect to a set of sorts $S$, then $S$ is finite.
\end{proposition}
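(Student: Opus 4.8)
The plan is to probe the pre-witness on a single, trivially satisfiable formula and then exploit the fact that its image, being an element of $\qf{\Sigma}$, can mention only finitely many variables. Fix any variable $x$ and set $\varphi_{0} := (x = x)$, which is $\T$-valid; since $\T$ is consistent it has at least one interpretation, so $\varphi_{0}$ is $\T$-satisfiable. Writing $\wit$ for the strong pre-witness with respect to $S$, the defining $\T$-equivalence gives that $\Exists{\overarrow{y}}\wit(\varphi_{0})$, where $\overarrow{y} = \vars(\wit(\varphi_{0}))\setminus\vars(\varphi_{0})$, is $\T$-equivalent to $\varphi_{0}$ and hence $\T$-satisfiable. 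Instantiating the existentially quantified variables (reassigning the values of $\overarrow{y}$ in a satisfying interpretation) shows that $\wit(\varphi_{0})$ itself is $\T$-satisfiable.

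Next I would invoke the (basic) witness clause of the definition: since $\wit(\varphi_{0})$ is $\T$-satisfiable, there is a $\T$-interpretation $\A$ with $\A \vDash \varphi_{0}$ and $\s^{\A} = \vars_{\s}(\wit(\varphi_{0}))^{\A}$ for every $\s \in S$. The crux of the argument is the following observation. Because $\A$ is a $\Sigma$-structure, each domain $\s^{\A}$ is non-empty by definition; therefore $\vars_{\s}(\wit(\varphi_{0}))^{\A} = \s^{\A}$ is non-empty, which forces $\vars_{\s}(\wit(\varphi_{0})) \neq \emptyset$ for every $\s \in S$. In other words, every sort in $S$ must already occur as the sort of some variable in the single formula $\wit(\varphi_{0})$.

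Finally I would close the argument by the finiteness of formulas: $\wit(\varphi_{0})$ is a quantifier-free $\Sigma$-formula, so it contains only finitely many variables, and hence only finitely many distinct sorts occur among its variables. Since each sort of $S$ occurs among them, $S$ is finite.

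I do not anticipate a serious obstacle here; the entire content lies in choosing the right probe formula and in the non-emptiness step, where one must remember both that domains of sorts are non-empty in any $\Sigma$-structure and that the witness condition equates each such domain with the finite set of values of variables of that sort. It is worth noting that only the basic witness clause is used, not the strong clause involving arrangements, so the argument does not exploit the full strength of a strong pre-witness; consistency is needed precisely to guarantee the existence of the interpretation $\A$ that makes the non-emptiness step go through.
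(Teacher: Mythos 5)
Your proof is correct and follows essentially the same route as the paper's: probe the pre-witness on a trivially satisfiable formula ($x=x$ in your case, $\top$ in the paper's), use the witness clause to equate each $\s^{\A}$ with $\vars_{\s}(\wit(\cdot))^{\A}$, and conclude from non-emptiness of domains and finiteness of the variable set that $S$ is finite. The choice of probe formula is immaterial, and your observation that only the basic witness clause is needed matches the paper's argument.
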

\begin{proof}
    Since $\wit(\top)$ is $\T$-satisfiable, there is a $\T$-interpretation $\A$ satisfying $\wit(\top)$ such that $\s^{\A} = \vars_\s(\wit(\top))^{\A}$ for every $\s \in S$. Since $\s^{\A}$ is non-empty, we have $|S| \le |\vars_S(\wit(\top))| < \aleph_0$.
\end{proof}
On the other hand, there are consistent theories that are stably finite and (finitely) smooth with respect to an infinite set of sorts; for example, the empty theory over infinitely many sorts.

\end{document}